\providecommand{\U}[1]{\protect\rule{.1in}{.1in}}
\newtheorem{theorem}{Theorem}
\newtheorem{lemma}[theorem]{Lemma}
\newtheorem{proposition}[theorem]{Proposition}
\begin{document}

\title[Segal--Bargmann transform for compact quotients]{The Segal--Bargmann transform for compact quotients of
symmetric spaces of the complex type}
\author{Brian C. Hall}
\address{University of Notre Dame\\Department of Mathematics\\Notre Dame, IN 46556-4618 USA}
\email{bhall@nd.edu}
\thanks{Supported in part by NSF grant DMS-0555862}
\author{Jeffrey J. Mitchell}
\address{Robert Morris University\\Department of Mathematics\\6001 University Boulevard\\Moon Township, PA 15108 USA}
\email{mitchellj@rmu.edu}

\begin{abstract}
Let $G/K$ be a Riemannian symmetric space of the complex type, meaning that
$G$ is complex semisimple and $K$ is a compact real form. Now let $\Gamma$ be
a discrete subgroup of $G$ that acts freely and cocompactly on $G/K.$ We
consider the Segal--Bargmann transform, defined in terms of the heat equation,
on the compact quotient $\Gamma\backslash G/K.$ We obtain isometry and
inversion formulas precisely parallel to the results we obtained previously
for globally symmetric spaces of the complex type. Our results are as parallel
as possible to the results one has in the dual compact case. Since there is no
known Gutzmer formula in this setting, our proofs make use of double coset
integrals and a holomorphic change of variable.

\end{abstract}

\maketitle

MSC 2000: Primary: 22E30; Secondary: 81S30, 35K05

\section{Introduction}

\subsection{Segal--Bargmann transforms}

The Segal--Bargmann transform, in the form that we are considering in this
paper, consists of applying the heat operator to a function on a certain
Riemannian manifold $\mathcal{M}$ and then analytically continuing the result
to an appropriate complexification of $\mathcal{M}.$ An isometry formula shows
that the $L^{2}$ norm of the original function is equal to an appropriate norm
on the Segal--Bargmann transform, and an inversion formula shows how to
recover the original function from its Segal--Bargmann transform. So far, this
program has been carried out for Euclidean and compact Riemannian symmetric
spaces \cite{H1,H2,St1} and more recently for noncompact Riemannian symmetric
spaces \cite{KOS,OS2,HM3,HM4}, the Heisenberg group \cite{KTX}, and
nilmanifolds \cite{KTX2}.

The original motivation for this work came from quantum mechanics, in the work
of Segal \cite{Se1,Se2,Se3} and Bargmann \cite{Ba1}. The Segal--Bargmann
transform can be viewed as a sort of \textquotedblleft phase space wave
function\textquotedblright\ associated to the original \textquotedblleft
configuration space wave function.\textquotedblright\ Introduction of the
phase space wave function allows for several important new constructions,
including coherent states, the Berezin--Toeplitz quantization scheme, and the
Berezin transform.

On the other hand, one can consider the Segal--Bargmann transform as a
geometric study of the heat operator. From this point of view, the question
would be to try to characterize the range of the heat operator (for some fixed
time $t>0$). See \cite{range}. Since applying the heat operator always gives a
real-analytic function, it is natural to try to characterize functions in the
range in terms of appropriate conditions on their analytic continuations (the
isometry formula). Once the range of the heat operator is characterized, the
inversion formula is then a formula for computing the \textit{backward} heat
equation as an integral involving the analytic continuation of a function in
the range.

\subsection{Compact Lie groups}

We now review briefly the Segal--Bargmann transform for a compact Lie group
with a bi-invariant metric, which is a very special case of a compact
symmetric space. The main results of this paper are for compact quotients of
symmetric spaces of the \textquotedblleft complex type.\textquotedblright%
\ Noncompact symmetric spaces of the complex type are simply the duals (in the
usual duality for symmetric spaces) of compact Lie groups. The formulas for
the complex case, and compact quotients thereof, are very similar to those for
the compact group case, except that we will have to deal with
\textit{singularities}, which do not arise in the compact case.

Let $K$ be a compact Lie group and $K_{\mathbb{C}}$ its complexification. Let
$\Delta$ be the Laplacian on $K$ with respect to a bi-invariant metric, taken
to be a \textit{negative} operator, and let $e^{t\Delta/2}$ be the associated
(forward) heat operator. We fix $t>0$ and define the \textit{Segal--Bargmann
transform} for $K$ as the map taking $f\in L^{2}(K)$ to the holomorphic
extension to $K_{\mathbb{C}}$ of $e^{t\Delta/2}(f).$

For each $x\in K,$ the geometric exponential map $\exp_{x}:T_{x}(K)\rightarrow
K$ admits an extension to a holomorphic map of the complexified tangent space
$T_{x}(K)_{\mathbb{C}}$ into $K_{\mathbb{C}}.$ It can be shown (see \cite[Sec.
2]{St1} and also \cite[Sec. 8]{LGS}) that every point $z\in K_{\mathbb{C}}$
can be expressed uniquely as
\[
z=\exp_{x}iY,\quad x\in K,~Y\in T_{x}(K),
\]
where $\exp_{x}iY$ is defined by the just-described holomorphic extension. Let
$j_{x}$ be the Jacobian of $\exp_{x}$ and let%
\[
j_{x}^{\mathrm{nc}}(Y)=j_{x}(iY).
\]
The function $j_{x}^{\mathrm{nc}}$ may be thought of as the Jacobian of the
exponential mapping for the noncompact symmetric space dual to $K.$ (The
superscript \textquotedblleft$\mathrm{nc}$\textquotedblright\ stands for
\textquotedblleft noncompact.\textquotedblright) Let $\rho$ denote half the
sum (with multiplicities) of the positive restricted roots for $U/K.$ Then the
main results \cite{H1,H2} concerning the Segal--Bargmann transform for $K$ may
be described as follows.

\begin{theorem}
\label{group.thm}Let $K$ be a compact Lie group with a bi-invariant metric.
Then we have the following results.

The \textbf{isometry formula}. Fix $f$ in $L^{2}(K)$ and $t>0.$ Then
$F:=e^{t\Delta/2}f$ has a holomorphic extension to $K_{\mathbb{C}}$ satisfying%
\begin{align}
&  \int_{K}\left\vert f(x)\right\vert ^{2}~dx\nonumber\\
&  =e^{-\left\vert \rho\right\vert ^{2}t}\int_{x\in K}\int_{Y\in T_{x}%
(K)}\left\vert F(\exp_{x}iY)\right\vert ^{2}j_{x}^{\mathrm{nc}}(2Y)^{1/2}%
\frac{e^{-\left\vert Y\right\vert ^{2}/t}}{(\pi t)^{d/2}}~dY~dx.
\label{group.isom}%
\end{align}

The \textbf{surjectivity theorem}. Given any holomorphic function $F$ on
$K_{\mathbb{C}}$ for which the right-hand side of (\ref{group.isom}) is
finite, there exists a unique $f\in L^{2}(K)$ with $\left.  F\right\vert
_{K}=e^{t\Delta/2}f.$

The \textbf{inversion formula}. If $f\in L^{2}(K)$ is sufficiently regular and
$F:=e^{t\Delta/2}f,$ then%
\begin{equation}
f(x)=e^{-\left\vert \rho\right\vert ^{2}t/2}\int_{T_{x}(K)}F(\exp_{x}%
iY)j_{x}^{\mathrm{nc}}(Y)^{1/2}\frac{e^{-\left\vert Y\right\vert ^{2}/2t}%
}{(2\pi t)^{d/2}}~dY. \label{group.inv}%
\end{equation}

\end{theorem}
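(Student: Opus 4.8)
The plan is to prove all three statements by Peter--Weyl analysis, reducing each to one and the same Gaussian integral on the Lie algebra. Since the metric is bi-invariant, geodesics through $x\in K$ are left translates of one-parameter subgroups, so under the left-translation identification $T_x(K)\cong\mathfrak{k}$ one has $\exp_x(iY)=xe^{iY}$; the decomposition $z=\exp_x iY$ is the polar decomposition $K_{\mathbb C}=K\exp(i\mathfrak k)$, and $j_x^{\mathrm{nc}}$ is the left translate of a fixed $\mathrm{Ad}(K)$-invariant function $j^{\mathrm{nc}}$ on $\mathfrak k$. By Peter--Weyl, $L^2(K)$ is the orthogonal sum of the matrix-coefficient blocks of the $\pi\in\widehat K$; $\Delta$ acts on the $\pi$-block by a scalar $-c_\pi$, and every matrix coefficient $\pi_{ij}$ extends holomorphically to $K_{\mathbb C}$. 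Hence $F=e^{t\Delta/2}f=\nu_t*f$ extends holomorphically to $K_{\mathbb C}$ (from the analyticity and Gaussian decay of the heat kernel there), and it suffices to verify \eqref{group.isom} and \eqref{group.inv} for $f=\pi_{ij}$, extending to general $f$ afterward by density and continuity, the regularity hypothesis in the inversion formula being exactly what is needed to interchange $\sum_{\widehat K}$ with the Gaussian integral.

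For the \textbf{isometry formula}, take $f=\pi_{ij}$, so $F(\exp_x iY)=e^{-tc_\pi/2}[\pi(x)\pi(e^{iY})]_{ij}$. Integrating $|F(\exp_x iY)|^2$ over $x\in K$ and applying Schur orthogonality collapses the double integral to a fiber integral of the matrix
\[
N:=\int_{\mathfrak k}\pi(e^{2iY})\,j^{\mathrm{nc}}(2Y)^{1/2}\,\frac{e^{-|Y|^2/t}}{(\pi t)^{d/2}}\,dY,
\]
where one uses that $\pi(e^{iY})=\exp(i\,d\pi(Y))$ is positive Hermitian, so $\pi(e^{iY})^{*}\pi(e^{iY})=\pi(e^{2iY})$. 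Because $j^{\mathrm{nc}}$, $|\cdot|^{2}$, and Lebesgue measure are all $\mathrm{Ad}(K)$-invariant, $N$ commutes with $\pi(K)$; by Schur's lemma $N$ is a scalar (which also forces the vanishing of all cross terms in the associated sesquilinear form), and \eqref{group.isom} for matrix coefficients reduces to computing $\operatorname{tr}N$, i.e.\ to the scalar identity
\begin{equation}
\int_{\mathfrak k}\chi_\pi(e^{iY})\,j^{\mathrm{nc}}(Y)^{1/2}\,\frac{e^{-|Y|^{2}/2s}}{(2\pi s)^{d/2}}\,dY=\dim\pi\cdot e^{|\rho|^{2}s/2}\,e^{sc_\pi/2}\tag{$\star$}
\end{equation}
(the isometry is $(\star)$ with $s=2t$ after the substitution $Y\mapsto 2Y$). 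Granting $(\star)$, the \textbf{surjectivity theorem} is then formal: the isometry realizes $f\mapsto F$ as an isometric embedding of $L^{2}(K)$ with closed image inside the Hilbert space $\mathcal H$ determined by the right-hand norm of \eqref{group.isom}; that image contains the holomorphic extension of every matrix coefficient; and such extensions are total in $\mathcal H$ (a holomorphic function of finite $\mathcal H$-norm orthogonal to all of them must vanish, by density of polynomials in the relevant weighted holomorphic $L^{2}$ space), so the embedding is onto.

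The heart of the matter, and the step I expect to be the main obstacle, is the proof of $(\star)$. I would evaluate the left side with the Weyl integration formula for the $\mathrm{Ad}(K)$-action on $\mathfrak k$, reducing it to an integral over a maximal torus $\mathfrak t$ against the density $\prod_{\alpha>0}\langle\alpha,H\rangle^{2}$; on $\mathfrak t$ the factor $j^{\mathrm{nc}}(H)^{1/2}$ is an explicit product of $\sinh\langle\alpha,H\rangle/\langle\alpha,H\rangle$, and the Weyl character formula writes $\chi_\pi(e^{iH})$ as the antisymmetric sum $\sum_w\det(w)e^{\langle w(\lambda_\pi+\rho),H\rangle}$ divided by a product of $\sinh$'s. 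This is precisely where the hypothesis of \emph{complex type} is essential: all restricted-root multiplicities equal $2$, so $j^{\mathrm{nc},1/2}$ is an honest $\sinh$-product that meshes with the Weyl denominator, the hyperbolic factors linearize into finitely many exponentials $e^{\langle\mu,H\rangle}$, and completing the square in the Gaussian, $\int e^{\langle\mu,H\rangle}e^{-|H|^{2}/2s}\,dH\propto e^{s|\mu|^{2}/2}$, produces the exponent $e^{s|\lambda_\pi+\rho|^{2}/2}$, which by the identity $c_\pi=|\lambda_\pi+\rho|^{2}-|\rho|^{2}$ equals $e^{|\rho|^{2}s/2}e^{sc_\pi/2}$; the surviving polynomial prefactor then contributes the factor $\dim\pi$ via the Weyl dimension formula. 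Outside the complex type this elementary evaluation breaks down, which is why the present approach is special to that case; all remaining constants are pinned down by checking $(\star)$ for the trivial representation. Equivalently, $(\star)$ is just the statement that the right-hand measure of \eqref{group.isom} is the (explicitly Gaussian) heat kernel of the dual noncompact symmetric space $K_{\mathbb C}/K$.

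Finally, the \textbf{inversion formula}: for $F=e^{-tc_\pi/2}\pi_{ij}$ the right side of \eqref{group.inv} equals $e^{-|\rho|^{2}t/2}e^{-tc_\pi/2}\bigl[\pi(x)\,N'\bigr]_{ij}$ with $N'=\int_{\mathfrak k}\pi(e^{iY})\,j^{\mathrm{nc}}(Y)^{1/2}\,\tfrac{e^{-|Y|^{2}/2t}}{(2\pi t)^{d/2}}\,dY$; there is no averaging over $x$ this time, but $N'$ is again $\mathrm{Ad}(K)$-invariant, hence scalar, and its value is $(\dim\pi)^{-1}$ times the left side of $(\star)$ with $s=t$, namely $e^{|\rho|^{2}t/2}e^{tc_\pi/2}I$. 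This exactly cancels the prefactor and returns $\pi(x)_{ij}=f(x)$. One then extends to sufficiently regular $f$ by expanding in matrix coefficients and interchanging the sum over $\widehat K$ with the integral, which is legitimate once $\widehat f$ decays fast enough to dominate the $\pi$-dependent exponential growth of $\|\pi(e^{iY})\|$ on the support of the Gaussian — precisely the role of the regularity assumption.
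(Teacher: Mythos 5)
The paper does not prove Theorem \ref{group.thm}; it recalls it from \cite{H1,H2,St1}. Your argument is essentially the original representation-theoretic proof of \cite{H1} (isometry) and \cite{H2} (inversion): the Peter--Weyl decomposition reduces everything to matrix coefficients, Schur orthogonality and Schur's lemma collapse the double integral to one $\mathrm{Ad}(K)$-invariant operator integral over $\mathfrak k$, and the whole theorem reduces to the Gaussian character identity $(\star)$. The observation that $\pi(e^{iY})$ is positive Hermitian so that $\pi(e^{iY})^{*}\pi(e^{iY})=\pi(e^{2iY})$ is exactly what makes the Schur collapse produce the same scalar integral in the isometry and inversion formulas, with the only difference being the substitution $s=2t$ versus $s=t$. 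This route is \emph{not} the strategy used in the body of this paper for the compact-quotient analogue: there, the isometry is not attacked directly but is reduced to the inversion formula via a double coset integral and a holomorphic change of variable (Theorem \ref{holo_change.thm}); the inversion itself is proved by an intertwining formula plus the elementary Lemma \ref{euclideanint.lem}; and the Peter--Weyl block structure is replaced by an orthonormal basis of Laplace eigenfunctions, since Schur orthogonality is unavailable on $\Gamma\backslash G/K$. Your approach buys explicitness and a concrete target identity $(\star)$; the paper's approach buys applicability where there is no convenient block decomposition.

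Two places need more care. First, you sketch but do not verify $(\star)$; the sketch is on the right track, but the computation requires honest bookkeeping --- the restricted roots of $(K\times K)/\Delta K$ are $\beta/2$ for $\beta$ a root of $\mathfrak k$, each with multiplicity $2$ (so $\rho=\rho_{\mathfrak k}$); after the $\sinh$ cancellation one has a finite alternating sum of Gaussian integrals with polynomial weight $\prod_{\beta>0}\beta(H)$, and the $W$-antisymmetrization kills every term except the top-degree $\prod_{\beta>0}\langle\beta,\lambda+\rho\rangle$, which the Weyl dimension formula converts into $\dim\pi$. Second, and more substantively, your surjectivity argument asserts ``density of polynomials in the relevant weighted holomorphic $L^{2}$ space'' without proof; this is not automatic for a Bergman-type space over a noncompact manifold, and as stated this is a gap. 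The standard way to close it is the same mechanism the paper uses in Section \ref{global_isom.sec} and in the lemma before Theorem \ref{surj.thm}: show (here again by Schur orthogonality, or equivalently a Fubini/holomorphic change of variable) that $\langle F,\pi_{ij}\rangle_{\mathcal H}=c_{\pi}\langle F|_{K},\pi_{ij}\rangle_{L^{2}(K)}$ with $c_{\pi}>0$ (this is $(\star)$ once more), so orthogonality in $\mathcal H$ forces $F|_{K}\perp L^{2}(K)$, hence $F|_{K}=0$, hence $F\equiv 0$ since $K$ is a maximal totally real submanifold of $K_{\mathbb C}$.
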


Note that we have $e^{-\left\vert Y\right\vert ^{2}/t}$ and $j_{x}%
^{\mathrm{nc}}(2Y)$ in the isometry formula but $e^{-\left\vert Y\right\vert
^{2}/2t}$ and $j_{x}^{\mathrm{c}}(Y)$ in the inversion formula. The same
results for the the Euclidean symmetric space $\mathbb{R}^{d}$ hold, with
$j_{x}^{\mathrm{nc}}(Y)\equiv1$ and $|\rho|=0.$ The $\mathbb{R}^{d}$ result is
the Segal--Bargmann transform for Euclidean spaces developed by Segal and
Bargmann \cite{Ba1,Se3}, with somewhat different normalization conventions.
(See \cite{range,mexnotes} for more information.)

For general compact symmetric spaces $U/K,$ the isometry and inversion
formulas developed by Stenzel \cite{St1} involve the \textit{heat kernel
measure} on the \textit{dual noncompact symmetric space}. In the case where
$U/K$ is isometric to a compact Lie group with a bi-invariant metric, the dual
noncompact symmetric space is of the \textquotedblleft complex
type,\textquotedblright\ where there is an explicit formula \cite{Ga} for the
heat kernel, accounting for the simple explicit form of Theorem
\ref{group.thm}.

The group case is also special because of connections to geometric
quantization \cite{geoquant,FMMN1,FMMN2,Hu} and the quantization of
$(1+1)$-dimensional Yang--Mills theory \cite{Wr,DH1,ymcoherent}.

\subsection{Quotients of noncompact symmetric spaces of the complex type}

In this paper, we consider a compact quotient of a noncompact Riemannian
symmetric space of the \textquotedblleft complex type.\textquotedblright%
\ Suppose $G$ is a connected \textit{complex} semisimple group and $K$ a
maximal compact subgroup of $G.$ Then the manifold $G/K$, equipped with a
fixed $G$-invariant Riemannian metric, is a noncompact symmetric space of the
complex type. Symmetric space of the noncompact type are nothing but the
noncompact duals, under the usual duality between symmetric spaces of compact
and noncompact types, of compact semisimple Lie groups. The simplest example
of a symmetric space of the complex type is hyperbolic 3-space.

In \cite{HM3,HM4}, we have developed a Segal--Bargmann transform for a
noncompact symmetric space $G/K$ of the complex type. (See also \cite{range}
further discussion of conceptual issues involved.) In the present paper, we
wish to extend that theory to a compact quotient $\Gamma\backslash G/K,$ where
$\Gamma$ is a discrete subgroup of $G$ acting freely and cocompactly on $G/K.$
(Examples of such quotients include compact hyperbolic 3-manifolds.) Although
the \textit{formulas} in the quotient case are essentially identical to the
formulas for $G/K$ itself, the \textit{proofs}, particularly of the isometry
formula, are different.

At the intuitive level, the results about the Segal--Bargmann transform for
$\Gamma\backslash G/K$ (with $G$ complex) should be obtained from the results
for the compact group case by dualizing. This means that we should replace $K$
in Theorem \ref{group.thm} with $\Gamma\backslash G/K$ and $j^{\mathrm{nc}}$
with $j^{\mathrm{c}},$ the Jacobian of the exponential map for the
\textit{compact} symmetric space dual to $G/K.$ We also replace $\left\vert
\rho\right\vert ^{2}$ with $-\left\vert \rho\right\vert ^{2}$ in the
exponential factors in front of the integrals, where $\left\vert
\rho\right\vert ^{2}$ is related to the scalar curvature, which is positive in
the compact case and negative in the noncompact case.

The challenge on the noncompact side (whether for $G/K$ or for $\Gamma
\backslash G/K$) is to make sense of the dualized formulas. The main
difficulty is the appearance of \textit{singularities} that do not appear on
the compact side. If $f$ is a function on $\Gamma\backslash G/K$ and we set
$F=e^{t\Delta/2}f$ for some fixed $t,$ then the function%
\[
Y\longmapsto F(\exp_{x}Y),\quad Y\in T_{x}(\Gamma\backslash G/K)
\]
does not admit an entire analytic holomorphic extension in $Y.$ Specifically, the function
$F(\exp_{x}(iY))$ will develop singularities once $Y$ gets large enough. (By
contrast, in the compact case, if $F$ is of the form $F=e^{t\Delta/2}f,$ then
$F(\exp_{x}(iY))$ is nonsingular for all $Y.$) To make sense of the isometry
formula or the inversion formula for $\Gamma\backslash G/K,$ we need a
\textit{cancellation of singularities}.

The inversion formula for $\Gamma\backslash G/K$ is as follows. Let $f$ be a
sufficiently smooth function in $L^{2}(\Gamma\backslash G/K)$ and let
$F=e^{t\Delta/2}f.$ Then we have
\begin{equation}
f(x)=\text{\textquotedblleft}\lim_{R\rightarrow\infty}\text{\textquotedblright%
}\ e^{\left\vert \rho\right\vert ^{2}t/2}\int_{\substack{Y\in T_{x}%
(\Gamma\backslash G/K) \\\left\vert Y\right\vert \leq R}}F(\exp_{x}%
iY)j_{x}^{\mathrm{c}}(Y)^{1/2}\frac{e^{-\left\vert Y\right\vert ^{2}/2t}%
}{(2\pi t)^{d/2}}~dY. \label{complex.inv1}%
\end{equation}
Here, \textquotedblleft$\lim_{R\rightarrow\infty}${}\textquotedblright\ means
the limit as $R$ tends to infinity of the\textit{\ real-analytic extension }of
the indicated quantity. That is to say, the integral on the right-hand side of
(\ref{complex.inv1}) is well-defined for all sufficiently small $R$ and admits
a real-analytic continuation in $R$ to $(0,\infty).$ The right-hand side of
(\ref{complex.inv1}) then is equal to the limit as $R$ tends to infinity of
this analytic continuation. (See also Stenzel's work \cite{St2} for a
different sort of inversion formula for noncompact symmetric spaces.)
Meanwhile, the isometry formula for $\Gamma\backslash G/K $ reads%
\begin{align}
&  \int_{\Gamma\backslash G/K}\left\vert f(x)\right\vert ^{2}~dx\nonumber\\
&  =\text{\textquotedblleft}\lim_{R\rightarrow\infty}\text{\textquotedblright%
}~e^{\left\vert \rho\right\vert ^{2}t}\int_{x\in\Gamma\backslash G/K}%
\int_{\substack{Y\in T_{x}(\Gamma\backslash G/K) \\\left\vert Y\right\vert
\leq R}}\left\vert F(\exp_{x}iY)\right\vert ^{2}j_{x}^{\mathrm{c}}%
(2Y)^{1/2}\frac{e^{-\left\vert Y\right\vert ^{2}/t}}{(\pi t)^{d/2}}~dY~dx.
\label{complex.isom1}%
\end{align}

In both the inversion formula and the isometry formula, there is a
cancellation of singularities that allows the real-analytic extension with
respect to $R$ to exist, even though $F(\exp_{x}iY)$ becomes singular for
large $Y.$ In the inversion formula, for example, integral on the right-hand
side of (\ref{complex.inv1}) is unchanged if we average the function
$Y\longmapsto F(\exp_{x}iY)$ with respect to the action of $K_{x},$ the group
of local isometries of $\Gamma\backslash G/K$ fixing $x.$ This averaging
process cancels many of the singularities in $F(\exp_{x}iY)$; the remaining
singularities are canceled by the zeros of the function $j^{\mathrm{c}}(Y).$

Our inversion and isometry formulas for the Segal--Bargmann transform on
$\Gamma\backslash G/K$ are the same as the ones developed in \cite{HM3,HM4}
for $G/K,$ except for replacing $G/K$ with $\Gamma\backslash G/K$ in the
obvious places in the formulas. (In both cases, still assuming that $G$ is
complex!) In the $G/K$ case, our isometry formula \textit{does not} coincide
with the isometry formula developed by B. Kr\"{o}tz, G. \'{O}lafsson, and R.
Stanton \cite{KOS}. The results of \cite{KOS} have the advantage of working
for arbitrary symmetric spaces of the noncompact type (not just the complex
case); our results, meanwhile, have the advantage of being more parallel to
what one has in the compact case. (See \cite{OS1} a Segal--Bargmann transform
for radial functions on noncompact symmetric spaces. See also \cite{OS2} for a
refinement of the isometry formula in \cite{KOS}, which also differs from the
isometry formula of \cite{HM4} when specialized to the complex case.)

\subsection{Remarks on the methods used}

We conclude this introduction by discussing the methods of proof. At least
conceptually, the proof of the inversion formula on $\Gamma\backslash G/K$
should be similar to the proof of the inversion formula on $G/K.$ After all, a
function $f$ on $\Gamma\backslash G/K$ lifts to a $\Gamma$-invariant function
$\tilde{f}$ on $G/K.$ To be sure, $\tilde{f}$ is not square-integrable on
$G/K,$ but this matters little, since the inversion formula involves no
integration over the base manifold. As a result, our proof of the inversion
formula for $\Gamma\backslash G/K$ is similar to the proof for $G/K.$ The key
ingredient is an \textquotedblleft intertwining formula,\textquotedblright%
\ specific to the complex case, between the Euclidean and non-Euclidean Laplacians.

In the case of the isometry formula, lifting to $G/K$ is not helpful, since
the lack of square-integrability of $\tilde{f}$ prevents us from formulating
the isometry \textquotedblleft upstairs\textquotedblright\ on $G/K.$
Meanwhile, the Gutzmer-type formula of J. Faraut \cite{Far1,Far2}, which is
the key ingredient in the proofs in \cite{HM3,HM4} and also in \cite{KOS}, has
no analog (so far as we know) on $\Gamma\backslash G/K.$ This means that our
proof of the isometry formula for $\Gamma\backslash G/K$ must use methods that
are completely different from those in the \cite{HM3,HM4}. Our proof uses a
double coset integral along with a holomorphic change of variable to reduce
the isometry formula to the inversion formula. This approach parallels one
method of establishing the isometry formula in the compact case, in work of
Hall \cite{H2} and Stenzel \cite{St1}.

The proof of the holomorphic change of variable (Theorem \ref{holo_change.thm}%
) applies to compact quotients of general symmetric spaces of the noncompact
type, not just those of the complex type. Meanwhile, in cases where the
relevant singularities can be understood fairly easily (say, the rank-one case
or the even-multiplicity case), it may be possible to develop an inversion
formula involving integration against a suitably \textquotedblleft
unwrapped\textquotedblright\ version of the heat kernel on the dual compact
symmetric space. Thus, it may be possible to develop results similar to those
of this paper and \cite{HM3,HM4} for other noncompact symmetric spaces and
compact quotients thereof.

\section{Set-up\label{setup.sec}}

We begin this section by recalling certain basic facts about symmetric spaces.
A standard reference for this material is \cite{He1}. We consider a connected
\textit{complex} semisimple group $G$, a fixed maximal compact subgroup $K$ of
$G,$ and the quotient manifold $G/K.$ We will assume, with no loss of
generality, that $G$ acts effectively on $G/K,$ which is equivalent to
assuming that the Lie algebra $\mathfrak{k}$ of $K$ contains no nonzero ideal
of $\mathfrak{g}$ and that the center of $G$ is trivial. There is then a
unique involution of $G$ whose fixed-point subgroup is $K.$ The Lie algebra
$\mathfrak{g}$ then decomposes as $\mathfrak{g}=\mathfrak{k}+\mathfrak{p},$
where $\mathfrak{p}$ is the space on which the associated Lie algebra
involution acts as $-I.$ (Since $G$ is complex, $\mathfrak{p}$ will be equal
to $i\mathfrak{k}.$) We now choose on $\mathfrak{p}$ an inner product
invariant under the adjoint action of $K.$ We consider the manifold $G/K$ and
we identify the tangent space at the identity coset $x_{0}$ with
$\mathfrak{p}.$ There is then a unique $G$-invariant Riemannian structure on
$G/K$ whose restriction to $T_{x_{0}}(G/K)=\mathfrak{p}$ is the chosen
Ad-$K$-invariant inner product. The manifold $G/K$, together with a Riemannian
structure of this form, is what we will call a noncompact symmetric space of
the complex type.

We let $\mathfrak{a}$ be a maximal commutative subspace of $\mathfrak{p}$ and
we let $R\subset\mathfrak{a}^{\ast}$ denote the set of (restricted) roots for
the $(\mathfrak{g},\mathfrak{k}).$ We fix a set of positive roots, which we
denote by $R^{+}.$ We then let $\mathfrak{a}^{+}$ denote the closed
fundamental Weyl chamber, that is, the set of $Y\in\mathfrak{a}$ such that
$\alpha(Y)\geq0$ for all $\alpha\in R^{+}.$ It is known that every element of
$\mathfrak{p}$ can be moved into $\mathfrak{a}^{+}$ by the adjoint action of
$K.$

We will also consider the \textit{compact dual} to $G/K.$ Let $G_{\mathbb{C}}
$ be the complexification of $G,$ which contains $G$ as a closed subgroup, in
which case the Lie algebra of $G_{\mathbb{C}}$ is $\mathfrak{g}_{\mathbb{C}%
}=\mathfrak{g}+i\mathfrak{g}.$ We define $\mathfrak{u}$ to be the subalgebra
of $\mathfrak{g}_{\mathbb{C}}$ given by $\mathfrak{u}:=\mathfrak{k}%
+i\mathfrak{p}$ and we let $U$ be the corresponding connected Lie subgroup of
$G_{\mathbb{C}},$ which is compact. We then consider the manifold $U/K.$ We
think of the tangent space at the identity coset in $U/K$ as $i\mathfrak{p}$.
The chosen inner product on $\mathfrak{p}$ then determines an inner product on
$i\mathfrak{p}$ in the obvious way. There is then a unique $U$-invariant
Riemannian structure on $U/K$ whose restriction to the tangent space at the
identity coset is this inner product. The manifold $U/K$, with this Riemannian
structure, is a simply connected symmetric space of the compact type in the
notation of \cite{He1}, and is called the compact dual of the symmetric space
$G/K.$ Since $G/K$ is of the complex type, $U/K$ will be isometric to a
compact Lie group with a bi-invariant metric.

We then consider a discrete subgroup $\Gamma$ of $G$ with the property that
$\Gamma$ acts freely and cocompactly on $G/K.$ The action of $\Gamma$ is then
automatically properly discontinuous. It is not obvious but true that such
subgroups always exist. The manifold $X:=\Gamma\backslash G/K$ is then what we
mean by a compact quotient of $G/K.$ We let $\pi$ denote the quotient map from
$G/K$ to $\Gamma\backslash G/K$; this map is a covering map. Because the
action of $\Gamma\subset G$ on $G/K$ is isometric, the metric on $G/K$
descends unambiguously to $X.$ In the case that $G/K$ is hyperbolic 3-space, a
compact quotient is nothing but a hyperbolic 3-manifold, that is, an
orientable closed 3-manifold of constant negative curvature.

For $R>0,$ let $T^{R}(X)$ denote the set of pairs $(x,Y)$ in $T(X)$ with
$\left\vert Y\right\vert <R.$ Let $S_{R}$ denote the strip in the complex
plane given by%
\begin{equation}
S_{R}=\left\{  \left.  u+iv\in\mathbb{C}\right\vert ~\left\vert v\right\vert
<R\right\}  . \label{sr}%
\end{equation}
If $\gamma$ is a unit-speed geodesic in $X,$ consider the map $\tau
:S_{R}\rightarrow T^{R}(X)$ given by
\[
\tau(u+iv)=(\gamma(u),v\dot{\gamma}(u)).
\]
In the terminology of Lempert and Sz\H{o}ke \cite{LS,Sz1}, a complex structure
on $T^{R}(X)$ is called \textquotedblleft adapted\textquotedblright\ (to the
given metric on $X$) if for each geodesic $\gamma,$ the map $\tau$ is
holomorphic as a map of $S_{R}\subset\mathbb{C}$ into $T^{R}(X).$ Lempert and
Sz\H{o}ke show that for any $R>0$ there exists at most one adapted complex
structure and that if $R$ is small enough then an adapted complex structure
does exist. (These results hold more generally for any compact, real-analytic
Riemannian manifold.) The same complex structure was constructed
independently, from a different but equivalent point of view, by Guillemin and
Stenzel \cite{GStenz1,GStenz2}.

Given $x\in X,$ we may consider the geometric exponential map
\[
\exp_{x}:T_{x}(X)\rightarrow X\subset T(X).
\]
This map can be analytically continued into a holomorphic map of a
neighborhood of the identity in the complexified tangent space $T_{x}%
(X)_{\mathbb{C}}$ into $T(X).$ This analytically continued exponential map
satisfies%
\begin{equation}
\exp_{x}(iY)=(x,Y), \label{exp.id}%
\end{equation}
as may easily be verified from the holomorphicity of the map $\tau.$

If $F$ is a real-analytic function on $X,$ it will have an analytic
continuation, also called $F,$ to some $T^{R^{\prime}}(X),$ for some
$R^{\prime}\leq R.$ In light of (\ref{exp.id}), we may write the value of $F$
at a point $(x,Y)\in T^{R^{\prime}}(X)$ as $F(\exp_{x}iY).$ This notation is
suggestive, because we may alternatively consider the map%
\begin{equation}
Y\rightarrow F(\exp_{x}Y) \label{y_f}%
\end{equation}
as a real-analytic map of $T_{x}(X)$ into $\mathbb{C}.$ Then the expression
$F(\exp_{x}iY)$ may be thought of equivalently as the analytic continuation of
$F$ evaluated at the point $\exp_{x}(iY)=(x,Y),$ or as the analytic
continuation of the map (\ref{y_f}), evaluated at the point $iY.$

For each $x\in X,$ we have also the Jacobian $j_{x}$ of the exponential map
$\exp_{x}.$ To compute $j_{x}$, we choose some $\tilde{x}\in G/K$ that maps to
$x$ under quotienting by $\Gamma$. Then we choose some $g\in G$ with $g\cdot
x_{0}=\tilde{x},$ where $x_{0}$ is the identity coset in $G/K.$ The action of
$g$ serves to identify $T_{x_{0}}(G/K)$ with $T_{\tilde{x}}(G/K),$ which is
then naturally identifiable with $T_{x}(X)$ by the differential of the
covering map from $G/K$ to $\Gamma\backslash G/K$. Finally, $T_{x_{0}}(G/K)$
is naturally identifiable with $\mathfrak{p}.$ In this way, we obtain an
identification of $T_{x}(X)$ with $\mathfrak{p}.$ The identification is not
unique, but it is unique up to the adjoint action of $K $ on $\mathfrak{p}.$
Under any identification of this sort, $j_{x}$ is invariant under the adjoint
action of $K$ on $\mathfrak{p},$ and the restriction of $j_{x}$ to
$\mathfrak{a}\subset\mathfrak{p}$ is given by%
\[
j_{x}(H)=\prod_{\alpha\in R^{+}}\left(  \frac{\sinh\alpha(H)}{\alpha
(H)}\right)  ^{2}.
\]
This formula is the same as the formula for the Jacobian of the exponential
map on $G/K$ and reflects that in the complex case, all the (restricted) roots
for $G/K$ have multiplicity 2.

From this formula, one can verify that the function $j_{x}$ on $T_{x}(X)$
admits an entire holomorphic extension to the complexification of $T_{x}(X),$
which may be identified with $\mathfrak{p}_{\mathbb{C}}.$ Now consider the
function $j_{x}^{\mathrm{c}}$ on $T_{x}(X)$ given by%
\[
j_{x}^{\mathrm{c}}(Y)=j_{x}(iY).
\]
Under our identification of $T_{x}(X)$ with $\mathfrak{p},$ we have that
$j_{x}^{\mathrm{c}}$ is invariant under the adjoint action of $K$ and its
restriction to $\mathfrak{a}$ is given by%
\[
j_{x}^{\mathrm{c}}(H)=\prod_{\alpha\in R^{+}}\left(  \frac{\sin\alpha
(H)}{\alpha(H)}\right)  ^{2}.
\]
The superscript \textquotedblleft$\mathrm{c}$\textquotedblright\ in the
formula reflects that $j_{x}^{\mathrm{c}}$ coincides with the Jacobian of the
exponential mapping for the \textit{compact} symmetric space $U/K.$

Note that the formula for $j_{x}^{\mathrm{c}}$, under any identification of
$T_{x}(X)$ with $\mathfrak{p}$ of the above sort, is independent of $x.$ Thus,
in a certain sense, $j_{x}^{\mathrm{c}}$ is \textquotedblleft the
same\textquotedblright\ function for each $x,$ reflecting that any point in
$X$ can be mapped to any other point by a local isometry. For example, in the
case of a hyperbolic 3-manifold (with an appropriate normalization of the
metric), we have $j_{x}^{\mathrm{c}}(Y)=(\sin\left\vert Y\right\vert
/\left\vert Y\right\vert )^{2}$ for every $x\in X.$

We now let $j_{x}^{\mathrm{c}}(Y)^{1/2}$ be (under our identification of
$T_{x}(X)$ with $\mathfrak{p}$) the Ad-$K$-invariant function whose
restriction to $\mathfrak{a}$ is given by%
\begin{equation}
j_{x}^{\mathrm{c}}(H)^{1/2}=\prod_{\alpha\in R^{+}}\frac{\sin\alpha(H)}%
{\alpha(H)}. \label{jc_half}%
\end{equation}
Note that $j_{x}^{\mathrm{c}}(Y)^{1/2}$ is \textit{not} the positive square
root of $j_{x}^{\mathrm{c}}(Y).$ Rather, $j_{x}^{\mathrm{c}}(Y)^{1/2}$ is
chosen so as to be \textit{real analytic} and positive near the origin. We
then let $j_{x}^{\mathrm{c}}(Y)^{-1/2}$ be the reciprocal of $j_{x}%
^{\mathrm{c}}(Y)^{1/2},$ defined away from the points where $j_{x}%
^{\mathrm{c}}(Y)$ is zero.

\section{The inversion formula\label{inv.sec}}

The key result of this section is the partial inversion formula (Theorem
\ref{partial.thm}), which is proved using an intertwining formula that relates
the Laplacian on $G/K$ to the Euclidean Laplacian. Once Theorem
\ref{partial.thm} is proved, the desired \textquotedblleft
global\textquotedblright\ inversion formulas follow by a fairly
straightforward limit as the radius tends to infinity.

For each $x\in\Gamma\backslash G/K,$ consider (as in \cite{HM3,HM4}) the
function
\[
\nu_{t,x}^{\mathrm{c}}(Y):=e^{t\left\vert \rho\right\vert ^{2}/2}%
j_{x}^{\mathrm{c}}(Y)^{-1/2}\frac{e^{-\left\vert Y\right\vert ^{2}/2t}}{(2\pi
t)^{d/2}},\quad Y\in T_{x}(X)
\]
and the associated signed measure%
\begin{equation}
\nu_{t,x}^{\mathrm{c}}(Y)j_{x}^{\mathrm{c}}(Y)~dY=e^{t\left\vert
\rho\right\vert ^{2}/2}j_{x}^{\mathrm{c}}(Y)^{1/2}\frac{e^{-\left\vert
Y\right\vert ^{2}/2t}}{(2\pi t)^{d/2}}~dY. \label{nu_t_c}%
\end{equation}
Here, again, the superscript \textquotedblleft$\mathrm{c}$\textquotedblright%
\ is supposed to denote quantities associated to the compact symmetric space
$U/K$ dual to $G/K.$ The measure in (\ref{nu_t_c}) is an \textquotedblleft
unwrapped\textquotedblright\ form of the heat kernel measure on $U/K.$ This
means that the push-forward of this measure under the exponential mapping for
$U/K$ is precisely the heat kernel measure at the identity coset on $U/K$
\cite[Thm. 5]{HM3}. (Note that because $U/K$ is isometric to a compact Lie
group with a bi-invariant metric, the heat kernel formula of \`{E}skin
(\cite{E}; see also \cite{U}) applies. From this formula it is easy to see
that the signed measure in (\ref{nu_t_c}) pushes forward to the heat kernel on
$U/K.$)

We introduce the operator%
\begin{align}
&  A_{t,R}(f)(x)=\int_{T_{x}^{R}(X)}F(\exp_{x}iY)\nu_{t,x}^{\mathrm{c}%
}(Y)j_{x}^{\mathrm{c}}(Y)~dY\nonumber\\
&  =e^{t\left\vert \rho\right\vert ^{2}/2}\int_{T_{x}^{R}(X)}F(\exp
_{x}iY)j_{x}^{\mathrm{c}}(Y)^{1/2}\frac{e^{-\left\vert Y\right\vert ^{2}/2t}%
}{(2\pi t)^{d/2}}~dY, \label{a_def}%
\end{align}
where as usual, $F$ is the analytic continuation of $e^{t\Delta/2}f$ and where
$T_{x}^{R}(X)$ denotes the vectors in $T_{x}(X)$ with length less than $R.$
The operator $A_{t,R}$ consists of applying the time-$t$ heat operator and
then doing a \textquotedblleft partial inversion,\textquotedblright\ in which
we integrate only over a ball of radius $R$ in the tangent space. We will seek
a way to allow $R$ to tend to infinity, by means of an appropriate analytic
continuation, with the expectation that $A_{t,R}$ tends to the identity
operator as $R$ tends to infinity.

We now state the results of this section, before turning to the proofs.

\begin{proposition}
\label{continue.prop}There exists $R_{0}>0$ such that for all $f\in L^{2}(X),
$ the function $F:=e^{t\Delta/2}f$ has a holomorphic extension to $T^{R_{0}%
}(X),$ with respect to the adapted complex structure. Furthermore, for each
fixed $z\in T^{R_{0}}(X),$ the map $f\rightarrow F(z)$ is a bounded linear
functional on $L^{2}(X),$ with norm a locally bounded function of $z.$
\end{proposition}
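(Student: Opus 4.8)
The plan is to establish the holomorphic extension first for a dense subspace of $L^{2}(X)$ where everything is transparent, then extend by continuity using the second assertion about bounded linear functionals; indeed, the two claims are best proved together. First I would work on the universal cover: a function $f\in L^{2}(X)$ lifts to a $\Gamma$-invariant function $\tilde f$ on $G/K$, and $F=e^{t\Delta/2}f$ lifts to $\tilde F = e^{t\Delta/2}\tilde f$, where the heat operator on $G/K$ is realized via convolution with the (explicitly known, in the complex case) heat kernel $p_{t}$ on $G/K$. The point is that the local geometry of $X$ near any $x$ is identical to that of $G/K$ near a lift $\tilde x$, and the adapted complex structure on $T^{R_{0}}(X)$ is the local push-forward of the adapted complex structure on $T^{R_{0}}(G/K)$; so it suffices to produce a holomorphic extension of $\tilde F$ to $T^{R_{0}}(G/K)$ that is controlled locally uniformly, and then check $\Gamma$-invariance is preserved (which it is, since $\Gamma$ acts by isometries commuting with the heat operator and with analytic continuation).

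Second, I would get the extension on $G/K$ from a kernel estimate. Write $\tilde F(\tilde x) = \int_{G/K} p_{t}(\tilde x, \tilde y)\, \tilde f(\tilde y)\, d\tilde y$. Fixing a fundamental domain $D$ for $\Gamma$ of finite volume, rewrite this as $\int_{D} \big(\sum_{\gamma\in\Gamma} p_{t}(\tilde x, \gamma\tilde y)\big)\tilde f(\tilde y)\, d\tilde y$; the periodized kernel $P_{t}(\tilde x,\tilde y):=\sum_{\gamma} p_{t}(\tilde x,\gamma\tilde y)$ is exactly the heat kernel on $X$, and it is smooth and rapidly decaying because $p_{t}$ has Gaussian decay on $G/K$ and $\Gamma$ acts with positive injectivity radius. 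Now I claim the map $\tilde x \mapsto p_{t}(\tilde x,\tilde y)$ analytically continues, in the first variable, from $G/K$ into $T^{R_{0}}(G/K)$ for some $R_{0}>0$ independent of $\tilde y$, with the continued kernel $z\mapsto p_{t}(z,\tilde y)$ satisfying a bound of the form $|p_{t}(z,\tilde y)| \le C(K_{z})\, e^{-c\, d(\tilde x,\tilde y)^{2}}$ uniformly for $z$ in a compact subset $K_{z}$ of $T^{R_{0}}(G/K)$ projecting near $\tilde x$. Granting this, Cauchy--Schwarz gives $|F(z)| \le C(K_{z})\, \|f\|_{L^{2}(X)}$ with $C$ locally bounded, which is precisely the bounded-functional claim; holomorphy of $z\mapsto F(z)$ then follows by Morera/Fubini from holomorphy of the kernel, the dominated convergence being justified by the same Gaussian bound integrated against the finite-volume fundamental domain.

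Third, the periodization and $\Gamma$-invariance: one must check that the holomorphic extension of $F$ on $X$ is well defined independent of the choice of lift $\tilde x$ and of $g\in G$ used to identify tangent spaces (cf. the set-up section). This follows because two choices differ by the action of an element of $K$ (for the base point identification) or of $\Gamma$ (for the lift), both acting by isometries of $G/K$ which intertwine the exponential map and its holomorphic continuation; hence the locally defined holomorphic functions patch to a global holomorphic function on $T^{R_{0}}(X)$.

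I expect the main obstacle to be the uniform analytic-continuation estimate for the heat kernel $p_{t}(z,\tilde y)$ on $T^{R_{0}}(G/K)$ with the Gaussian tail preserved. In the complex type one has \'Eskin's explicit formula for $p_{t}$ on $G/K$ in terms of $\sinh$-type Jacobian factors and a Gaussian in the geodesic distance, so analytic continuation in the radial variable is possible until the first zero of the relevant $\sinh$-product in the complexified direction, which fixes $R_{0}$; the work is in checking that the continuation is locally uniform in the transverse ($K$-orbit) directions as well, and that the polynomial/analytic prefactors do not destroy the Gaussian decay needed for the sum over $\Gamma$ and the $L^{2}$ pairing. An alternative that sidesteps explicit formulas is to prove the extension first for $f$ a finite linear combination of (projected) heat kernels or of smooth compactly supported functions, obtain the functional bound on that dense set via the subaveraging/semigroup property $e^{t\Delta/2} = e^{(t/2)\Delta/2}e^{(t/2)\Delta/2}$ together with elliptic regularity and the known short-time Gaussian bounds, and then pass to the limit; this is the route I would take if the explicit-formula bookkeeping becomes unwieldy.
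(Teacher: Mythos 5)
Your proposal is correct in outline, but it takes a genuinely different and considerably more laborious route than the paper. The paper's proof is essentially three lines: it cites Nelson's theorem (\cite{N}) that the heat kernel $k_{t}$ on \emph{any} compact real-analytic Riemannian manifold is real-analytic on $X\times X$ for fixed $t>0$; compactness of $X$ then gives a holomorphic extension of $k_{t}$ to $T^{R_{0}}(X)\times T^{R_{0}}(X)$ for some $R_{0}>0$; the desired extension is $F(z)=\int_{X}k_{t}(z,y)f(y)\,dy$, and the locally bounded pointwise functional is immediate by Cauchy--Schwarz since $\|k_{t}(z,\cdot)\|_{L^{2}(X)}$ is continuous in $z$. (The paper adds a remark, via Guillemin--Stenzel \cite{GStenz2}, that $R_{0}$ can be taken independent of $t$, though this is not needed since $t$ is fixed.) By working directly on the compact quotient, the paper bypasses all of the issues you flag as the main obstacles: no periodization over $\Gamma$, no Gaussian tail estimates for the analytically continued kernel, no sum over the deck group, no patching argument for well-definedness under choice of lift. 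Your approach, via the explicit \`Eskin/Gangolli formula on $G/K$ and periodization, would in principle give more quantitative control over $R_{0}$ and the constant $C(K_{z})$, but at the cost of verifying that the Gaussian decay survives analytic continuation into $T^{R_{0}}(G/K)$ uniformly in the second variable — a nontrivial estimate you correctly identify as the crux but do not carry out. The lesson here is that compactness of $X$ makes Nelson's soft real-analyticity result do all the work, whereas the hard estimates in your plan are exactly what one would be forced into if $X$ were noncompact.
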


This proposition shows that the operator $A_{t,R}$ is well defined and bounded
for all sufficiently small $R.$

\begin{theorem}
[Partial Inversion Formula]\label{partial.thm}Let $R_{0}$ be as in Proposition
\ref{continue.prop}. For all $R<R_{0},$ let $A_{t,R}$ be the operator defined
by (\ref{a_def}). Then $A_{t,R}$ is a bounded operator on $L^{2}(X)$ and is
given by%
\[
A_{t,R}=\alpha_{t,R}(-\Delta),
\]
where $\alpha_{t,R}:[0,\infty)\rightarrow\mathbb{R}$ is given by%
\begin{equation}
\alpha_{t,R}(\lambda)=e^{-t\lambda/2}e^{t\left\vert \rho\right\vert ^{2}%
/2}\int_{\substack{Y\in\mathbb{R}^{d} \\\left\vert Y\right\vert \leq R }%
}\exp\left(  \sqrt{\lambda-\left\vert \rho\right\vert ^{2}}~y_{1}\right)
\frac{e^{-\left\vert Y\right\vert ^{2}/2t}}{(2\pi t)^{d/2}}~dY.
\label{alpha.int}%
\end{equation}
Here $\sqrt{\lambda-\left\vert \rho\right\vert ^{2}}$ is either of the two
square roots of $\lambda-\left\vert \rho\right\vert ^{2}.$
\end{theorem}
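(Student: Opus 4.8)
The plan is to reduce the statement to the case where $f$ is an eigenfunction of $\Delta$ and then to exploit the special structure of the complex type through an intertwining formula between $\Delta$ and the Euclidean Laplacian.

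\medskip

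First I would dispose of boundedness. By Proposition \ref{continue.prop}, for $R<R_{0}$ and each $(x,Y)$ with $|Y|\le R$ the functional $f\mapsto F(\exp_{x}iY)$ is bounded on $L^{2}(X)$ with norm a locally bounded, hence (by compactness of $X$ and of the closed ball) uniformly bounded, function of $(x,Y)$; since $j_{x}^{\mathrm{c}}(Y)^{1/2}$ and the Gaussian are continuous, this gives $\|A_{t,R}f\|_{\infty}\le C_{R}\|f\|_{2}$, so $A_{t,R}$ is bounded on $L^{2}(X)$. Since $X$ is compact, $-\Delta$ has an orthonormal basis of (real-analytic) eigenfunctions, so it suffices to prove that $A_{t,R}f=\alpha_{t,R}(\lambda)f$ whenever $\Delta f=-\lambda f$ (with $\lambda\ge 0$); then $A_{t,R}$ and $\alpha_{t,R}(-\Delta)$ are bounded operators agreeing on a dense set. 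For such an $f$ we have $F=e^{t\Delta/2}f=e^{-t\lambda/2}f$, so
\[
A_{t,R}f(x)=e^{-t\lambda/2}e^{t|\rho|^{2}/2}\int_{\{|Y|\le R\}}f(\exp_{x}iY)\,j_{x}^{\mathrm{c}}(Y)^{1/2}\,\frac{e^{-|Y|^{2}/2t}}{(2\pi t)^{d/2}}\,dY ,
\]
where $f(\exp_{x}iY)$ is the holomorphic extension, defined for $|Y|<R_{0}$.

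\medskip

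Now fix $x$, pick a lift $\tilde{x}\in G/K$, identify $T_{x}(X)$ with $\mathfrak{p}$ as in Section \ref{setup.sec}, and let $\tilde{f}$ be the $\Gamma$-invariant lift of $f$. I would form the $\mathrm{Ad}$-$K$-average $\bar{f}_{x}(Y):=\int_{K}\tilde{f}(\exp_{\tilde{x}}\mathrm{Ad}(k)Y)\,dk$, a real-analytic $\mathrm{Ad}$-$K$-invariant function near $0\in\mathfrak{p}$ with $\bar{f}_{x}(0)=f(x)$. Using the $G$-invariance of $\Delta$ and $\Delta\tilde{f}=-\lambda\tilde{f}$, one sees that $\bar f_{x}$ satisfies the radial eigenvalue equation for the Laplacian of $G/K$ in exponential coordinates. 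The crucial point — the only place the complex type is used — is the intertwining formula: because every restricted root of $G/K$ has multiplicity $2$ (equivalently, $\prod_{\alpha\in R^{+}}\alpha$ is a harmonic polynomial on $\mathfrak{a}$), conjugation of this radial Laplacian by $j_{x}^{1/2}$ produces the flat Laplacian shifted by the scalar $|\rho|^{2}$; thus $v:=j_{x}^{1/2}\bar{f}_{x}$ satisfies $\Delta_{\mathbb{R}^{d}}v=-(\lambda-|\rho|^{2})v$ near $0$, with $v(0)=f(x)$ (cf. \cite{HM3,HM4}; the rank-one model is the classical substitution $u\mapsto u\sinh r$ on hyperbolic space). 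With $j_{x}^{1/2}$ the branch in (\ref{jc_half}), its holomorphic extension satisfies $j_{x}^{1/2}(iY)=j_{x}^{\mathrm{c}}(Y)^{1/2}$.

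\medskip

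Since $j_{x}^{\mathrm{c}}(Y)^{1/2}$, the Gaussian, and the ball $\{|Y|\le R\}$ are $\mathrm{Ad}$-$K$-invariant, I would average the integrand over $K$ and use $\bar{f}_{x}(iY)\,j_{x}^{\mathrm{c}}(Y)^{1/2}=(j_{x}^{1/2}\bar{f}_{x})(iY)=v(iY)$ (identity theorem) to get $\int_{\{|Y|\le R\}}f(\exp_{x}iY)\,j_{x}^{\mathrm{c}}(Y)^{1/2}\frac{e^{-|Y|^{2}/2t}}{(2\pi t)^{d/2}}\,dY=\int_{\{|Y|\le R\}}v(iY)\frac{e^{-|Y|^{2}/2t}}{(2\pi t)^{d/2}}\,dY$. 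Next I would average $v$ over the full rotation group $SO(d)$ of $T_{x}(X)$: the rotational average $v^{SO(d)}$ is a rotationally invariant solution of $\Delta_{\mathbb{R}^{d}}v^{SO(d)}=-(\lambda-|\rho|^{2})v^{SO(d)}$ regular at $0$, hence a multiple of the unique normalized such solution $\psi_{\lambda-|\rho|^{2}}$, namely $v^{SO(d)}=f(x)\,\psi_{\lambda-|\rho|^{2}}$. Rotation-invariance of the Gaussian and of the ball gives that the integral above equals $f(x)\int_{\{|Y|\le R\}}\psi_{\lambda-|\rho|^{2}}(iY)\frac{e^{-|Y|^{2}/2t}}{(2\pi t)^{d/2}}\,dY$. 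Finally, using the plane-wave representation $\psi_{\mu}(W)=\int_{S^{d-1}}e^{i\sqrt{\mu}\langle\omega,W\rangle}\,d\omega$ (normalized surface measure), so $\psi_{\mu}(iY)=\int_{S^{d-1}}e^{-\sqrt{\mu}\langle\omega,Y\rangle}\,d\omega$, and rotating coordinates to move $\omega$ to a fixed unit vector, each inner integral equals $\int_{\{|Y|\le R\}}\exp(\sqrt{\lambda-|\rho|^{2}}\,y_{1})\frac{e^{-|Y|^{2}/2t}}{(2\pi t)^{d/2}}\,dY$, independent of $\omega$ and of the choice of square root. Assembling the scalar prefactors gives $A_{t,R}f(x)=\alpha_{t,R}(\lambda)f(x)$.

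\medskip

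The main obstacle is the intertwining formula invoked above: it is exactly where the hypothesis that $G$ is complex is indispensable, and establishing it cleanly (for $\mathrm{Ad}$-$K$-invariant functions, and then tracking it under holomorphic continuation) is the real content. A secondary but necessary technical point is the bookkeeping of analytic continuations: one must check that $F$, $\bar{f}_{x}$, $v$, and $v^{SO(d)}$ all extend holomorphically to the imaginary ball of radius $R<R_{0}$ — which follows from Proposition \ref{continue.prop} because $\mathrm{Ad}$-$K$- and $SO(d)$-averaging move points only within spheres — and that the $K$- and $SO(d)$-averages may be interchanged with the integral over $\{|Y|\le R\}$, which is Fubini given the uniform bounds of Proposition \ref{continue.prop}.
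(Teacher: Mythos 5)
Your argument is correct and follows essentially the same route as the paper: reduce to eigenfunctions $\phi$ with $-\Delta\phi=\lambda\phi$, radialize about $\tilde x$ using $K_x$-invariance of the weight, apply the complex-type intertwining formula to produce a Euclidean eigenfunction $\Psi(Y)=\Phi^{(\tilde x)}(\exp_{\tilde x}iY)\,j_x^{\mathrm{c}}(Y)^{1/2}$ with eigenvalue $\lambda-|\rho|^2$, and then integrate against a rotationally invariant weight on the ball. The one cosmetic difference is that where the paper cites Lemma \ref{euclideanint.lem} (which is Lemma 5 of \cite{HM4}), you prove that lemma inline by $SO(d)$-averaging and the plane-wave representation $\psi_\mu(W)=\int_{S^{d-1}}e^{i\sqrt\mu\langle\omega,W\rangle}\,d\omega$; that is precisely the mechanism behind the cited lemma, so no new idea is involved and nothing is gained or lost.
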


On $G/K,$ the spectrum of $-\Delta$ is the interval $[\left\vert
\rho\right\vert ^{2},\infty).$ By contrast, on $X=\Gamma\backslash G/K,$ the
spectrum of $-\Delta$ includes points in the interval $[0,\left\vert
\rho\right\vert ^{2})$; for example, the constant function $\mathbf{1}$ is an
eigenvector for $-\Delta$ with eigenvalue $0.$ For $\lambda\in\lbrack
0,\left\vert \rho\right\vert ^{2}),$ $\sqrt{\lambda-\left\vert \rho\right\vert
^{2}}$ will be pure imaginary. Nevertheless, because the domain of integration
in (\ref{alpha.int}) is invariant under $y\rightarrow-y,$ the value of
$\alpha_{t,R}(\lambda)$ is still a real number.

For each fixed value of $R$ and $t,$ the integral in (\ref{alpha.int}) is
bounded by a constant times $\exp(\sqrt{\lambda}R).$ Thus, because of the
factor of $e^{-t\lambda/2}$ in front of the integral, $\alpha_{t,R}(\lambda)$
is a bounded function of $\lambda$ for each $R$ and $t.$

Although the definition of $A_{t,R}$ in (\ref{a_def}) makes sense only for
small $R,$ the function in (\ref{alpha.int}) is a well defined and bounded
function of $\lambda$ for every $R>0.$ Furthermore, if we let $R$ tend to
infinity in the definition of $\alpha_{t,R}$ we obtain (by Dominated
Convergence) an integral over all of $\mathbb{R}^{d}.$ This integral is an
easily evaluated Gaussian integral, whose value turns out to be 1 for all
$\lambda.$ That is to say,%
\begin{equation}
\lim_{R\rightarrow\infty}\alpha_{t,R}(\lambda)=1 \label{alpha.lim}%
\end{equation}
for all $t$ and $\lambda.$ This suggests that $A_{t,R}(f)$ should tend to $f $
as $R$ tends to infinity; proving this will yield a global inversion formula.
We present two versions of the formula, an $L^{2}$ version valid for all $f\in
L^{2}(X)$ and a pointwise version valid for sufficiently smooth $f.$

\begin{theorem}
[Global Inversion Formula, $L^{2}$ Version]\label{global_l2.thm}Let $R_{0}$ be
as in Proposition \ref{continue.prop}. For all $R<R_{0},$ let $A_{t,R}$ be as
in (\ref{a_def}). Then the map $R\rightarrow A_{t,R}$ has a weakly analytic
extension, also denoted $A_{t,R}, $ to a map of $(0,\infty)$ into the space of
bounded operators on $L^{2}(X). $ This analytic extension has the property
that for each $f\in L^{2}(X)$ we have%
\begin{equation}
f=\lim_{R\rightarrow\infty}A_{t,R}f, \label{global.inv1}%
\end{equation}
with the limit being in the norm topology of $L^{2}(X).$ In light of the
original expression for $A_{t,R},$ we may express (\ref{global.inv1})
informally as%
\[
f(x)=\text{\textquotedblleft}\lim_{R\rightarrow\infty}\text{{}%
\textquotedblright}e^{t\left\vert \rho\right\vert ^{2}/2}\int_{T_{x}^{R}%
(X)}F(\exp_{x}iY)j_{x}^{\mathrm{c}}(Y)^{1/2}\frac{e^{-\left\vert Y\right\vert
^{2}/2t}}{(2\pi t)^{d/2}}~dY,
\]
with the limit in the $L^{2}$ sense.
\end{theorem}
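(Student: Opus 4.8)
The plan is to derive Theorem~\ref{global_l2.thm} from the Partial Inversion Formula (Theorem~\ref{partial.thm}) together with the limit \eqref{alpha.lim} by working entirely on the spectral side. By Theorem~\ref{partial.thm}, for $R < R_0$ we have $A_{t,R} = \alpha_{t,R}(-\Delta)$, where $\alpha_{t,R}$ is the explicit function given by \eqref{alpha.int}. The key structural observation is that $\alpha_{t,R}(\lambda)$, as defined by the Gaussian integral in \eqref{alpha.int}, makes sense and is a bounded function of $\lambda \in [0,\infty)$ for \emph{every} $R > 0$, not just small $R$; moreover, for each fixed $\lambda$, the map $R \mapsto \alpha_{t,R}(\lambda)$ extends to a real-analytic (indeed entire, after the substitution to polar-type coordinates) function of $R$. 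So the first step is: define, for all $R \in (0,\infty)$, the operator $A_{t,R} := \alpha_{t,R}(-\Delta)$ via the bounded Borel functional calculus for the self-adjoint operator $-\Delta$ on $L^2(X)$. Since $|\alpha_{t,R}(\lambda)| \le C_{t,R} e^{\sqrt{\lambda}\,R} e^{-t\lambda/2}$ as noted in the discussion following Theorem~\ref{partial.thm}, each $\alpha_{t,R}$ is bounded on $[0,\infty)$, so each $A_{t,R}$ is a bounded operator, and for $R < R_0$ this agrees with the original integral definition \eqref{a_def}.

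Next I would verify that $R \mapsto A_{t,R}$ is weakly analytic on $(0,\infty)$. For $f, g \in L^2(X)$, the spectral theorem gives $\langle A_{t,R} f, g\rangle = \int_{[0,\infty)} \alpha_{t,R}(\lambda)\, d\mu_{f,g}(\lambda)$, where $\mu_{f,g}$ is the (complex) spectral measure, a finite measure. Analyticity of $R \mapsto \langle A_{t,R}f, g\rangle$ follows from differentiating under the integral sign: one needs a local-in-$R$ dominating bound on $\partial_R \alpha_{t,R}(\lambda)$ (and on $\alpha_{t,R}$ itself) that is integrable against the finite measure $\mu_{f,g}$. Differentiating \eqref{alpha.int} in $R$ produces a surface integral over the sphere of radius $R$, which is bounded by a constant times $R^{d-1} e^{\sqrt{\lambda}\,R} e^{-t\lambda/2}$; uniformly for $R$ in a compact subinterval of $(0,\infty)$ this is a bounded function of $\lambda$, hence $\mu_{f,g}$-integrable. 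Thus $R \mapsto \langle A_{t,R}f,g\rangle$ is analytic, so $R \mapsto A_{t,R}$ is weakly analytic, and it is the unique such extension since $(0,\infty)$ is connected and the extension is pinned down on $(0,R_0)$.

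Finally, for the limit \eqref{global.inv1}, I would use the spectral theorem again: $\|A_{t,R}f - f\|^2 = \int_{[0,\infty)} |\alpha_{t,R}(\lambda) - 1|^2\, d\mu_{f,f}(\lambda)$, where $\mu_{f,f}$ is the (positive, finite) spectral measure of $f$ with total mass $\|f\|^2$. By \eqref{alpha.lim}, $\alpha_{t,R}(\lambda) \to 1$ pointwise in $\lambda$ as $R \to \infty$. To pass the limit inside the integral, I need a uniform-in-$R$ (for $R$ large) dominating bound on $|\alpha_{t,R}(\lambda) - 1|^2$. Since $\alpha_{t,R}(\lambda)$ is, up to the prefactor $e^{-t\lambda/2}e^{t|\rho|^2/2}$, the integral of $\exp(\sqrt{\lambda - |\rho|^2}\, y_1)$ against the Gaussian density over a ball of radius $R$, and this is monotone increasing in $R$ toward its value $1$ over all of $\mathbb{R}^d$ (after using $y \mapsto -y$ symmetry to see the integrand contributes $\cosh$-type nonnegative mass once combined), we get $0 \le \alpha_{t,R}(\lambda) \le 1$ for all $R$ and $\lambda$ in the relevant range. (One must check this monotonicity carefully for the range $\lambda \in [0,|\rho|^2)$ where $\sqrt{\lambda - |\rho|^2}$ is imaginary and the integrand is oscillatory — but there the integrand, after symmetrization, is $\cos(|{\sqrt{|\rho|^2 - \lambda}}|\, y_1) \cdot(\text{positive})$ which need not be monotone; however $\mu_{f,f}$ restricted to $[0,|\rho|^2)$ is supported on finitely many eigenvalues, so on that part one may simply use pointwise convergence and finiteness directly.) Hence $|\alpha_{t,R}(\lambda) - 1|^2 \le 1$ is a constant dominating function, $1 \in L^1(\mu_{f,f})$, and Dominated Convergence gives $\|A_{t,R}f - f\| \to 0$. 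The reinterpretation of $A_{t,R}f$ as the informal integral over $T_x^R(X)$ for large $R$ is then just the statement that the weakly analytic extension of $A_{t,R}$ coincides with the real-analytic-in-$R$ continuation of the original integral formula \eqref{a_def}, which holds because both agree for small $R$ and both are analytic. The main obstacle I anticipate is the careful handling of the low spectrum $[0,|\rho|^2)$: there $\sqrt{\lambda - |\rho|^2}$ is imaginary, so $\alpha_{t,R}(\lambda)$ is an oscillatory integral rather than a monotone one, and one must argue separately — exploiting that the spectral measure there lives on a discrete set of eigenvalues of finite multiplicity (by compactness of $X$) — to justify the dominated convergence; the rest is routine spectral calculus and differentiation under the integral sign.
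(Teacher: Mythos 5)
Your overall strategy matches the paper's: extend $A_{t,R}=\alpha_{t,R}(-\Delta)$ to all $R>0$ via the functional calculus, observe that $\alpha_{t,R}(\lambda)$ is bounded in $\lambda$ for each fixed $R$, split the spectrum at $|\rho|^2$, use monotonicity and $0\le\alpha_{t,R}(\lambda)\le 1$ on the high part and finiteness of the low part, and apply dominated convergence to get $A_{t,R}f\to f$ in $L^2$. This dominated-convergence portion, including your handling of the finitely many eigenvalues in $[0,|\rho|^2)$, is correct and is essentially the argument of the paper.

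The weak-analyticity step, however, has a genuine gap as written. Differentiating under the integral sign in $\int \alpha_{t,R}(\lambda)\,d\mu_{f,g}(\lambda)$ with a local dominating bound on $\partial_R\alpha_{t,R}(\lambda)$ yields that $R\mapsto\langle A_{t,R}f,g\rangle$ is $C^1$, and iterating gives $C^\infty$; it does not give \emph{real analyticity}, which is what the theorem (and the subsequent isometry proof, which relies on analytic continuation in $R$) requires. You do observe, in passing, that $R\mapsto\alpha_{t,R}(\lambda)$ is entire in $R$ for each fixed $\lambda$; what is missing is a step that promotes this to analyticity of the spectral integral. One way to repair it, in the spirit of your spectral-measure formulation, is to extend $\alpha_{t,\cdot}(\lambda)$ holomorphically to a complex neighborhood of $(0,\infty)$, check that the extension is bounded in $\lambda$ uniformly over the spectrum for $R$ in compact subsets of that neighborhood (the prefactor $e^{-t\lambda/2}$ beats the $e^{\sqrt{\lambda}\,|R|}$ growth), and then apply Fubini and Morera to conclude $\int\alpha_{t,z}(\lambda)\,d\mu_{f,g}(\lambda)$ is holomorphic in $z$. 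The paper accomplishes the same end by expanding in the orthonormal eigenbasis, interchanging the sum over $n$ with the $Y$-integral in \eqref{alpha.int}, and noting (via Fubini and Morera) that the bracketed sum in \eqref{weak} is entire in $y_1$; real analyticity in $R$ then follows because one is integrating a fixed real-analytic function of $Y$ over the ball $|Y|\le R$. Either route closes the gap, but as stated your argument establishes only smoothness.
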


Recall that a map $\alpha$ of $(0,\infty)$ into the space of bounded operators
on a Hilbert space $H$ is weakly analytic if the map $R\rightarrow\left\langle
f,\alpha(R)g\right\rangle $ is a real-analytic function of $R$ for each $f$
and $g$ in $H.$ Of course, the analytic extension of the map $R\rightarrow
A_{t,R}$ is given by $\alpha_{t,R}(-\Delta),$ where $\alpha_{t,R}$ is defined
(for all $R>0$) by (\ref{alpha.int}).

\begin{theorem}
[Global Inversion Formula, Pointwise Version]\label{global_ptwise.thm}Let
$R_{0}$ be as in Proposition \ref{continue.prop}. For all $R<R_{0},$ let
$A_{t,R}$ be as in (\ref{a_def}). Assume that $f\in L^{2}(X)$ is in the domain
of $\Delta^{l}$ for some positive real number with $l>(3d^{2}-d)/4.$ Then for
each $x\in X,$ the function $L_{x,f}(F)$ given by%
\[
L_{x,f}(R)=(A_{t,R}f)(x)
\]
has a real-analytic extension, also denoted $L_{x,f},$ from $R\in(0,R_{0})$ to
$R\in(0,\infty).$ Furthermore, we have
\begin{subequations}
\begin{equation}
f(x)=\lim_{R\rightarrow\infty}L_{x,f}(R), \label{global.inv2}%
\end{equation}
with the limit being uniform in $x.$ In light of the original expression for
$A_{t,R},$ we may express (\ref{global.inv1}) informally as
\end{subequations}
\[
f(x)=\text{\textquotedblleft}\lim_{R\rightarrow\infty}{}%
\text{\textquotedblright}e^{t\left\vert \rho\right\vert ^{2}/2}\int_{T_{x}%
^{R}(X)}F(\exp_{x}iY)j_{x}^{\mathrm{c}}(Y)^{1/2}\frac{e^{-\left\vert
Y\right\vert ^{2}/2t}}{(2\pi t)^{d/2}}~dY,
\]
for $f$ in $Dom(\Delta^{l})$, with the limit being uniform in $x.$
\end{theorem}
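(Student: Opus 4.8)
The plan is to deduce this pointwise statement from the $L^2$-version (Theorem~\ref{global_l2.thm}) together with a uniform bound showing that the operators $\alpha_{t,R}(-\Delta)$, composed with a sufficiently negative power of $-\Delta$, map into a space of continuous functions on which evaluation is controlled. The starting point is that $A_{t,R}f = \alpha_{t,R}(-\Delta)f$ for $R<R_0$, and that $\alpha_{t,R}(\lambda)$ is a well-defined bounded function of $\lambda$ for \emph{every} $R>0$ by the remarks following Theorem~\ref{partial.thm}. So I first define $L_{x,f}(R)$ for all $R>0$ by declaring it to be the evaluation at $x$ of $\alpha_{t,R}(-\Delta)f$; the real-analytic extension claim will then follow once I show (i) that $\alpha_{t,R}(-\Delta)f$ is a continuous (indeed real-analytic) function of $x$ under the regularity hypothesis on $f$, and (ii) that $R\mapsto (\alpha_{t,R}(-\Delta)f)(x)$ is real-analytic for each fixed $x$.

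First I would make the spectral-theory estimate precise. Expand $f$ in eigenfunctions $\phi_n$ of $-\Delta$ with eigenvalues $\lambda_n\geq 0$, so $\alpha_{t,R}(-\Delta)f = \sum_n \alpha_{t,R}(\lambda_n)\,\langle f,\phi_n\rangle\,\phi_n$. Since $\alpha_{t,R}(\lambda)$ is bounded by a constant times $\exp(\sqrt\lambda R)\,e^{-t\lambda/2}$, which is bounded uniformly in $\lambda$ for each fixed $R$, this series converges in $L^2$; to get pointwise convergence and continuity in $x$ I need sup-norm control of $\phi_n$, which on a compact Riemannian manifold of dimension $d$ is the standard Weyl-type bound $\|\phi_n\|_\infty \leq C(1+\lambda_n)^{(d-1)/4}$ (or a cruder Sobolev bound $\|\phi_n\|_\infty \leq C(1+\lambda_n)^{d/4}$), combined with the growth of the number of eigenvalues below a given level, $N(\lambda)\sim c\,\lambda^{d/2}$. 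Writing $f\in \mathrm{Dom}(\Delta^l)$ means $\sum_n \lambda_n^{2l}|\langle f,\phi_n\rangle|^2<\infty$, so by Cauchy--Schwarz the series $\sum_n |\alpha_{t,R}(\lambda_n)|\,|\langle f,\phi_n\rangle|\,\|\phi_n\|_\infty$ is dominated by $\big(\sum_n \lambda_n^{2l}|\langle f,\phi_n\rangle|^2\big)^{1/2}\big(\sum_n \lambda_n^{-2l}|\alpha_{t,R}(\lambda_n)|^2\|\phi_n\|_\infty^2\big)^{1/2}$, and the second factor is finite provided $\lambda^{-2l}\cdot \lambda^{(d-1)/2}$ (or $\lambda^{d/2}$) times the eigenvalue density is summable — this is where the hypothesis $l>(3d^2-d)/4$ enters, chosen generously so that even the crude bounds suffice. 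The key point is that this bound is \emph{locally uniform in $R$}: for $R$ in a compact subinterval of $(0,\infty)$, $|\alpha_{t,R}(\lambda)|$ is bounded by a fixed constant, so the dominating series is independent of $R$. This simultaneously gives the uniform-in-$x$ absolute convergence, continuity of each summand in $x$ (each $\phi_n$ is real-analytic, being an eigenfunction of an elliptic operator with analytic coefficients), hence continuity — indeed real-analyticity — of $L_{x,f}(R)$ in $x$, and, since each $R\mapsto \alpha_{t,R}(\lambda_n)$ is real-analytic (the integral in~\eqref{alpha.int} is visibly real-analytic in $R$, being the integral of an analytic integrand over an expanding ball, and one checks its $R$-derivatives are controlled), real-analyticity of $L_{x,f}(R)$ in $R$ by uniform convergence of the differentiated series on compact $R$-intervals.

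Having established the real-analytic extension $L_{x,f}$ on $(0,\infty)$, the limit formula~\eqref{global.inv2} follows from~\eqref{alpha.lim}: since $\alpha_{t,R}(\lambda)\to 1$ as $R\to\infty$ for every $\lambda$, and since $|\alpha_{t,R}(\lambda_n)-1|$ can be bounded uniformly in $R\geq R_1$ (for some fixed large $R_1$) by an expression that, after multiplication by $\lambda_n^{-l}\|\phi_n\|_\infty$, is summable, dominated convergence in the series gives $L_{x,f}(R) = \sum_n \alpha_{t,R}(\lambda_n)\langle f,\phi_n\rangle\phi_n(x) \to \sum_n \langle f,\phi_n\rangle\phi_n(x) = f(x)$, with the convergence uniform in $x$ because the dominating series is $x$-independent. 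Finally I identify this with the informal integral expression: for $R<R_0$ the equality $L_{x,f}(R)=(A_{t,R}f)(x)$ with $A_{t,R}$ given by~\eqref{a_def} is Theorem~\ref{partial.thm} together with Proposition~\ref{continue.prop} (which guarantees the integral converges and the evaluation makes sense), and the ``$\lim_{R\to\infty}$'' in the displayed formula is by definition the limit of the real-analytic continuation just constructed.

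The main obstacle I anticipate is the bookkeeping in the spectral estimate: getting a clean dominating series that works \emph{uniformly for $R$ in compact subintervals of $(0,\infty)$ and for its $R$-derivatives}, so that term-by-term differentiation is justified and real-analyticity in $R$ is rigorous rather than merely formal. The growth bound $|\alpha_{t,R}(\lambda)| \leq C_R\,e^{\sqrt\lambda\,R}e^{-t\lambda/2}$ is generous enough that it, and its $R$-derivatives (which bring down extra polynomial-in-$\sqrt\lambda$ factors from differentiating $e^{\sqrt\lambda R}$ under the integral sign), are all absorbed by $e^{-t\lambda/2}$ and then by the power $\lambda^{-l}$ with the stated $l$; but making this airtight — in particular handling the small band $\lambda\in[0,|\rho|^2)$ where $\sqrt{\lambda-|\rho|^2}$ is imaginary and $\alpha_{t,R}$ must be argued real and smooth in $R$ via the even symmetry of the domain noted after Theorem~\ref{partial.thm} — is the delicate part. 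Everything else is a standard application of spectral theory on a compact manifold plus the elliptic-regularity fact that eigenfunctions are real-analytic.
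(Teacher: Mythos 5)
Your proposal follows the paper's strategy closely: eigenfunction expansion, the Sogge--Zelditch sup-norm bound $\|\phi_n\|_\infty \lesssim \lambda_n^{(d-1)/4}$, Weyl's law, the decay of coefficients from $f \in \mathrm{Dom}(\Delta^l)$, absolute-and-uniform convergence by a Weierstrass-type argument, and dominated convergence using $\alpha_{t,R}\to 1$. Your use of Cauchy--Schwarz in place of the paper's pointwise coefficient bound $|a_n|\lesssim n^{-2l/d}$ is a harmless (in fact slightly sharper) variant, and your handling of the $R\to\infty$ limit is essentially the paper's, though the paper nails down the uniform-in-$R$ dominating bound explicitly via $0\le\alpha_{t,R}(\lambda)\le 1$ for $\lambda\ge|\rho|^2$ (monotonicity in $R$), which is cleaner than your ``bounded uniformly in $R\ge R_1$'' phrasing.

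The one genuine gap is exactly the step you flag as delicate: you attempt to establish real-analyticity of $R\mapsto L_{x,f}(R)$ by differentiating the series $\sum_n\alpha_{t,R}(\lambda_n)a_n\psi_n(x)$ term by term and invoking uniform convergence of the differentiated series. That yields $C^k$ smoothness for each $k$, but uniform convergence of derivatives of any finite order does not by itself give real-analyticity of the sum --- you would need uniform Cauchy-type estimates on all derivatives simultaneously. The paper avoids this entirely: after a Fubini interchange (justified by the same summability estimate you already have), it writes
\[
L_{x,f}(R)=\int_{\substack{Y\in\mathbb{R}^{d}\\|Y|\le R}}\Bigl[\sum_{n}e^{-t\lambda_n/2}e^{t|\rho|^2/2}\exp\bigl(\sqrt{\lambda_n-|\rho|^2}\,y_1\bigr)a_n\psi_n(x)\Bigr]\frac{e^{-|Y|^2/2t}}{(2\pi t)^{d/2}}\,dY,
\]
observes that the bracketed expression is an entire function of $y_1$ (via Fubini and Morera), and concludes that the integral of a fixed real-analytic function of $Y$ over an expanding ball of radius $R$ is real-analytic in $R$. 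You should replace the term-by-term differentiation argument with this Fubini-plus-Morera route; the rest of your outline then goes through essentially as written.
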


These inversion formulas are as parallel as possible to the inversion formula
(\ref{group.inv}) in the dual compact group case. Specifically, the inversion
formulas above are obtained by \textquotedblleft dualizing\textquotedblright%
\ (\ref{group.inv}) (changing $j_{x}^{\mathrm{nc}} $ to $j_{x}^{\mathrm{c}}$
and $e^{-t\left\vert \rho\right\vert ^{2}/2}$ to $e^{t\left\vert
\rho\right\vert ^{2}/2}$) and inserting an analytic continuation in $R,$ which
is unnecessary in the compact group case.

\begin{proof}
(Proof of Proposition \ref{continue.prop}) Let $k_{t}(\cdot,\cdot)$ denote the
heat kernel for $X.$ A result of Nelson \cite[Thm. 8]{N} shows that for any
fixed positive time $t,$ $k_{t}$ is a real-analytic function on $X\times X.$
As a result, $k_{t}$ will have an analytic continuation, also denoted $k_{t},$
to $T^{R_{0}}(X)\times T^{R_{0}}(X)$ for some sufficiently small $R_{0}.$ Then
the function defined by%
\[
z\rightarrow\int_{X}k_{t}(z,y)f(y)~dy,\quad z\in T^{R_{0}}(X),
\]
is the desired holomorphic extension of $F:=e^{t\Delta/2}f.$ The desired
properties of the pointwise evaluation functional are then easy to read off.

Now, Nelson's result leaves open the possibility that the radius $R_{0}$ could
depend on $t,$ which is harmless in our case, since we work with one fixed $t$
throughout. Nevertheless, using a result of Guillemin and Stenzel \cite[Thm.
5.2]{GStenz2}, it is not hard to see that $R_{0}$ can be chosen to be
independent of $t.$
\end{proof}

\begin{proof}
(Proof of Theorem \ref{partial.thm}) Now that $A_{t,R}$ is known to be a
bounded operator, we can compute it by evaluating it on an orthonormal basis
for $L^{2}(X)$ consisting of eigenfunctions of the Laplacian. So let $\phi$ be
an eigenfunction of $-\Delta$ on $X$ with eigenvalue $\lambda\geq0.$ Our goal
is to show that $A_{t,R}(\phi)$ is a certain constant multiple of $\phi,$ with
the constant depending only on $\lambda.$ This will show that $A_{t,R}$ is a
specific function of the Laplacian.

Applying $e^{t\Delta/2}$ to $\phi$ gives $e^{-t\lambda/2}\phi.$ This means
that we want to compute%
\[
e^{-t\lambda/2}\int_{T_{x}^{R}(X)}\phi(\exp_{x}iY)\nu_{t,x}^{\mathrm{c}%
}(Y)j_{x}^{\mathrm{c}}(Y)~dY,
\]
for a fixed $x$ in $X=\Gamma\backslash G/K.$ Let $K_{x}$ denote the identity
component of the group of \textit{local} isometries of $X$ that fix $x.$ Then
the key point is that the function $\nu_{t,x}^{\mathrm{c}}(Y)j_{x}%
^{\mathrm{c}}(Y)$ is invariant under the action of $K_{x}$ on $T_{x}(X).$
(This invariance can be seen by observing that under our identification of
$T_{x}(X)$ with $\mathfrak{p},$ $j_{x}^{\mathrm{c}}$ is an Ad-$K$-invariant
function on $\mathfrak{p}.$) Thus averaging the function $Y\longmapsto
\phi(\exp_{x}iY)$ over the action of $K_{x}$ has no effect on the integral.
This averaging cancels out many of the singularities in $\phi(\exp_{x}iY).$

Let $\tilde{x}$ be a preimage of $x$ in $G/K$ and let $\Phi$ be the lift of
$\phi$ to $G/K.$ Let $\Phi^{(\tilde{x})}$ denote the radialization of $\Phi$
about $\tilde{x},$ that is, the average of $\Phi$ over the action of
$K_{\tilde{x}},$ where $K_{\tilde{x}}$ is the stabilizer of $\tilde{x}$ in
$G.$ Because $\nu_{t,x}^{\mathrm{c}}(Y)j_{x}^{\mathrm{c}}(Y)$ is invariant
under the action of $K_{x},$ we have
\begin{align}
&  e^{-t\lambda/2}\int_{T_{x}^{R}(X)}\phi(\exp_{x}iY)\nu_{t,x}^{\mathrm{c}%
}(Y)j_{x}^{\mathrm{c}}(Y)~dY\nonumber\\
&  =e^{-t\lambda/2}\int_{T_{\tilde{x}}^{R}(G/K)}\Phi^{(\tilde{x})}%
(\exp_{\tilde{x}}iY)\nu_{t,x}^{\mathrm{c}}(Y)j_{x}^{\mathrm{c}}%
(Y)~dY\nonumber\\
&  =e^{-t\lambda/2}e^{t\left\vert \rho\right\vert ^{2}/2}\int_{T_{\tilde{x}%
}^{R}(G/K)}\Phi^{(\tilde{x})}(\exp_{\tilde{x}}iY)j_{x}^{\mathrm{c}}%
(Y)^{1/2}\frac{e^{-\left\vert Y\right\vert ^{2}/2t}}{(2\pi\tau)^{n/2}}~dY.
\label{atr.int}%
\end{align}

Now, $\Phi^{(\tilde{x})}$ is a $K_{\tilde{x}}$-invariant eigenfunction for
$\Delta$ on $G/K$ with eigenvalue $\lambda.$ We now use an \textquotedblleft
intertwining formula\textquotedblright\ that relates the Laplacian on $G/K$ to
the Euclidean Laplacian on $\mathfrak{p},$ when applied to $K$-invariant
functions. See the proof of Theorem 2 in \cite{HM3} Proposition V.5.1 in
\cite{He3} and the calculations in the complex case on p. 484. The
intertwining formula tells us that the function%
\[
Y\rightarrow\Phi^{(\tilde{x})}(\exp_{\tilde{x}}Y)j_{x}(Y)^{1/2}%
\]
is an eigenfunction for the Euclidean Laplacian on $\mathfrak{p}$ with
eigenvalue $-(\lambda-\left\vert \rho\right\vert ^{2}).$ Therefore, if we
analytically continue and replace $Y$ by $iY$, we conclude that the function%
\[
\Psi(Y):=\Phi^{(\tilde{x})}(\exp_{\tilde{x}}iY)j_{x}^{\mathrm{c}}(Y)^{1/2}%
\]
is an eigenfunction for $\Delta$ on $\mathfrak{p}$ with eigenvalue
$(\lambda-\left\vert \rho\right\vert ^{2}).$ (Recall that $j_{x}^{\mathrm{c}%
}(Y)=j_{x}(iY).$)

We now make use of the following elementary result, which was Lemma 5 of
\cite{HM4}. (We have replaced $2R$ by $R$ in the statement of \cite[Lem.
5]{HM4}.)

\begin{lemma}
\label{euclideanint.lem}Let $\Psi$ be a smooth function on the ball
$B(R_{0},0)$ in $\mathbb{R}^{d}$ satisfying $\Delta\Psi=\sigma\Psi$ for some
constant $\sigma\in\mathbb{R},$ where $\Delta$ is the Euclidean Laplacian. Let
$\beta$ be a bounded, measurable, rotationally invariant function on
$B(R_{0},0).$ Then for all $R<R_{0}$ we have%
\begin{equation}
\int_{\left\vert Y\right\vert \leq R}\Psi(Y)\beta(Y)~dY=\Psi(0)\int%
_{\left\vert Y\right\vert \leq R}e^{\sqrt{\sigma}y_{1}}\beta(Y)~dY.
\label{euclidean.eq}%
\end{equation}
Here $Y=(y_{1},\ldots,y_{d})$ and $\sqrt{\sigma}$ is either of the two square
roots of $\sigma.$
\end{lemma}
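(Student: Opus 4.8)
The plan is to reduce the $d$-dimensional integral to a one-dimensional radial problem and then to identify the resulting radial integral with a classical spherical-average formula for solutions of the Helmholtz/modified-Helmholtz equation. First I would pass to polar coordinates, writing $Y = r\omega$ with $r \in [0,R]$ and $\omega \in S^{d-1}$. Since $\beta$ is rotationally invariant, $\beta(Y) = b(r)$ for a bounded measurable function $b$ on $[0,R_0)$, so
\[
\int_{|Y|\le R}\Psi(Y)\beta(Y)\,dY = \int_0^R b(r)\,r^{d-1}\Bigl(\int_{S^{d-1}}\Psi(r\omega)\,d\omega\Bigr)\,dr .
\]
Thus everything comes down to computing the spherical average $M_\Psi(r) := \fint_{S^{d-1}}\Psi(r\omega)\,d\omega$ (up to the normalization constant $\omega_{d-1} = |S^{d-1}|$). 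The same reduction applies to the right-hand side, with $\Psi$ replaced by $Y\mapsto e^{\sqrt{\sigma}\,y_1}$; note that although $y_1$ is not rotationally symmetric, its spherical average \emph{is} a radial function of $r$, so the right-hand side also reduces to $\int_0^R b(r)r^{d-1}\,\omega_{d-1} M_{e^{\sqrt{\sigma}\,y_1}}(r)\,dr$. Hence it suffices to prove the pointwise identity $M_\Psi(r) = \Psi(0)\,M_{e^{\sqrt{\sigma}\,y_1}}(r)$ for all $r<R_0$.

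The key step is a Darboux-type ODE argument for the spherical mean. For \emph{any} smooth solution of $\Delta u = \sigma u$ on $B(R_0,0)$, the spherical mean $M_u(r)$ satisfies the radial (Euler--Poisson--Darboux) equation
\[
M_u''(r) + \frac{d-1}{r}\,M_u'(r) = \sigma\,M_u(r),
\]
which follows by differentiating under the integral sign and using that $\fint_{S^{d-1}}(\Delta u)(r\omega)\,d\omega = \bigl(\partial_r^2 + \tfrac{d-1}{r}\partial_r\bigr)M_u(r)$ (the standard computation expressing the radial part of the Laplacian acting on spherical means). This is a second-order linear ODE with a regular singular point at $r=0$; its unique solution that is smooth (equivalently bounded) at the origin is determined up to a multiplicative constant, and that constant is pinned down by $M_u(0) = u(0)$. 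Applying this to $u = \Psi$ and to $u = e^{\sqrt{\sigma}\,y_1}$ — the latter is manifestly a solution of $\Delta u = \sigma u$, since $\partial_1^2 e^{\sqrt{\sigma}\,y_1} = \sigma\, e^{\sqrt{\sigma}\,y_1}$ and all other second derivatives vanish, and it is smooth at $0$ with value $1$ there — we conclude $M_\Psi(r) = \Psi(0)\,M_{e^{\sqrt{\sigma}\,y_1}}(r)$, which is exactly what we need. (When $\sigma<0$ one takes $\sqrt{\sigma}$ imaginary; since $e^{\sqrt{\sigma}\,y_1}+e^{-\sqrt{\sigma}\,y_1}$ is the relevant even combination appearing after averaging, or equivalently since the ODE and initial data are unchanged under $\sqrt{\sigma}\to-\sqrt{\sigma}$, the formula is independent of the choice of square root and real-valued, consistent with the remark following Theorem \ref{partial.thm}.)

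I expect the main obstacle to be purely technical rather than conceptual: justifying the differentiation under the integral sign and the behavior at the regular singular point $r=0$ with the stated (merely smooth, not analytic) hypothesis on $\Psi$, and making the uniqueness statement for the EPD equation clean. One way to sidestep the singular-point analysis entirely is to avoid ODEs and instead integrate the identity $\Delta\bigl(\Psi(Y)\,v(Y) - \Psi(0)\,u(Y)\,v(Y)\bigr)$-type Green's-identity expressions, or more directly to expand $\Psi$ in the eigenfunctions $\{e^{\sqrt{\sigma}\,\langle Y,\xi\rangle} : |\xi|=1\}$ of the Helmholtz operator and note that the spherical average of each such plane wave about the origin equals $\fint_{S^{d-1}} e^{\sqrt{\sigma}\,r\langle\omega,\xi\rangle}\,d\omega$, which depends on $\xi$ only through $|\xi|=1$ and hence coincides with $M_{e^{\sqrt{\sigma}\,y_1}}(r)$; then $M_\Psi(r)$ is this common value times the total "mass" of the expansion, which is recovered as $\Psi(0)$. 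Either route gives \eqref{euclidean.eq}; I would present the spherical-mean ODE argument, as it is the shortest and is the standard tool, and since in the application $\Psi$ is in fact real-analytic the regularity at $r=0$ causes no difficulty.
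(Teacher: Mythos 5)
Your proof is correct. Note, though, that the paper does not prove this lemma itself; it cites it verbatim as Lemma~5 of \cite{HM4} (with $2R$ replaced by $R$). Your spherical-mean argument is the canonical route to such statements and is almost certainly the one used there, so this is not really a different approach so much as a reconstruction of the cited one. The reduction is right: radiality of $\beta$ lets you pass to polar coordinates, after which it suffices to prove the pointwise identity $M_\Psi(r)=\Psi(0)\,M_{e^{\sqrt{\sigma}y_1}}(r)$ for the (normalized) spherical means, and this follows because both sides are bounded solutions of the Euler--Poisson--Darboux equation $M''+\tfrac{d-1}{r}M'=\sigma M$ with the same value at $r=0$. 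One small point worth tightening: the uniqueness statement for the EPD equation needs slightly different wording in low dimension. For $d\geq 2$ the second Frobenius solution ($r^{2-d}$, or $\log r$ when $d=2$) is unbounded at the origin, so boundedness pins down the solution up to scale; for $d=1$ there is no singular point at all, and one should instead invoke that a spherical mean is an even function of $r$, so $M_u'(0)=0$ supplies the missing initial condition. In fact the spherical mean of a smooth function on $B(R_0,0)$ extends to a smooth \emph{even} function on $(-R_0,R_0)$, which covers all $d$ uniformly. Your remark about the independence of the choice of square root is also exactly the right observation, and matches the parenthetical comment in the paper after Theorem~\ref{partial.thm}.
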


Lemma \ref{euclideanint.lem} therefore tells us that the last line in
(\ref{atr.int})\ is equal to%
\[
\Phi^{(\tilde{x})}(\tilde{x})e^{-t\lambda/2}e^{t\left\vert \rho\right\vert
^{2}/2}\int_{\substack{Y\in\mathbb{R}^{d} \\\left\vert Y\right\vert \leq
R}}\exp\left(  \sqrt{\lambda-\left\vert \rho\right\vert ^{2}}~y_{1}\right)
\frac{e^{-\left\vert Y\right\vert ^{2}/2t}}{(2\pi\tau)^{d/2}}~dY.
\]
Since $\Phi^{(\tilde{x})}(\tilde{x})=\Phi(\tilde{x})=\phi(x)$, this
establishes that $A_{t,R}\phi=\alpha_{t,R}(-\Delta)\phi.$ Since $A_{t,R}$ is
known to be bounded and since there exists an orthonormal basis for $L^{2}(X)$
consisting of eigenfunctions of $-\Delta,$ the partial inversion formula follows.
\end{proof}

We now turn to the proof of the global inversion formula, in two versions.
Ultimately, the global inversion formula derives from the partial inversion
formula (Theorem \ref{partial.thm}) together with the observation that
$\lim_{R\rightarrow\infty}\alpha_{t,R}(\lambda)=1$ (see (\ref{alpha.lim})).

\begin{proof}
(Proof of Theorem \ref{global_l2.thm}) For all $R>0,$ we define $A_{t,R}$ to
be $\alpha_{t,R}(-\Delta),$ where $\alpha_{t,R}$ is defined by
(\ref{alpha.int}). The partial inversion formula (Theorem \ref{partial.thm})
tells us that for $R<R_{0},$ $A_{t,R}$ coincides with the operator defined in
(\ref{a_def}). We need to establish, then, that the operator $\alpha
_{t,R}(-\Delta)$ is weakly analytic as a function of $R$ for fixed $t.$ We
choose an orthonormal basis $\{\psi_{n}\}$ for $L^{2}(X)$ consisting of
eigenvectors for $-\Delta,$ with corresponding eigenvalues $\lambda_{n}.$
Since $-\Delta$ has non-negative discrete spectrum, there is some $N$ with
$\lambda_{n}\geq\left\vert \rho\right\vert ^{2}$ for all $n\geq N.$ Given
$f,g\in L^{2}(X),$ we write $f=\sum a_{n}\psi_{n}$ and $g=\sum b_{n}\psi_{n}.$
Then%
\[
\left\langle f,\alpha_{t,R}(-\Delta)g\right\rangle _{L^{2}(X)}=\sum
_{n=1}^{\infty}\overline{a_{n}}b_{n}\alpha_{t,R}(\lambda_{n}).
\]

We use the integral expression (\ref{alpha.int}) for $\alpha_{t,R}$ and we
wish to interchange the integral with the sum over $n.$ To do this we split
off the first $N$ terms and we want to show that Fubini's Theorem applies to
the remaining infinite sum. Note that the exponential in the definition of
$\alpha_{t,R}$ is positive provided that $\lambda_{n}\geq\left\vert
\rho\right\vert ^{2}.$ Thus, if we the integral over $\left\vert Y\right\vert
\leq R$ is bounded by the integral over all of $\mathbb{R}^{d},$ which we have
already remarked is an easy Gaussian integral (see (\ref{alpha.lim})). Thus,
if we put absolute values inside and then interchange the sum and integral, we
get an expression that is bounded by%
\begin{align}
&  \sum_{n=N}^{\infty}\left\vert a_{n}\right\vert \left\vert b_{n}\right\vert
e^{-t\lambda_{n}/2}e^{t\left\vert \rho\right\vert ^{2}/2}\int_{\mathbb{R}^{d}%
}\exp\left(  \sqrt{\lambda_{n}-\left\vert \rho\right\vert ^{2}}~y_{1}\right)
\frac{e^{-\left\vert Y\right\vert ^{2}/2t}}{(2\pi t)^{d/2}}~dY\nonumber\\
&  =\sum_{n=N}^{\infty}\left\vert a_{n}\right\vert \left\vert b_{n}\right\vert
\leq\left\Vert f\right\Vert _{L^{2}(X)}\left\Vert g\right\Vert _{L^{2}%
(X)}<\infty. \label{fubini1}%
\end{align}

We may therefore write%
\begin{align}
&  \left\langle f,\alpha_{t,R}(-\Delta)g\right\rangle _{L^{2}(X)}\nonumber\\
&  =\int_{\substack{Y\in\mathbb{R}^{d} \\\left\vert Y\right\vert \leq
R}}\left[  \sum_{n=1}^{\infty}\overline{a_{n}}b_{n}e^{-t\lambda_{n}%
/2}e^{t\left\vert \rho\right\vert ^{2}/2}\exp\left(  \sqrt{\lambda
_{n}-\left\vert \rho\right\vert ^{2}}~y_{1}\right)  \right]  \frac
{e^{-\left\vert Y\right\vert ^{2}/2t}}{(2\pi t)^{d/2}}~dY. \label{weak}%
\end{align}
It is not hard to show, using Fubini's and Morera's Theorems, that the
expression in square brackets admits an entire holomorphic extension in
$y_{1},$ given by the same formula. From this, it then follows easily that the
integral on the right-hand side of (\ref{weak}) is a real-analytic function of
$R.$ This shows that the operator $\alpha_{t,R}(-\Delta)$ is weakly analytic
as a function of $R,$ which is therefore (in light of Theorem
\ref{partial.thm}) the desired weakly analytic extension of $A_{t,R}. $

Now that we know that the weakly analytic extension of $A_{t,R}$ is given by
$\alpha_{t,R}(-\Delta),$ we need to show that $\alpha_{t,R}(-\Delta)f$ tends
to $f$ in the norm topology of $L^{2}(X),$ for any $f\in L^{2}(X).$ As above,
write $f=\sum a_{n}\psi_{n},$ so that $\alpha_{t,R}(-\Delta)f=\sum
_{n=1}^{\infty}a_{n}\alpha_{t,R}(\lambda_{n})\psi_{n},$ because $\alpha
_{t,R}(-\Delta)$ is bounded. (In both cases, convergence is in $L^{2}(X).$)
Then%
\begin{equation}
\left\Vert f-\alpha_{t,R}(-\Delta)f\right\Vert ^{2}=\sum_{n=1}^{N}%
(1-\alpha_{t,R}(\lambda_{n}))^{2}\left\vert a_{n}\right\vert ^{2}+\sum
_{n=N+1}^{\infty}(1-\alpha_{t,R}(\lambda_{n}))^{2}\left\vert a_{n}\right\vert
^{2}, \label{norm.sum}%
\end{equation}
where again $\lambda_{n}\geq\left\vert \rho\right\vert ^{2}$ for $n>N.$ From
(\ref{alpha.int}), we can see that $\alpha_{t,R}(\lambda)$ is non-negative and
monotone in $R$ for fixed $t$ and $\lambda,$ provided that $\lambda
\geq\left\vert \rho\right\vert ^{2}.$ Since $\lim_{R\rightarrow\infty}%
\alpha_{t,R}(\lambda)=1$ (see (\ref{alpha.lim})) this means that
$0<\alpha_{t,R}(\lambda)\leq1$ for $\lambda\geq\left\vert \rho\right\vert
^{2}.$ Thus Dominated Convergence shows that the second term on the right-hand
side of (\ref{norm.sum}) tends to zero as $R$ tends to infinity. The first
term also tends to zero by (\ref{alpha.lim}), since it is a finite sum. Thus
the left-hand side of (\ref{norm.sum}) tends to zero as $R$ tends to infinity
(with $t$ fixed), establishing the $L^{2}$ form of the global inversion formula.
\end{proof}

We turn now to the pointwise version of the global inversion formula.

\begin{proof}
(Proof of Theorem \ref{global_ptwise.thm}) As in the previous proof, write
$f=\sum_{n=1}^{\infty}a_{n}\psi_{n},$ with convergence in $L^{2}.$ We now
assume that the eigenvectors $\psi_{n}$ are ordered so that the corresponding
eigenvalues are nondecreasing with $n.$ According to Weyl's Law, $\lambda_{n}$
behaves asymptotically like a constant times $n^{2/d}$ as $n$ tends to
infinity, where $d=\dim X,$ as usual. It is also known (e.g., \cite{SZ} and
the references therein) that there is a constant $C,$ depending only on the
choice of $X,$ such that if $\phi$ is an eigenfunction of the Laplacian with
eigenvalue $\lambda$ and normalized to have $L^{2}$ norm 1, then%
\[
\left\Vert \phi\right\Vert _{L^{\infty}}\leq C_{1}\lambda^{(d-1)/4}.
\]
Thus,
\begin{equation}
\left\Vert \psi_{n}\right\Vert _{L^{\infty}}\leq C_{2}(n^{2/d})^{(d-1)/4}%
=C_{2}n^{(d-1)/2d}. \label{sup.est}%
\end{equation}

On the other hand, if $f$ is in the domain of the $l^{\mathrm{th}}$ power of
$\Delta,$ then%
\[
\sum_{n=1}^{\infty}\left\vert a_{n}\right\vert ^{2}\lambda_{n}^{2l}<\infty,
\]
which implies (using Weyl's Law again) that $\left\vert a_{n}\right\vert \leq
Cn^{-2l/d}.$ If, then, $l$ is large enough that%
\[
\varepsilon:=\frac{2l}{d}-\frac{d-1}{2d}-1>0
\]
(which is equivalent to $l>(3d^{2}-d)/4$) we will have%
\begin{align}
\sum_{n=1}^{\infty}\left\vert a_{n}\right\vert \sup\left\vert \psi
_{n}\right\vert  &  \leq C_{3}\sum_{n=1}^{\infty}n^{-2l/d}n^{(d-1)/2d}%
\nonumber\\
&  =C_{3}\sum_{n=1}^{\infty}n^{-(1+\varepsilon)}<\infty. \label{m.series}%
\end{align}
Thus by the Weierstrass $M$-test, the series $\sum a_{n}\psi_{n}$ converges
uniformly as well as in $L^{2}$ to $f.$

Meanwhile, $\alpha_{t,R}(-\Delta)f=\sum_{n}\alpha_{t,R}(\lambda_{n})a_{n}%
\psi_{n},$ with convergence in $L^{2}.$ Since $0\leq\alpha_{t,R}(\lambda
_{n})\leq1$ for $n>N,$ this series also converges uniformly (to $\alpha
_{t,R}(-\Delta)f$). We now plug in the integral formula (\ref{alpha.int}) for
$\alpha_{t,R}$ and we wish to interchange (for each fixed $x$) the sum over
$n$ in $\sum_{n}\alpha_{t,R}(\lambda_{n})a_{n}\psi_{n}$ with the integral in
(\ref{alpha.int}). To do this, we again split off the terms with $n\leq N$ and
argue as in (\ref{fubini1}) for the applicability of Fubini's Theorem in the
remaining terms, substituting the convergence result (\ref{m.series}) for
$\sum\left\vert a_{n}\right\vert \left\vert b_{n}\right\vert \,<\infty.$

We obtain, then,
\begin{align*}
&  \left(  \alpha_{t,R}(-\Delta)f\right)  (x)\\
&  =\sum_{n=1}^{\infty}\alpha_{t,R}(\lambda_{n})a_{n}\psi_{n}(x)\\
&  =\int_{\substack{Y\in\mathbb{R}^{d} \\\left\vert Y\right\vert \leq
R}}\left[  \sum_{n=1}^{\infty}e^{-t\lambda_{n}/2}e^{t\left\vert \rho
\right\vert ^{2}/2}\exp\left(  \sqrt{\lambda_{n}-\left\vert \rho\right\vert
^{2}}~y_{1}\right)  a_{n}\psi_{n}(x)\right]  \frac{e^{-\left\vert Y\right\vert
^{2}/2t}}{(2\pi t)^{d/2}}~dY
\end{align*}
As in the previous proof, the expression in square brackets is an entire
function of $y_{1}$ and the whole integral is a real-analytic function of $R.
$

Meanwhile%
\begin{equation}
f(x)-\left(  \alpha_{t,R}(-\Delta)f\right)  (x)=\sum_{n=1}^{N}(1-\alpha
_{t,R}(\lambda_{n}))a_{n}\psi_{n}(x)+\sum_{n=N+1}^{\infty}(1-\alpha
_{t,R}(\lambda_{n}))a_{n}\psi_{n}(x). \label{pointwise.sum}%
\end{equation}

Because $\alpha_{t,R}(\lambda)\rightarrow1$ as $R\rightarrow\infty,$ the first
term on the right-hand side of (\ref{pointwise.sum}) tends to zero uniformly
as $R$ tends to infinity. Since $0\leq\alpha_{t,R}(\lambda_{n})\leq1$ for
$n>N,$ the absolute value of the second term on the right-hand side of
(\ref{pointwise.sum}) is bounded by%
\begin{equation}
\sum_{n=N+1}^{\infty}(1-\alpha_{t,R}(\lambda_{n}))\left\vert a_{n}\right\vert
\sup\left\vert \psi_{n}\right\vert , \label{bound}%
\end{equation}
independently of $x.$ This expression tends to zero by Dominated Convergence,
in light of (\ref{m.series}). This establishes the desired uniform pointwise
convergence result.
\end{proof}

\section{The isometry formula}

\subsection{Strategy for the isometry formula}

We continue to assume that $G/K$ is a noncompact symmetric space of the
complex type (i.e., $G$ is complex) and that $X=\Gamma\backslash G/K$ is a
compact quotient of $G/K$ of the sort described in Section \ref{setup.sec}.

To obtain the isometry formula for $X,$ we will write (heuristically)%
\begin{align*}
\left\langle f,f\right\rangle _{L^{2}(X)}  &  =\left\langle e^{-t\Delta
/2}e^{t\Delta/2}f,e^{-t\Delta/2}e^{t\Delta/2}f\right\rangle _{L^{2}(X)}\\
&  =\left\langle F,e^{-t\Delta}F\right\rangle _{L^{2}(X)}.
\end{align*}
Note that to compute $e^{-t\Delta}F$ (where $F=e^{t\Delta/2}f$), we want to
apply the backward heat operator for time $2t$, rather than just for time $t.$
Reasoning as in the previous section, we may compute this backward heat
operator by the following integral%
\begin{equation}
e^{-t\Delta}F=\lim_{R\rightarrow\infty}e^{t\left\vert \rho\right\vert ^{2}%
}\int_{T_{x}^{2R}(X)}F(\exp_{x}iY)j_{x}^{\mathrm{c}}(Y)^{1/2}\frac
{e^{-\left\vert Y\right\vert ^{2}/4t}}{(4\pi t)^{d/2}}~dY.
\label{inv.heuristic}%
\end{equation}
Here it is convenient to integrate over a ball of radius $2R$ rather than
radius $R,$ simply to avoid a factor of 2 later on. Heuristically, then, we
should have%
\[
\left\langle f,f\right\rangle _{L^{2}(X)}=\lim_{R\rightarrow\infty
}e^{t\left\vert \rho\right\vert ^{2}}\int_{X}\overline{F(x)}\int_{T_{x}%
^{2R}(X)}F(\exp_{x}iY)j_{x}^{\mathrm{c}}(Y)^{1/2}\frac{e^{-\left\vert
Y\right\vert ^{2}/4t}}{(4\pi t)^{d/2}}~dY~dx.
\]

The crucial next step is a \textquotedblleft holomorphic change of
variable,\textquotedblright\ which will show that (at least for small $R$)%
\begin{align}
&  e^{t\left\vert \rho\right\vert ^{2}}\int_{X}\overline{F(x)}\int_{T_{x}%
^{2R}(X)}F(\exp_{x}iY)j^{\mathrm{c}}(Y)^{1/2}\frac{e^{-\left\vert Y\right\vert
^{2}/4t}}{(4\pi t)^{d/2}}~dY\nonumber\\
&  =e^{t\left\vert \rho\right\vert ^{2}}\int_{X}\int_{T_{x}^{2R}(X)}%
\overline{F(\exp_{x}iY/2)}F(\exp_{x}iY/2)j^{\mathrm{c}}(Y)^{1/2}%
\frac{e^{-\left\vert Y\right\vert ^{2}/4t}}{(4\pi t)^{d/2}}~dY~dx.
\label{change1}%
\end{align}
(Compare Lemma 9 of \cite{H2} and Section 4 of \cite{St1} in the compact
case.) Making the change of variable $Y\rightarrow2Y$ (for sake of
convenience) we obtain a proposal for the form that the isometry theorem
should take:%
\[
\left\langle f,f\right\rangle _{L^{2}(X)}=\lim_{R\rightarrow\infty
}e^{t\left\vert \rho\right\vert ^{2}}\int_{X}\int_{T_{x}^{R}(X)}\left\vert
F(\exp_{x}iY)\right\vert ^{2}j_{x}^{\mathrm{c}}(2Y)\frac{e^{-\left\vert
Y\right\vert ^{2}/t}}{(\pi t)^{d/2}}~dY~dx.
\]
This formula is precisely analogous to what we obtained \cite[Thm. 7]{HM4} for
globally symmetric spaces of the complex type, and is as parallel as possible
to the isometry formula for the dual compact group case (Eq. (\ref{group.isom})).

\subsection{The holomorphic change of variable and the partial isometry
formula}

To proceed rigorously, we consider, for a fixed small value of $R,$ the
integral on the right-hand side of (\ref{inv.heuristic}). We evaluate this
integral by a simple modification of Theorem \ref{partial.thm}. Then we will
establish the holomorphic change of variable in (\ref{change1}), which will
lead to a rigorous version of the isometry formula.

\begin{theorem}
\label{partial_inv2.thm}Let $R_{0}$ be as in Proposition \ref{continue.prop}.
For all $R<R_{0}/2,$ let $B_{t,R}$ be the operator defined by%
\[
B_{t,R}(f)(x)=e^{t\left\vert \rho\right\vert ^{2}}\int_{T_{x}^{2R}(X)}%
F(\exp_{x}iY)j_{x}^{\mathrm{c}}(Y)^{1/2}\frac{e^{-\left\vert Y\right\vert
^{2}/4t}}{(4\pi t)^{d/2}}~dY,
\]
where $F:=e^{t\Delta/2}f.$ Then $B_{t,R}$ is a bounded operator on $L^{2}(X) $
and is given by%
\[
B_{t,R}=\beta_{t,R}(-\Delta),
\]
where $\beta_{t,R}:[0,\infty)\rightarrow\mathbb{R}$ is given by%
\begin{equation}
\beta_{t,R}(\lambda)=e^{-t\lambda/2}e^{t\left\vert \rho\right\vert ^{2}}%
\int_{\substack{Y\in\mathbb{R}^{d} \\\left\vert Y\right\vert \leq2R }%
}\exp\left(  \sqrt{\lambda-\left\vert \rho\right\vert ^{2}}~y_{1}\right)
\frac{e^{-\left\vert Y\right\vert ^{2}/4t}}{(4\pi t)^{d/2}}~dY.
\label{beta.def}%
\end{equation}
Here $\sqrt{\lambda-\left\vert \rho\right\vert ^{2}}$ is either of the two
square roots of $\lambda-\left\vert \rho\right\vert ^{2}.$
\end{theorem}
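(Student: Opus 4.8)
The plan is to repeat the proof of Theorem \ref{partial.thm} essentially verbatim, the only modifications being that the Gaussian is now at scale $2t$ rather than $t$ (the heat kernel $e^{-|Y|^2/4t}(4\pi t)^{-d/2}$ in place of $e^{-|Y|^2/2t}(2\pi t)^{-d/2}$), that the ball of integration has radius $2R$ rather than $R$, and that the prefactor is $e^{t|\rho|^2}$ rather than $e^{t|\rho|^2/2}$. First I would note that $B_{t,R}$ is a well-defined bounded operator on $L^2(X)$: the hypothesis $R < R_0/2$ is exactly what guarantees $T_x^{2R}(X) \subset T^{R_0}(X)$, so by Proposition \ref{continue.prop} the extension $F = e^{t\Delta/2}f$ is holomorphic on the region of integration and, for each $z$ there, $f \mapsto F(z)$ is a bounded functional on $L^2(X)$ with norm a locally bounded function of $z$; integrating this operator-valued function over the compact set $T_x^{2R}(X)$ against the finite measure $j_x^{\mathrm c}(Y)^{1/2} e^{-|Y|^2/4t}(4\pi t)^{-d/2}\,dY$ yields a bounded operator, which may therefore be computed on an orthonormal basis of $-\Delta$-eigenfunctions.

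So let $\phi$ be an eigenfunction of $-\Delta$ with eigenvalue $\lambda \ge 0$, so that $F = e^{-t\lambda/2}\phi$. I would then run the radialization step of Theorem \ref{partial.thm}: since $j_x^{\mathrm c}(Y)^{1/2}\,dY$ is invariant under the group $K_x$ of local isometries of $X$ fixing $x$, averaging $Y \mapsto \phi(\exp_x iY)$ over $K_x$ does not change the integral, and this average equals $\Phi^{(\tilde x)}(\exp_{\tilde x} iY)$, where $\Phi$ is the lift of $\phi$ to $G/K$ and $\Phi^{(\tilde x)}$ its radialization about a preimage $\tilde x$ of $x$. The intertwining formula (as in \cite{HM3} and Proposition V.5.1 of \cite{He3}) then shows that $\Psi(Y) := \Phi^{(\tilde x)}(\exp_{\tilde x} iY)\, j_x^{\mathrm c}(Y)^{1/2}$ is an eigenfunction of the Euclidean Laplacian on $\mathfrak{p}$ with eigenvalue $\lambda - |\rho|^2$.

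Finally I would apply Lemma \ref{euclideanint.lem} with $\sigma = \lambda - |\rho|^2$, rotationally invariant weight $\beta(Y) = e^{-|Y|^2/4t}(4\pi t)^{-d/2}$, and radius $2R < R_0$; this collapses $B_{t,R}\phi$ to $\Psi(0)$ times $e^{-t\lambda/2}e^{t|\rho|^2}\int_{|Y| \le 2R} e^{\sqrt{\lambda - |\rho|^2}\, y_1}\beta(Y)\,dY$. Since $\Psi(0) = \Phi^{(\tilde x)}(\tilde x) = \Phi(\tilde x) = \phi(x)$, this equals $\beta_{t,R}(\lambda)\,\phi(x)$, whence $B_{t,R} = \beta_{t,R}(-\Delta)$, since $B_{t,R}$ is bounded and agrees with $\beta_{t,R}(-\Delta)$ on an orthonormal eigenbasis. (That $\beta_{t,R}$ is real-valued in spite of the possibly imaginary square root, and bounded in $\lambda$, follows just as for $\alpha_{t,R}$: the ball is symmetric under $Y \mapsto -Y$, and the integral is bounded by a constant times $e^{2R\sqrt{\lambda}}$, dominated by the $e^{-t\lambda/2}$ out front.) I do not anticipate any substantive obstacle, since everything is parallel to Theorem \ref{partial.thm}; the only point needing care is the bookkeeping around the radius --- the restriction $R < R_0/2$ is precisely what keeps the enlarged ball inside the region where both Proposition \ref{continue.prop} and the hypotheses of Lemma \ref{euclideanint.lem} are available.
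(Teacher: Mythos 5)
Your proof is correct and follows the same approach as the paper, which dispatches Theorem \ref{partial_inv2.thm} in a single sentence by saying the argument is that of Theorem \ref{partial.thm} with $t$ replaced by $2t$ and $R$ by $2R$ in the appropriate places. You have simply carried out that substitution explicitly — correctly tracking that $2R < R_0$ is what keeps the domain of integration inside the region where Proposition \ref{continue.prop} and Lemma \ref{euclideanint.lem} apply — so there is nothing to add.
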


The proof of this is the same as the proof of Theorem \ref{partial.thm},
except that $t$ is replaced by $2t$ and $R$ by $2R$ in the appropriate places.
Note that if we let $R$ tend to infinity in the definition of $\beta_{t,R},$
we obtain an easily evaluated Gaussian integral, which gives%
\begin{equation}
\lim_{R\rightarrow\infty}\beta_{t,R}(\lambda)=e^{t\lambda/2}. \label{beta.lim}%
\end{equation}
This reflects the idea that $B_{t,R}f$ is an approximation to the backward
heat operator at time $2t,$ applied to the function $F:=e^{t\Delta/2}f.$ Note
that although the right-hand side of (\ref{beta.lim}) is an unbounded function
of $\lambda,$ $\beta_{t,R}$ is a bounded function of $\lambda$ for each fixed
finite value of $R,$ as is easily seen from (\ref{beta.def}).

As in Section \ref{setup.sec}, for each $x$ in $X=\Gamma\backslash G/K,$ we
can pick $\tilde{x}\in G/K$ mapping to $x$ under quotienting by the action of
$\Gamma.$ We can then pick $g\in G$ with $g\cdot x_{0}=\tilde{x},$ where
$x_{0}$ is the identity coset in $G/K.$ Having made these choices, we get an
identification of $T_{x}(X)$ with $T_{x_{0}}(G/K)=\mathfrak{p}.$ Two
identifications of this sort differ only by the action of $K$ on
$\mathfrak{p}.$ This means that if we have some $K$-invariant function on
$\mathfrak{p},$ we can transfer this function in an unambiguous way to each
$T_{x}(X).$

\begin{theorem}
[Holomorphic Change of Variable]\label{holo_change.thm}Let $F_{1}$ and $F_{2}$
be holomorphic functions on $T^{R_{0}}(X)$ for some $R_{0}>0.$ Let $\alpha$ be
a bounded, measurable, $K$-invariant function on $\mathfrak{p}^{R_{0}}.$ Then
for all $R<R_{0}/2$ we have%
\begin{align}
&  \int_{X}\overline{F_{1}(x)}\int_{T_{x}^{2R}(X)}F_{2}(\exp_{x}%
iY)\alpha(Y)~dY~dx\nonumber\\
&  =\int_{X}\int_{T_{x}^{2R}(X)}\overline{F_{1}(\exp_{x}(iY/2))}F_{2}(\exp
_{x}(iY/2))\alpha(Y)~dY~dx. \label{change.eq}%
\end{align}

\end{theorem}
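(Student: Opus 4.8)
The plan is to reduce the statement to a statement "upstairs" on $G/K$, where the geometry is homogeneous, and there to exhibit the identity as a consequence of a holomorphic change of variable on the complexified tangent space $\mathfrak{p}_{\mathbb C}$. First I would unfold the integral over $X=\Gamma\backslash G/K$ to an integral over a fundamental domain $D\subset G/K$: the outer integral $\int_X \overline{F_1(x)}(\cdots)dx$ becomes $\int_D \overline{\widetilde F_1(\tilde x)}(\cdots)d\tilde x$, where tildes denote lifts to $G/K$, and the inner fibre integral over $T_x^{2R}(X)$ is identified, via the covering map, with the integral over $T_{\tilde x}^{2R}(G/K)$. The key point that makes this work for small $R$ is that the analytic continuation of $F_i$ on $T^{2R}(X)$ pulls back to the analytic continuation of $\widetilde F_i$ on $T^{2R}(G/K)$, and the adapted complex structures are compatible with the covering; and the $K$-invariant function $\alpha$ on $\mathfrak p^{R_0}$ transports unambiguously to each $T_{\tilde x}(G/K)$ exactly as described in the paragraph preceding the theorem. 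So it suffices to prove, for each fixed $\tilde x$,
\begin{equation*}
\overline{\widetilde F_1(\tilde x)}\int_{T_{\tilde x}^{2R}(G/K)}\widetilde F_2(\exp_{\tilde x}iY)\alpha(Y)\,dY
=\int_{T_{\tilde x}^{2R}(G/K)}\overline{\widetilde F_1(\exp_{\tilde x}(iY/2))}\,\widetilde F_2(\exp_{\tilde x}(iY/2))\,\alpha(Y)\,dY,
\end{equation*}
and then integrate this over $\tilde x\in D$.

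For the pointwise identity at fixed $\tilde x$, I would work on $\mathfrak p$ via the identification $T_{\tilde x}(G/K)\cong\mathfrak p$. Introduce on the complexification $\mathfrak p_{\mathbb C}$ the holomorphic functions $G_i(Z):=\widetilde F_i(\exp_{\tilde x}Z)$ defined near the origin; by Proposition~\ref{continue.prop} and the choice $R<R_0/2$, both $G_1$ and $G_2$ are holomorphic on a polydisc large enough that $Z=iY$ and $Z=iY/2$ lie in the domain for $|Y|\le 2R$. The left-hand side is $\overline{G_1(0)}\int_{|Y|\le 2R}G_2(iY)\alpha(Y)\,dY$ and the right-hand side is $\int_{|Y|\le 2R}\overline{G_1(iY/2)}\,G_2(iY/2)\alpha(Y)\,dY$. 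I would prove equality by expanding $\overline{G_1}$ and $G_2$ in their Taylor series and using the $K$-invariance of $\alpha$ together with the fact that $\overline{G_1(iY/2)}$ is antiholomorphic in $Y$ while $G_2(iY/2)$ is holomorphic in $Y$: the point is that $\int_{|Y|\le 2R}\alpha(Y)$ kills all the "cross terms" in the bilinear expansion whose total degree in $Y$ is positive, by a parity/invariance argument identical to the mechanism behind Lemma~\ref{euclideanint.lem}. Concretely, the cleanest route is to deform the contour: replace the real ball $\{Y\in\mathfrak p:|Y|\le 2R\}$ by the family of contours $Y\mapsto \cos\theta\, Y + i\,(\text{something})$, or more transparently, consider the entire holomorphic function
\begin{equation*}
\Phi(w):=\int_{|Y|\le 2R}\overline{G_1\!\left(\tfrac{1-\bar w}{2}\,iY\right)}\;G_2\!\left(\tfrac{1+w}{2}\,iY\right)\alpha(Y)\,dY
\end{equation*}
— treating $w$ and $\bar w$ formally as independent, i.e. working with $G_2((1+w)iY/2)$ holomorphic in $w\in\mathbb C$ and $\overline{G_1((1+\bar w)iY/2)}$ its conjugate — and observe that $\Phi$ is (real-)analytic in $w$, that $\Phi(0)$ is the right-hand side, and that $\Phi(1)$ is the left-hand side up to the conjugate factor. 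One shows $\Phi$ is constant by differentiating under the integral sign and integrating by parts, the boundary term and interior term cancelling because of the holomorphicity (Cauchy--Riemann) of $G_2$ and the rotation-invariance of the measure $\alpha(Y)\,dY$ on the ball; this is the exact analogue of the argument in Lemma~9 of \cite{H2} and Section~4 of \cite{St1}.

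The main obstacle is making the "$w$ and $\bar w$ independent" device rigorous, i.e. justifying the differentiation under the integral and the integration by parts when $G_1,G_2$ are only known to be holomorphic on a bounded domain $T^{R_0}(X)$: one must keep $|w|$ small enough (using $R<R_0/2$) that all the points $\tfrac{1\pm w}{2}iY$ with $|Y|\le 2R$ stay inside the domain of holomorphy, establish uniform bounds on $G_1,G_2$ and their derivatives on the relevant compact set so that dominated convergence applies to differentiation under the integral, and then show the two identities — "$\Phi$ is holomorphic in $w$" and "$\Phi$ is constant" — propagate from a small disc around $w=0$ out to $w=1$ by analytic continuation (the path $[0,1]$ staying in the region where everything is defined). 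A secondary technical point is the unfolding step: one must check that the fibrewise integrals genuinely reassemble into an integral over $X$, i.e. that the $\Gamma$-translates of $D$ tile $G/K$ and that the integrand, being built from $\Gamma$-invariant objects downstairs, is $\Gamma$-invariant upstairs so the choice of fundamental domain is immaterial; this is routine given that $\pi$ is a Riemannian covering and $\alpha$ is transported $K$-invariantly and hence consistently across sheets. Once these points are handled, integrating the fixed-$\tilde x$ identity over $D$ and folding back up to $X$ gives \eqref{change.eq}.
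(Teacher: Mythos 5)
Your reduction step is wrong, and the error is fatal. After unfolding to a fundamental domain $D\subset G/K$, you assert that it suffices to prove a \emph{pointwise} identity at each fixed $\tilde x$ and then integrate over $D$. That pointwise identity is false. To see this, take $F_{1}$ to be the constant function $1$ and let $F_{2}$ be the analytic continuation of a nonconstant Laplace eigenfunction $\phi$ on $X$ with eigenvalue $\lambda$. By the radialization argument in the proof of Theorem~\ref{partial.thm}, the inner integral on the left-hand side of your fixed-$\tilde x$ identity evaluates to $\phi(x)\,c_{1}$ for a constant $c_{1}$ depending on $\lambda$, $\alpha$, $R$, while the inner integral on the right evaluates to $\phi(x)\,c_{2}$ for a different constant. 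One can already see this in the paper's own $\mathbb{R}/\mathbb{Z}$ prototype: with $F_{1}=1$ and $F_{2}(z)=e^{2\pi iz}$, the two fibrewise quantities are $e^{2\pi ix}\int_{-2R}^{2R}\cosh(2\pi y)\alpha(y)\,dy$ and $e^{2\pi ix}\int_{-2R}^{2R}\cosh(\pi y)\alpha(y)\,dy$, which are unequal. The integrated identity \eqref{change.eq} nonetheless holds in that example only because $\int_{X}\phi\,dx=0$. The outer integration over $X$ is doing genuine work, and the statement simply does not decompose fibre by fibre.

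This also invalidates the deformation argument: the function $\Phi(w)$ you introduce is \emph{not} constant in $w$ at a fixed base point (in the prototype, $\Phi(w)=e^{2\pi ix}\int_{-2R}^{2R}e^{-\pi(1+w)y}\alpha(y)\,dy$), and the cancellation you invoke between the Cauchy--Riemann equations and the rotation invariance of $\alpha(Y)\,dY$ does not occur fibrewise. In the compact-group precedent \cite{H2} and in the $\mathbb{R}/\mathbb{Z}$ prototype preceding the proof, the analogous shift $x\mapsto x-a$ is legitimate precisely because of invariance of the Haar measure on the \emph{base}; only then is the real shift parameter $a$ continued into the complex domain. The paper implements this correctly by rewriting both sides of \eqref{change.eq} as double orbital integrals $\int_{\Gamma\backslash G}\Phi_{1}(g\cdot a)\Phi_{2}(g\cdot b)\,dg$ over $\Gamma\backslash G$ (Lemma~\ref{orbit.lemma}); right-$G$-invariance of the measure on $\Gamma\backslash G$ gives the invariance under $(a,b)\mapsto(h\cdot a,h\cdot b)$ for $h\in G$ (Lemma~\ref{holo.lemma}, part 1), and Morera's theorem (Lemma~\ref{holo.lemma}, part 2) continues the resulting identity in $h=e^{tX}$ from real $t$ to $t=-ir/2$. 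Your approach throws away the base integral before performing the shift, at which point there is no invariance left to exploit and the remainder of the argument cannot be repaired.
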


Note that although the right-hand side of (\ref{change.eq}) is defined for all
$R<R_{0},$ the left-hand side is defined only for $R<R_{0}/2.$ Once this
result is established, we will apply Theorem \ref{partial_inv2.thm} and the
holomorphic change of variable with $\alpha$ given by
\begin{equation}
\alpha(Y)=\nu_{2t}^{\mathrm{c}}(Y)j^{\mathrm{c}}(Y)=e^{t\left\vert
\rho\right\vert ^{2}}j^{\mathrm{c}}(Y)^{1/2}\frac{e^{-\left\vert Y\right\vert
^{2}/4t}}{(4\pi t)^{d/2}}. \label{alpha_form}%
\end{equation}
After making the change of variable $Y\rightarrow2Y$, for convenience, we will
obtain the following result, which is the main result of this subsection.

\begin{theorem}
[Partial Isometry Formula]\label{partial_isom.thm}Given $f_{1},f_{2}\in
L^{2}(X),$ let $F_{1}=e^{t\Delta/2}f_{1}$ and $F_{2}=e^{t\Delta/2}f_{2}.$ Let
$R_{0}$ be as in Proposition \ref{continue.prop}. Then for all $R<R_{0},$%
\begin{align}
&  e^{t\left\vert \rho\right\vert ^{2}}\int_{X}\int_{T_{x}^{R}(X)}%
\overline{F_{1}(\exp_{x}(iY))}F_{2}(\exp_{x}(iY))j^{\mathrm{c}}(2Y)^{1/2}%
\frac{e^{-\left\vert Y\right\vert ^{2}/t}}{(\pi t)^{d/2}}~dY~dx\nonumber\\
&  =\left\langle f_{1},e^{t\Delta/2}\beta_{t,R}(-\Delta)f_{2}\right\rangle
_{L^{2}(X)}, \label{partial_isom.eq}%
\end{align}
where $\beta_{t,R}$ is the function defined in (\ref{beta.def}).
\end{theorem}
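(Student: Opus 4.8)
The plan is to read the asserted identity from right to left: starting from the inner product on the right, I would use the partial inversion result of Theorem~\ref{partial_inv2.thm} to recognize it as an integral of heat-kernel type, then apply the holomorphic change of variable of Theorem~\ref{holo_change.thm} to split that integral symmetrically, exactly as foreshadowed in the paragraph preceding the statement. The substitution $Y\mapsto 2Y$ is then just a normalization adjustment. I would first prove the identity for $R<R_{0}/2$, where every expression in the derivation is literally defined, and then extend it to all $R<R_{0}$ by an analyticity-in-$R$ argument.

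So fix $R<R_{0}/2$, put $F_{i}=e^{t\Delta/2}f_{i}$, and recall that $F_{i}$ (whose restriction to $X$ is the $L^{2}$-function entering the inner product) extends holomorphically to $T^{R_{0}}(X)$ by Proposition~\ref{continue.prop}. Since $e^{t\Delta/2}$ is a bounded self-adjoint operator on $L^{2}(X)$ (functional calculus for the nonpositive self-adjoint operator $\Delta$), and since $\beta_{t,R}(-\Delta)=B_{t,R}$ is bounded by Theorem~\ref{partial_inv2.thm}, we have
\begin{align*}
\left\langle f_{1},e^{t\Delta/2}\beta_{t,R}(-\Delta)f_{2}\right\rangle _{L^{2}(X)}
&=\left\langle e^{t\Delta/2}f_{1},\beta_{t,R}(-\Delta)f_{2}\right\rangle _{L^{2}(X)}
=\left\langle F_{1},B_{t,R}f_{2}\right\rangle _{L^{2}(X)}\\
&=e^{t\left\vert \rho\right\vert ^{2}}\int_{X}\overline{F_{1}(x)}\int_{T_{x}^{2R}(X)}F_{2}(\exp_{x}iY)\,j_{x}^{\mathrm{c}}(Y)^{1/2}\frac{e^{-\left\vert Y\right\vert ^{2}/4t}}{(4\pi t)^{d/2}}~dY~dx .
\end{align*}

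The inner double integral is precisely the left-hand side of Theorem~\ref{holo_change.thm} with $\alpha$ taken to be $\nu_{2t}^{\mathrm{c}}(Y)j^{\mathrm{c}}(Y)$ as in (\ref{alpha_form}); this $\alpha$ is $K$-invariant and measurable, and it is bounded because each factor $\sin\alpha(H)/\alpha(H)$ of $j^{\mathrm{c}}(Y)^{1/2}$ has absolute value at most $1$ and the Gaussian is bounded. Applying Theorem~\ref{holo_change.thm} (valid since $R<R_{0}/2$) replaces the integrand $\overline{F_{1}(x)}\,F_{2}(\exp_{x}iY)$ by $\overline{F_{1}(\exp_{x}(iY/2))}\,F_{2}(\exp_{x}(iY/2))$. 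I would then substitute $Y\mapsto 2Y$: the ball $\{|Y|<2R\}$ becomes $\{|Y|<R\}$, the Jacobian contributes $2^{d}$, the Gaussian becomes $e^{-|Y|^{2}/t}$, the prefactor becomes $2^{d}/(4\pi t)^{d/2}=1/(\pi t)^{d/2}$, and $j^{\mathrm{c}}(Y)^{1/2}$ becomes $j^{\mathrm{c}}(2Y)^{1/2}$. The outcome is exactly the left-hand side of (\ref{partial_isom.eq}), so (\ref{partial_isom.eq}) holds for all $R<R_{0}/2$. Keeping these constants straight is the most error-prone part, but it is routine; the real content has already been supplied by Theorems~\ref{partial_inv2.thm} and~\ref{holo_change.thm}.

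Finally I would upgrade from $R<R_{0}/2$ to $R<R_{0}$ by real-analyticity in $R$ of both sides. Expanding $f_{i}$ in an eigenbasis $\{\psi_{n}\}$ of $-\Delta$ with eigenvalues $\lambda_{n}$, the right-hand side equals $\sum_{n}\overline{a_{n}}b_{n}e^{-t\lambda_{n}/2}\beta_{t,R}(\lambda_{n})$, which by the same Morera--Fubini argument used in the proof of Theorem~\ref{global_l2.thm} is real-analytic in $R$ on all of $(0,\infty)$. The left-hand side, rewritten in polar coordinates as $e^{t|\rho|^{2}}\int_{X}\int_{0}^{R}s^{d-1}\bigl(\int_{S^{d-1}}\overline{F_{1}(\exp_{x}(is\omega))}\,F_{2}(\exp_{x}(is\omega))\,j^{\mathrm{c}}(2s\omega)^{1/2}\,d\omega\bigr)(\pi t)^{-d/2}e^{-s^{2}/t}\,ds\,dx$, has an $s$-integrand that extends holomorphically, with locally uniform bounds, to a complex neighborhood of $(0,R_{0})$ (using the holomorphic extensions of $F_{1},F_{2}$ on $T^{R_{0}}(X)$ together with the entirety of $j^{\mathrm{c}}$ and of the Gaussian), hence the left-hand side is real-analytic in $R$ on $(0,R_{0})$. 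Two real-analytic functions on $(0,R_{0})$ agreeing on $(0,R_{0}/2)$ agree on $(0,R_{0})$, completing the proof. I expect this last extension step to be the only conceptually nontrivial point.
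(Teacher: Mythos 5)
Your proposal follows the paper's proof almost step for step: interpret the inner product using the partial inversion formula (Theorem~\ref{partial_inv2.thm}) and the self-adjointness of $e^{t\Delta/2}$, apply the holomorphic change of variable (Theorem~\ref{holo_change.thm}) with $\alpha$ as in (\ref{alpha_form}) for $R<R_{0}/2$, perform the cosmetic substitution $Y\mapsto 2Y$, and then extend to all $R<R_{0}$ by real-analyticity of both sides, handling the right-hand side via the eigenbasis/Fubini/Morera argument from Theorem~\ref{global_l2.thm}.

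The one place where you depart from the paper is the analyticity of the left-hand side, which the paper delegates to Lemma~\ref{gf.analytic} (rewriting the integral as a double orbital integral over $\Gamma\backslash G$ and invoking Lemma~\ref{holo.lemma}). Your proposed polar-coordinate shortcut is in the right spirit but, as written, contains a small misstatement: the quantity $\overline{F_{1}(\exp_{x}(is\omega))}$ is \emph{antiholomorphic} in $s$, so the literal formula does not extend holomorphically. What one actually does is introduce the holomorphic function $\tilde F_{1}$ with $\tilde F_{1}(\exp_{x}iY)=\overline{F_{1}(\exp_{x}(-iY))}$, write the integrand for real $s$ as $\tilde F_{1}(\exp_{x}(-is\omega))F_{2}(\exp_{x}(is\omega))j^{\mathrm{c}}(2s\omega)^{1/2}$, and interpret $\exp_{x}(\pm is\omega)$ for complex $s$ through the geodesic leaf map $\tau$ (so that $s\mapsto\tau(\pm is)$ is holomorphic on a strip). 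With that reflection trick and the locally uniform bounds on $\tilde F_{1},F_{2}$ coming from compactness of $X$ and continuity on $T^{R_{0}}(X)$, your argument becomes rigorous, and is essentially an unfolding of what Lemma~\ref{gf.analytic} packages in the $\Gamma\backslash G$ framework. So: correct route, same key lemmas, with one imprecision to fix in the last step.
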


The analogous result on the globally symmetric space $G/K$ ($G$ complex) was
obtained in \cite{HM4}; see Theorem 6 and Equations (38) and (39). We will
prove (\ref{partial_isom.eq}) directly from the holomorphic change of variable
for $R<R_{0}/2$ and then extend the result to $R<R_{0}$ by analytic continuation.

From the formula (\ref{jc_half}), we see that $j^{\mathrm{c}}(2Y)$ is positive
for all sufficiently small $Y.$ (Actually, it is not hard to show that
$j^{\mathrm{c}}(2Y)$ is positive on the maximal domain in $T(X)$ on which the
adapted complex structure exists, but we do not require this result.) This
means that the left-hand side of (\ref{partial_isom.eq}) is strictly positive
whenever $f_{1}=f_{2}=f,$ where $f$ is nonzero. If follows that $\beta
_{t,R}(-\Delta)$ is a strictly positive operator, for all sufficiently small
$R,$ something that can also be obtained from the formula for $\beta_{t,R}.$
However, because the spectrum of $-\Delta$ contains points in the interval
$[0,\left\vert \rho\right\vert ^{2}),$ it is not clear whether positivity
holds for all $R.$

Note that the operator $B_{t,R}$ in Theorem \ref{partial_inv2.thm}
incorporates the $e^{t\Delta/2}$ applied to $f_{1},$ but not the
$e^{t\Delta/2}$ applied to $f_{2}.$ This, along with the self-adjointness of
$e^{t\Delta/2},$ accounts for the expression on the right-hand side of
(\ref{partial_isom.eq}).

To understand what is going on in Theorem \ref{holo_change.thm}, it is helpful
to consider the following prototype calculation on the (Euclidean) symmetric
space $S^{1}=\mathbb{R}/\mathbb{Z}.$ Then $T(\mathbb{R}/\mathbb{Z}) $, with
the adapted complex structure, is identified with $\mathbb{C}/\mathbb{Z}$ in
such a way that $\exp_{x}(iy)=x+iy.$ Suppose $F_{1}$ and $F_{2}$ are
holomorphic functions on a strip in $\mathbb{R}/\mathbb{Z}.$ Let $\tilde
{F}_{1}$ be the holomorphic function whose restriction to $\mathbb{R}%
/\mathbb{Z}$ is $\overline{F_{1}}$; equivalently, $\tilde{F}_{1}%
(x+iy)=\overline{F_{1}(x-iy)}.$ Then using Fubini and a change of variable we
have%
\[
\int_{\mathbb{R}/\mathbb{Z}}\overline{F_{1}(x)}\int_{-2R}^{2R}F_{2}%
(x+iy)\alpha(y)~dy~dx=\int_{-2R}^{2R}\int_{\mathbb{R}/\mathbb{Z}}\tilde{F}%
_{1}(x-a)F_{2}(x-a+iy)~dx~\alpha(y)~dy
\]
for any $a\in\mathbb{R}.$ Since $\tilde{F}_{1}$ and $F_{2}$ are holomorphic,
this equality remains valid for $a$ in a strip in $\mathbb{C}.$ Taking
$a=iy/2$ and using Fubini again gives%
\[
\int_{\mathbb{R}/\mathbb{Z}}\overline{F_{1}(x)}\int_{-2R}^{2R}F_{2}%
(x+iy)\alpha(y)~dy~dx=\int_{\mathbb{R}/\mathbb{Z}}\int_{-2R}^{2R}%
\overline{F_{1}(x+iy/2)}F_{2}(x+iy/2)\alpha(y)~dy~dx.
\]
This is just the analog of Theorem \ref{holo_change.thm} for $\mathbb{R}%
/\mathbb{Z}$ and the method of proof (the \textquotedblleft change of
variable\textquotedblright\ $x\rightarrow x-iy/2$) motivates the terminology
\textquotedblleft holomorphic change of variable.\textquotedblright

The remainder of this subsection is devoted to the proof of Theorem
\ref{holo_change.thm}. The first step is to express the two sides of
(\ref{change.eq}) in terms of \textquotedblleft double orbital
integrals,\textquotedblright\ with the integration being over $\Gamma
\backslash G$ with respect to the natural right-$G$-invariant measure. Then a
\textquotedblleft holomorphic change of variable\textquotedblright\ in the
double orbital integrals, similar to that in the previous paragraph, will show
that the integrals are equal.\bigskip

Let $\pi:G/K\rightarrow X=\Gamma\backslash G/K$ be the quotient map and let
$x_{0}$ denote the identity coset in $G/K.$ For each $x\in X,$ we choose
$g_{x}\in G$ so that $\pi(g_{x}\cdot x_{0})=x.$ We arrange for the $g_{x}$'s
to depend measurably on $x,$ and we will make one other technical restriction
on the choice of $g_{x}$ later. We then identify each $T_{x}(X)$ with
$T_{g_{x}\cdot x_{0}}(G/K)$ by means of $\pi_{\ast}^{-1}$ and then with
$\mathfrak{p}=T_{x_{0}}(G/K)$ by the action of $g_{x}^{-1}.$ This
identification of $T_{x}(X)$ with $\mathfrak{p}$ is of the same sort as we
have been using all along in this paper, but we now have one particular such
identification for each $x.$

We have now measurably identified $T(X)$ with $X\times\mathfrak{p}.$ We use
this identification on both sides of the desired equality (\ref{change.eq}),
along with generalized polar coordinates on $\mathfrak{p}$ with respect to the
adjoint action of $K.$ Recall (from (\ref{exp.id})) that $\exp_{x}(iY)$ is
simply another way of writing the point $(x,Y)\in T(X).$ Let us now switch
back to the $(x,Y)$ notation. Then, using our identification of $T(X)$ with
$X\times\mathfrak{p}$ and generalized polar coordinates on $\mathfrak{p},$ the
desired equality (\ref{change.eq}) is equivalent to%
\begin{align}
&  \int_{\mathfrak{a}_{2R}^{+}}\int_{X}\int_{K}\overline{F_{1}(x)}%
F_{2}((x,\mathrm{Ad}_{k}Y))~dk~dx~\alpha(Y)\mu(Y)~dY\nonumber\\
&  =\int_{\mathfrak{a}_{2R}^{+}}\int_{X}\int_{K}\overline{F_{1}((x,\mathrm{Ad}%
_{k}Y/2))}F_{2}((x,\mathrm{Ad}_{k}Y/2))~dk~dx~\alpha(Y)\mu(Y)~dY.
\label{change2}%
\end{align}
Here, $dx$ denotes the Riemannian volume measure on $X$, $\mu$ is the density
that appears in the generalized polar coordinates formula (e.g., \cite[Thm.
I.5.17]{He2}), and $\mathfrak{a}_{2R}^{+}$ is the set of vectors in
$\mathfrak{a}^{+}$ with norm less than $2R.$

Clearly, for (\ref{change2}) to hold, it is sufficient to verify that
\begin{align}
&  \int_{X}\int_{K}\overline{F_{1}(x)}F_{2}((x,\mathrm{Ad}_{k}%
Y))~dk~dx\nonumber\\
&  =\int_{X}\int_{K}\overline{F_{1}((x,\mathrm{Ad}_{k}Y/2))}F_{2}%
((x,\mathrm{Ad}_{k}Y/2))~dk~dx \label{change3}%
\end{align}
for all $Y\in\mathfrak{a}_{2R}^{+}.$

Our goal now is to show that both integrals in (\ref{change3}) can be written
as \textquotedblleft double orbital integrals.\textquotedblright\ Let
$\tilde{F}_{1}$ be the function on $T^{R_{0}}(X)$ given by $\tilde{F}%
_{1}(x,Y)=\overline{F_{1}(x,-Y)}$ (or, by (\ref{exp.id}), $\tilde{F}_{1}%
(\exp_{x}iY)=\overline{F_{1}(\exp_{x}(-iY))}$). Since the map
$(x,Y)\rightarrow(x,-Y)$ is antiholomorphic \cite[p. 568]{GStenz1}, $\tilde
{F}_{1}$ is holomorphic. Now, it is known that $T^{R}(G/K)$ has its own
adapted complex structure for all sufficiently small $R$ (see \cite{HM4} and
the references therein). Furthermore, the map $\pi_{\ast}:T^{R}%
(G/K)\rightarrow T^{R}(\Gamma\backslash G/K)$ is easily seen to be
holomorphic. Thus, we can define holomorphic functions $\Phi_{1}$ and
$\Phi_{2}$ on $T^{R}(G/K)$ by%
\begin{align*}
\Phi_{1}  &  =\tilde{F}_{1}\circ\pi_{\ast}\\
\Phi_{2}  &  =F_{2}\circ\pi_{\ast}.
\end{align*}
By construction, these functions satisfy $\Phi_{j}(\gamma\cdot a)$ for all
$\gamma\in\Gamma$ and $a\in T^{R}(G/K),$ where $\gamma\cdot a$ refers to the
action of $\Gamma$ on $T^{R}(G/K)$ induced from the action of $\Gamma$ on
$G/K.$

We now consider \textquotedblleft double orbital integral,\textquotedblright%
\ namely, integrals of the form%
\begin{equation}
\int_{\Gamma\backslash G}\Phi_{1}(g\cdot a)\Phi_{2}(g\cdot b)~dg,
\label{double}%
\end{equation}
where $dg$ is an appropriately normalized version of the right-$G$-invariant
measure on $\Gamma\backslash G$ and $a$ and $b$ are fixed points in
$T^{R}(G/K).$ Observe that although $g\cdot a$ and $g\cdot b$ are defined for
$g\in G$ (not $\Gamma\backslash G$), the invariance property of the $\Phi_{j}%
$'s means that the integrand in (\ref{double}) descends to a function on
$\Gamma\backslash G.$

\begin{lemma}
\label{orbit.lemma}The left-hand side of (\ref{change3}) is an integral of the
form (\ref{double}) with $a=x_{0}$ and $b=(x_{0},Y).$ The right-hand side of
(\ref{change3}) is an integral of the form (\ref{double}) with $a=(x_{0}%
,-Y/2)$ and $b=(x_{0},Y/2).$
\end{lemma}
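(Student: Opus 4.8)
The plan is to prove both assertions simultaneously by unfolding the double orbital integral (\ref{double}) over $\Gamma\backslash G$ into an iterated integral over $X$ and $K$, and then reading off the integrand from the definitions of $\Phi_{1}$, $\Phi_{2}$. The starting point is that the map $\Gamma\backslash G\to X=\Gamma\backslash G/K$, $\Gamma g\mapsto\pi(g\cdot x_{0})$, exhibits $\Gamma\backslash G$ as a (right) principal $K$-bundle over $X$: since $\Gamma$ acts freely on $G/K$, every $g\in G$ has a unique factorization $g=\gamma g_{x}k$ with $x=\pi(g\cdot x_{0})$, $\gamma\in\Gamma$, $k\in K$, so the measurable section $x\mapsto g_{x}$ gives a measurable identification $\Gamma\backslash G\cong X\times K$. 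With $dg$ normalized as in Theorem \ref{holo_change.thm} --- i.e.\ so that integrating out the $K$-fiber against normalized Haar measure leaves the Riemannian volume $dx$ on $X$ --- the quotient integral formula reads, for any $\Gamma$-invariant $h$ on $G$,
\[
\int_{\Gamma\backslash G}h(g)\,dg=\int_{X}\int_{K}h(g_{x}k)\,dk\,dx.
\]
Applying this with $h(g)=\Phi_{1}(g\cdot a)\Phi_{2}(g\cdot b)$, which descends to $\Gamma\backslash G$ precisely because $\Phi_{1},\Phi_{2}$ were constructed to be $\Gamma$-invariant, rewrites (\ref{double}) as $\int_{X}\int_{K}\Phi_{1}(g_{x}k\cdot a)\,\Phi_{2}(g_{x}k\cdot b)\,dk\,dx$.

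The one nontrivial step is the geometric identity: for $Z\in\mathfrak{p}$ with $|Z|<R_{0}$, the point $g_{x}k\cdot(x_{0},Z)\in T^{R_{0}}(G/K)$ is carried by $\pi_{\ast}$ to the point of $T(X)$ that corresponds, under the identification $T_{x}(X)\cong\mathfrak{p}$ fixed in this subsection, to $(x,\mathrm{Ad}_{k}Z)$. Indeed, $G$ acts on $T(G/K)$ through the differential of its isometric action on $G/K$; since $k$ fixes $x_{0}$ and acts on $T_{x_{0}}(G/K)=\mathfrak{p}$ by $\mathrm{Ad}_{k}$, we get $k\cdot(x_{0},Z)=(x_{0},\mathrm{Ad}_{k}Z)$, hence $g_{x}\cdot(x_{0},\mathrm{Ad}_{k}Z)=(g_{x}\cdot x_{0},(g_{x})_{\ast}\mathrm{Ad}_{k}Z)$, hence $\pi_{\ast}$ carries this to $(x,\pi_{\ast}(g_{x})_{\ast}\mathrm{Ad}_{k}Z)$. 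But the composite $\pi_{\ast}\circ(g_{x})_{\ast}\colon\mathfrak{p}\to T_{x}(X)$ is by construction the inverse of the chosen identification $T_{x}(X)\cong\mathfrak{p}$, so $\pi_{\ast}(g_{x})_{\ast}\mathrm{Ad}_{k}Z$ corresponds to $\mathrm{Ad}_{k}Z\in\mathfrak{p}$, as claimed.

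Both assertions now drop out by specialization, using $\Phi_{1}=\tilde{F}_{1}\circ\pi_{\ast}$, $\Phi_{2}=F_{2}\circ\pi_{\ast}$, and $\tilde{F}_{1}(x,W)=\overline{F_{1}(x,-W)}$. For $a=x_{0}$ (the zero vector at $x_{0}$) and $b=(x_{0},Y)$: the identity with $Z=0$ gives $\Phi_{1}(g_{x}k\cdot a)=\tilde{F}_{1}(x)=\overline{F_{1}(x)}$, and with $Z=Y$ it gives $\Phi_{2}(g_{x}k\cdot b)=F_{2}((x,\mathrm{Ad}_{k}Y))$, so (\ref{double}) equals the left-hand side of (\ref{change3}). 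For $a=(x_{0},-Y/2)$ and $b=(x_{0},Y/2)$: the identity with $Z=-Y/2$ gives $\Phi_{1}(g_{x}k\cdot a)=\tilde{F}_{1}((x,-\mathrm{Ad}_{k}Y/2))=\overline{F_{1}((x,\mathrm{Ad}_{k}Y/2))}$, and with $Z=Y/2$ it gives $\Phi_{2}(g_{x}k\cdot b)=F_{2}((x,\mathrm{Ad}_{k}Y/2))$, so (\ref{double}) equals the right-hand side of (\ref{change3}). All points in sight lie in $T^{R_{0}}(G/K)$, since $Y\in\mathfrak{a}_{2R}^{+}$ with $2R<R_{0}$ forces $|{-Y/2}|=|Y/2|<|Y|<R_{0}$.

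The only delicate bookkeeping, I expect, is in setting up the quotient integral formula: confirming that freeness of the $\Gamma$-action plus the measurable section $x\mapsto g_{x}$ genuinely yield a measurable isomorphism $\Gamma\backslash G\cong X\times K$, and fixing the normalization of $dg$ so that the $K$-fiber integral reproduces $dx$ exactly rather than up to a constant. After that, the lemma is a formal unwinding of the definitions of $\Phi_{1}$, $\Phi_{2}$, $\tilde{F}_{1}$, and the identification $T_{x}(X)\cong\mathfrak{p}$.
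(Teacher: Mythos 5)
Your proof is correct and follows essentially the same route as the paper: identify $\Gamma\backslash G$ measurably with $X\times K$ via the section $x\mapsto g_{x}$, note that $dg$ decomposes as $dx\,dk$, establish the key identity $\pi_{\ast}\bigl[(g_{x}k)\cdot(x_{0},Z)\bigr]=(x,\mathrm{Ad}_{k}Z)$, and specialize $Z$ to $0$, $Y$, $-Y/2$, $Y/2$. You supply a little more detail than the paper on deriving the geometric identity (the paper simply asserts it), while the paper is slightly more careful on the measure decomposition (invoking the argument of Proposition 4 of \cite{HM4} to see the push-forward is the restriction of a bi-invariant Haar measure); these are minor differences of emphasis, not of method.
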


\begin{proof}
Let $E$ denote the following set in $G$:%
\[
E=\left\{  \left.  g_{x}k\in G\right\vert x\in X,~k\in K\right\}  .
\]
We assume that the mapping $x\rightarrow g_{x}$ is chosen in such a way that
$E$ is a measurable subset of $G.$

Now consider the map from $X\times K$ to $\Gamma\backslash G$ given by
$(x,k)\rightarrow\Gamma g_{x}k.$ This map is measurable because the map
$x\rightarrow g_{x}$ was chosen to be measurable. Given $\Gamma g\in
\Gamma\backslash G,$ let $x$ be the point $\Gamma gK\in\Gamma\backslash G/K=X$
and then consider $g_{x}\in G,$ which has the property that $\Gamma g_{x}K=x.
$ Then there exists $k\in K$ with $\Gamma g_{x}k=\Gamma g.$ To see that $k$ is
unique, suppose $\gamma_{1}g_{x}k_{1}=\gamma_{2}g_{x}k_{2}$ for some
$\gamma_{1},\gamma_{2}\in\Gamma$ and $k_{1},k_{2}\in K.$ Then because $\Gamma$
acts freely on $G/K,$ we must have $\gamma_{1}=\gamma_{2} $ and therefore
$k_{1}=k_{2}.$

This argument shows that the map $(x,k)\rightarrow\Gamma g_{x}k$ is a
bijection of $X\times K$ onto $\Gamma\backslash G.$ We may therefore identify
$\Gamma\backslash G$ with the set $E\subset G$ defined above. Consider on
$X\times K$ the product of the Riemannian volume measure on $X$ and the
normalized Haar measure on $K.$ The push-forward of this measure to $E$ is
easily seen to be the restriction to $E$ of a (bi-invariant) Haar measure on
$G.$ (Specifically, arguing as in the proof of Proposition 4 in \cite{HM4},
the push-forward measure is the restriction to $E$ of a \textit{left} Haar
measure on $G,$ which is also right invariant since $G$ is unimodular.) It is
then easy to see that if we identify $E$ with $\Gamma\backslash G,$ the
resulting measure on $\Gamma\backslash G$ is invariant under the right action
of $G.$

All of this is to say that if we write points in $\Gamma\backslash G$ as
$\Gamma g_{x}k,$ with $x\in X$ and $k\in K,$ then the measure $dg$ on
$\Gamma\backslash G$ decomposes as $dx~dk.$ Meanwhile, in light of the
identifications we are making,%
\[
\pi_{\ast}[(g_{x}k)\cdot(x_{0},Z)]=(x,\mathrm{Ad}_{k}Z),\quad Z\in
\mathfrak{p}.
\]
The lemma then follows by plugging in $Z=0,$ $Z=Y,$ $Z=-Y/2$, and $Z=Y/2.$
\end{proof}

\begin{lemma}
\label{holo.lemma}

\begin{enumerate}
\item Integrals of the form (\ref{double}) satisfy%
\[
\int_{\Gamma\backslash G}\Phi_{1}(g\cdot(h\cdot a))\Phi_{2}(g\cdot(h\cdot
b))~dg=\int_{\Gamma\backslash G}\Phi_{1}(g\cdot a)\Phi_{2}(g\cdot b)~dg
\]
for all $h\in G$ and $a,b\in T^{R}(G/K).$

\item The integral in (\ref{double}) depends holomorphically on $a$ and $b$
(with $F_{1}$ and $F_{2}$ and hence $\Phi_{1}$ and $\Phi_{2}$ fixed).
\end{enumerate}
\end{lemma}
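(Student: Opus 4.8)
The plan is to prove the two assertions of Lemma~\ref{holo.lemma} in turn, using the right-$G$-invariance of the measure $dg$ on $\Gamma\backslash G$ for part (1) and a Morera/Fubini argument for part (2).

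For part (1), fix $h\in G$ and $a,b\in T^{R}(G/K)$. The idea is that substituting $g\mapsto gh^{-1}$ (equivalently, a change of variables in the integral) sends the integrand $\Phi_{1}(g\cdot(h\cdot a))\Phi_{2}(g\cdot(h\cdot b))$ to $\Phi_{1}(g\cdot a)\Phi_{2}(g\cdot b)$. The only thing to check is that this substitution is legitimate on $\Gamma\backslash G$: the right action of $G$ on $\Gamma\backslash G$ is well-defined (since it commutes with the left action of $\Gamma$), and the measure $dg$ was constructed precisely to be invariant under this right action, as established in the proof of Lemma~\ref{orbit.lemma}. One should also note that $g\cdot(h\cdot a)=(gh)\cdot a$, so the integrand, written as a function of $\Gamma g$, really is the right translate by $h$ of the original integrand; applying right-invariance of $dg$ then gives the claimed equality. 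This is a short argument with no real obstacle.

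For part (2), the claim is that $(a,b)\mapsto\int_{\Gamma\backslash G}\Phi_{1}(g\cdot a)\Phi_{2}(g\cdot b)\,dg$ is holomorphic in $a$ and $b$ on $T^{R}(G/K)\times T^{R}(G/K)$ (with $R$ small enough that everything is defined). First I would fix $b$ and prove holomorphicity in $a$; the case of $b$ is symmetric, and joint holomorphicity then follows by Hartogs or by a two-variable Morera argument. For fixed $g$, the map $a\mapsto\Phi_{1}(g\cdot a)$ is holomorphic because $\Phi_{1}$ is holomorphic on $T^{R}(G/K)$ and, for each $g$, the map $a\mapsto g\cdot a$ is a holomorphic automorphism of $T^{R}(G/K)$ onto its image (the $G$-action on $T(G/K)$ is by holomorphic maps with respect to the adapted complex structure, since it permutes geodesics and is compatible with the complexified exponential map). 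To pass holomorphicity through the integral, I would use Morera's theorem: take any closed triangle (or small loop) $\Delta$ in a coordinate chart in the $a$-variable, interchange $\oint_{\partial\Delta}$ with $\int_{\Gamma\backslash G}$ by Fubini, observe that the inner contour integral vanishes for each $g$ by Cauchy's theorem, and conclude that the integral is holomorphic. Continuity of $a\mapsto\int \Phi_1(g\cdot a)\Phi_2(g\cdot b)\,dg$ (needed to apply Morera) follows from dominated convergence.

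The one point requiring care—and the main obstacle—is the domination needed to apply Fubini and dominated convergence, because $\Gamma\backslash G$ is only noncompact in the fiber direction. However, since $\Gamma\backslash G$ fibers over the \emph{compact} base $X$ with compact fiber $K$ (by the identification $\Gamma\backslash G\cong X\times K$ from Lemma~\ref{orbit.lemma}, or rather its closure/measurable version via the set $E$), the measure $dg$ restricted to the relevant region is finite, and the continuous functions $\Phi_{1}(g\cdot a)$, $\Phi_{2}(g\cdot b)$ are bounded uniformly for $a,b$ ranging over a compact subset of $T^{R}(G/K)$ and $g\in E$. Thus the integrand is bounded by a constant (an $L^{1}$ function since $dg$ is finite on $E$), and Fubini and dominated convergence apply without difficulty. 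With these observations in place, parts (1) and (2) both go through, and the lemma is proved.
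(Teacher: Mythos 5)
Your argument is correct and matches the paper's (very terse) proof: part (1) is exactly the change of variables $g\mapsto gh^{-1}$, using the associativity $g\cdot(h\cdot a)=(gh)\cdot a$ together with right-$G$-invariance of $dg$ on $\Gamma\backslash G$, and part (2) is Morera's theorem with Fubini and dominated convergence, justified by the finiteness of $dg$ and the holomorphicity of $a\mapsto g\cdot a$ with respect to the adapted complex structure. One small slip in the write-up: the phrase ``$\Gamma\backslash G$ is only noncompact in the fiber direction'' is wrong---$\Gamma\backslash G$ is \emph{compact}, since it fibers over the compact $X$ with compact fiber $K$, which is precisely what you go on to use to conclude $dg$ is a finite measure.
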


\begin{proof}
The first point follows from the associativity property of the action and the
right-$G$-invariance of the measure on $\Gamma\backslash G.$ The second point
follows from Morera's Theorem.
\end{proof}

We are now ready to give the proof of the holomorphic change of variable.

\begin{proof}
(Proof of Theorem \ref{holo_change.thm}) It suffices to prove (\ref{change3}),
which, by Lemma \ref{orbit.lemma} amounts to showing that the (double) orbital
integral with $a=x_{0}$ and $b=(x_{0},Y)$ is the same as the orbital integral
with $a=(x_{0},-Y/2)$ and $b=(x_{0},Y/2).$ The idea is to use invariance of
the orbital integral under $(a,b)\rightarrow(e^{tY}\cdot a,e^{tY}\cdot b)$ and
then plug in $t=-i/2,$ using Lemma \ref{holo.lemma}.

There is nothing to prove in (\ref{change3}) if $Y=0.$ If $Y\neq0,$ let
$r=\left\vert Y\right\vert $ (with $r<R_{0}$) and let $X=Y/r$ be the
associated unit vector. Let $\gamma$ be the corresponding unit-speed geodesic
in $G/K,$ namely, $\gamma(t)=e^{tX}\cdot x_{0}.$ Then consider the map
$\tau:S_{R}\rightarrow T(G/K)$ given by
\[
\tau(u+iv)=(\gamma(u),v\dot{\gamma}(u)),
\]
where $S_{R}\subset\mathbb{C}$ is the strip defined in (\ref{sr}). According
to the definition of the adapted complex structure on $T(G/K),$ the map $\tau$
is holomorphic. Note that%
\[
\frac{d}{dt}e^{tX}\cdot x_{0}=\frac{d}{d\varepsilon}\left.  e^{(t+\varepsilon
)X}\cdot x_{0}\right\vert _{\varepsilon=0}=\frac{d}{d\varepsilon}\left.
e^{tX}\cdot(e^{\varepsilon X}\cdot x_{0})\right\vert _{\varepsilon=0}%
=e^{tX}\cdot(x_{0},X),
\]
where in the last expression we have the induced action of $e^{tX}$ on
$T(G/K).$ It follows that
\[
\tau(u+iv)=e^{uX}\cdot(x_{0},vX).
\]

By the first part of Lemma \ref{holo.lemma}, the orbital integral with
$a=x_{0}$ and $b=(x_{0},rX)$ is the same as the orbital integral with
$a=e^{tX}\cdot x$ and $b=e^{tX}\cdot(x_{0},rX),$ for all $t\in\mathbb{R}.$
That is, the orbital integral associated to $a=\tau(t)$ and $b=\tau(t+ir)$ is
independent of $t$ for $t\in\mathbb{R}.$ Since $\tau$ is holomorphic, the
second part of Lemma \ref{holo.lemma} tells us the same result for all
$t\in\mathbb{C}$ such that both $t$ and $t+ir$ belong to the strip $S_{R}.$
Equating the orbital integral with $(a,b)=(\tau(0),\tau(ir))$ (i.e., $t=0$) to
the one with $(a,b)=(\tau(-ir/2),\tau(ir/2))$ (i.e., $t=-ir/2$) gives the
desired result.
\end{proof}

\begin{proof}
(Proof of Theorem \ref{partial_isom.thm}.) For $R<R_{0}/2,$ this result
follows from applying the holomorphic change of variable with $\alpha$ as in
(\ref{alpha_form}) (and then making the cosmetic change of variable
$Y\rightarrow2Y$). The result will hold for all $R<R_{0}$, provided we can
show that both sides of (\ref{partial_isom.eq}) are real-analytic in $R.$ The
analyticity of the right-hand side of (\ref{partial_isom.eq}) is established
as part of the proof of Theorem \ref{global_isom.thm} in Section
\ref{global_isom.sec}. The analyticity of the left-hand side is established in
the following lemma.
\end{proof}

\begin{lemma}
\label{gf.analytic}Let $R_{1}$ be any positive real number such that the
adapted complex structure exists on $T^{R_{1}}(X).$ Let $F$ be a holomorphic
function on $T^{R_{1}}(X)$ and let $\alpha$ be a real-analytic, Ad-$K$%
-invariant function on $\mathfrak{p}^{2R_{1}}.$ Then the function $G_{F}$
defined by%
\[
G_{F}(R)=\int_{X}\int_{T_{x}^{2R}(X)}\overline{F_{1}(\exp_{x}(iY/2))}%
F_{2}(\exp_{x}(iY/2))\alpha(Y)~dY~dx
\]
is real-analytic on the interval $(0,R_{1}).$
\end{lemma}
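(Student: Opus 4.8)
The plan is to show that $G_F(R)$ extends to a holomorphic function of a complex variable in a neighborhood of each point of $(0,R_1)$, which gives real-analyticity. First I would rewrite the integrand more explicitly. After changing to generalized polar coordinates on $\mathfrak{p}$ with respect to the $\mathrm{Ad}\,K$-action (as in Theorem I.5.17 of \cite{He2}), the inner integral becomes an integral over $\mathfrak{a}^{+}_{2R}$ (vectors in the closed Weyl chamber of norm less than $2R$) of the function
\[
Y \longmapsto \Big(\int_X\int_K \overline{F_1(\exp_x(i\,\mathrm{Ad}_k Y/2))}\,F_2(\exp_x(i\,\mathrm{Ad}_k Y/2))\,dk\,dx\Big)\,\alpha(Y)\,\mu(Y),
\]
and since the integrals over $X$ and $K$ are performed over compact sets with $F_1,F_2$ holomorphic (hence smooth) on $T^{R_1}(X)$, the quantity in parentheses is a smooth function of $Y$ on $\mathfrak{a}^+_{2R_1}$; call it $h(Y)$. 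Thus $G_F(R) = \int_{\mathfrak{a}^+_{2R}} h(Y)\,\alpha(Y)\,\mu(Y)\,dY$, an integral of a fixed smooth function over the expanding region $\mathfrak{a}^+_{2R}$.

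Next I would pass to a radial-type decomposition. Writing $Y = s\,\omega$ with $s = |Y| \in (0, 2R)$ and $\omega$ ranging over the unit sphere intersected with $\mathfrak{a}^+$, the integral takes the form $G_F(R) = \int_0^{2R} \Psi(s)\,ds$, where
\[
\Psi(s) = s^{d_{\mathfrak a}-1}\int_{\omega}\, h(s\omega)\,\alpha(s\omega)\,\mu(s\omega)\,d\omega
\]
is a smooth (indeed real-analytic, once one knows $h$ is real-analytic — but smoothness suffices here) function of $s$ on $[0,2R_1)$. By the Fundamental Theorem of Calculus, $G_F'(R) = 2\,\Psi(2R)$, so it is enough to show that $R\mapsto \Psi(2R)$ is real-analytic, i.e. that $\Psi$ is real-analytic on $(0,2R_1)$. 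The cleanest way to get this is to observe that $h(Y)$ is in fact the restriction to $\mathfrak{a}$ of a function obtained by integrating, over the compact manifolds $X$ and $K$, an expression that is \emph{holomorphic} in $Y \in \mathfrak{p}_{\mathbb C}$ near the real slice: indeed $\exp_x(iZ/2)$ depends holomorphically on $Z \in \mathfrak{p}_{\mathbb C}$ in a neighborhood of $\mathfrak{p}^{2R_1}$ (this is exactly the analytic continuation of the exponential map discussed in Section \ref{setup.sec}), and $\overline{F_1(\exp_x(i\bar Z/2))}\,F_2(\exp_x(iZ/2))$ is then holomorphic in $Z$; integrating in $x$ and $k$ over compact sets preserves holomorphy by Morera's theorem (and differentiation under the integral sign). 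Combined with the real-analyticity of $\alpha$ and of $\mu$ (the polar-coordinate density is real-analytic on $\mathfrak{a}$, being a product of $\sinh$-type factors), we conclude $h\cdot\alpha\cdot\mu$ is real-analytic on $\mathfrak p^{2R_1}$, and hence so is the spherical average defining $\Psi$, after absorbing the $s^{d_{\mathfrak a}-1}$ factor and noting that the spherical average of a real-analytic function that is even under $Y\mapsto -Y$ is a real-analytic function of $s$ (the standard fact underlying the existence of the $e^{\sqrt{\sigma}\,y_1}$ kernel in Lemma \ref{euclideanint.lem}).

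The main obstacle is the one just addressed: establishing that the $s$-dependence is genuinely real-analytic and not merely smooth, since the domain of integration $\mathfrak a^+_{2R}$ has a boundary that moves with $R$, and the integrand has the polar-coordinate singularity structure near $s=0$. The resolution is that all the $s$-dependence outside the moving boundary is packaged into the single real-analytic function $\Psi$, so that $G_F'(R)=2\Psi(2R)$ is real-analytic; then $G_F$ itself, being an antiderivative of a real-analytic function, is real-analytic on $(0,R_1)$. One must also check that the interchange of the $x$- and $k$-integrations with differentiation in $R$ (or in the complexified radial parameter) is justified, which is routine given the uniform bounds coming from holomorphy of $F_1,F_2$ on the fixed domain $T^{R_1}(X)$ and the compactness of $X$ and $K$. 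This completes the proof.
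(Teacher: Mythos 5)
Your proof is correct, but it takes a genuinely different route from the paper's. The paper freezes the moving domain of integration by the scaling substitution $Y=2RZ$, $Z\in\mathfrak{a}^{+}_{1}$, so that $R$ appears only in the integrand, and then shows the integrand is holomorphic in $R$ by working with double orbital integrals on $\Gamma\backslash G$ (using the holomorphic map $\tau_{Z}(u+iv)=e^{uZ}\cdot(x_{0},vZ)$ and Lemma \ref{holo.lemma}). You instead peel off the radial variable, writing $G_{F}(R)=\int_{0}^{2R}\Psi(s)\,ds$, invoke the Fundamental Theorem of Calculus to get $G_{F}'(R)=2\Psi(2R)$, and then argue that $\Psi$ is real-analytic by holomorphically extending the integrand in the radial parameter with uniform bounds over the compact set $X\times K\times(S\cap\mathfrak{a}^{+})$. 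Both arguments ultimately rest on the same core fact — Morera's theorem gives holomorphic dependence of the integral on the parameter because the integrand is holomorphic, bounded, and the base of integration is compact — but your version bypasses the double-orbital-integral formalism entirely, which is a legitimate simplification here (the paper uses that formalism because it has already been set up for the proof of Theorem \ref{holo_change.thm}). The FTC step is a slightly more elementary way of handling the moving boundary than the scaling substitution.

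A few small points worth tightening. First, the polar density $\mu$ on $\mathfrak{a}$ coming from the generalized polar coordinates on the \emph{tangent space} $\mathfrak{p}$ is the polynomial $\prod_{\alpha\in R^{+}}\alpha(H)^{m_{\alpha}}$, not a product of $\sinh$ factors (that would be the polar density on $G/K$ itself); this does not affect the argument, since both are real-analytic, but the paper explicitly notes $\mu$ is a polynomial. Second, your statement that $h\cdot\alpha\cdot\mu$ is real-analytic on $\mathfrak{p}^{2R_{1}}$ should read $\mathfrak{a}^{2R_{1}}$ (or the interior of $\mathfrak{a}^{+}_{2R_{1}}$), since $\mu$ lives on $\mathfrak{a}$. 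Third, the appeal to Lemma \ref{euclideanint.lem} as the source of the spherical-averaging fact is a bit misplaced — that lemma is about eigenfunctions of the Euclidean Laplacian, not about spherical means of analytic functions — and the assertion that even-ness is needed is a red herring here, since you only require analyticity of $\Psi$ on $(0,2R_{1})$, away from the origin. The cleaner justification, which you should spell out, is exactly the uniform holomorphic-extension-plus-Morera argument: for each $s_{0}\in(0,2R_{1})$, compactness of $\{s_{0}\omega:\omega\in S\cap\mathfrak{a}^{+}\}$ gives a uniform complex neighborhood of $s_{0}$ to which the map $s\mapsto h(s\omega)\alpha(s\omega)\mu(s\omega)$ extends holomorphically with bounds independent of $\omega$, whence $\Psi$ extends holomorphically there.
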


\begin{proof}
We measurably identify $T(X)$ with $X\times\mathfrak{p}$ and then decompose
$G_{F}(R)$ as in the right-hand side of (\ref{change2}). We let $\Phi
_{1}=\tilde{F}_{1}\circ\pi_{\ast}$ and we let $\Phi_{2}=F_{2}\circ\pi_{\ast},$
as before. In light of Lemma \ref{holo.lemma}, we can then write our function
as%
\begin{align*}
G_{F}(R)  &  =\int_{\mathfrak{a}_{2R}^{+}}\int_{\Gamma\backslash G}\Phi
_{1}(g\cdot(x_{0},Y/2))\Phi_{2}(g\cdot(x_{0},-Y/2))~dg~\alpha(Y)\mu(Y)~dY\\
&  =\int_{\mathfrak{a}_{1}^{+}}\int_{\Gamma\backslash G}\Phi_{1}(g\cdot
(x_{0},RZ))\Phi_{2}(g\cdot(x_{0},-RZ))~dg~(2R)^{d}~\alpha(2RZ)\mu(2RZ)~dZ.
\end{align*}
Let $\tau_{Z}$ be the map from the strip $S_{R_{1}}\subset\mathbb{C}$ into
$T^{R_{1}}(G/K)$ given by $\tau_{Z}(u+iv)=e^{uZ}\cdot(x_{0},vZ).$ As in the
proof of Theorem \ref{holo_change.thm}, this map is holomorphic. (There it was
assumed that $Z$ was a unit vector, but by scaling the statement is true for
any $Z$ with $\left\vert Z\right\vert <1.$) Note that $(x_{0},RZ)=\tau
_{Z}(iR)$ and $(x_{0},-RZ)=\tau_{Z}(-iR).$ Then given any $R\in(0,R_{1}),$ we
claim that $G_{F}$ has a holomorphic extension to a small neighborhood of $R$
in $\mathbb{C},$ given by%
\begin{equation}
G_{F}(S)=\int_{\mathfrak{a}_{1}^{+}}\int_{\Gamma\backslash G}\Phi_{1}%
(g\cdot\tau_{Z}(iS))\Phi_{2}(g\cdot\tau_{Z}(-iS))~dg~(2S)^{d}~\alpha
(2SZ)\mu(2SZ)~dZ. \label{holo.ext}%
\end{equation}
This follows from Point 2 of Lemma \ref{holo.lemma} and the analyticity of
$\alpha$ (assumed) and $\mu$ (it is a polynomial).

To be a bit more precise, $\alpha$ is assume to be real-analytic on
$\mathfrak{p}^{2R_{1}}.$ Thus, for each $R<R_{1},$ $\alpha$ has a holomorphic
extension to a neighborhood $U$ in $\mathfrak{p}_{\mathbb{C}}$ of the closed
ball of radius $2R$ in $\mathfrak{p}.$ Then for all $S$ in a neighborhood of
$R,$ $2SZ$ will belong to $U$ for all $Z\in\mathfrak{p}$ with $\left\vert
Z\right\vert \leq1.$ For $S$ in a slightly smaller neighborhood of $R,$
$\alpha(2SZ)$ will be bounded uniformly in $Z$ with $\left\vert Z\right\vert
\leq1.$ The integrand in (\ref{holo.ext}) is thus holomorphic in $S$ and
bounded uniformly in $g$ and $Z,$ from which it follows that the integral is
holomorphic in $S.$

The existence of this holomorphic extension establishes the real-analyticity
of $G_{F}.$
\end{proof}

\subsection{The global isometry formula\label{global_isom.sec}}

\begin{theorem}
[Global Isometry Formula]\label{global_isom.thm}Given $f\in L^{2}(X)$ and let
$F=e^{t\Delta/2}f.$ Let $R_{0}$ be as in Proposition \ref{continue.prop}. Then
the quantity
\[
G_{F}(R):=e^{t\left\vert \rho\right\vert ^{2}}\int_{X}\int_{T_{x}^{R}%
(X)}\left\vert F(\exp_{x}(iY))\right\vert ^{2}j_{x}^{\mathrm{c}}%
(2Y)^{1/2}\frac{e^{-\left\vert Y\right\vert ^{2}/t}}{(\pi t)^{d/2}}~dY~dx,
\]
initially defined for $R\in(0,R_{0}),$ has a real-analytic extension to
$R\in(0,\infty).$ Furthermore, this real-analytic extension, also denoted
$G_{F},$ satisfies%
\[
\lim_{R\rightarrow\infty}G_{R}(R)=\left\Vert f\right\Vert _{L^{2}(X)}^{2}.
\]
Thus, we may write, informally,%
\begin{align*}
&  \left\Vert f\right\Vert _{L^{2}(X)}^{2}\\
&  =\text{\textquotedblleft}\lim_{R\rightarrow\infty}{}%
\text{\textquotedblright}e^{t\left\vert \rho\right\vert ^{2}}\int_{X}%
\int_{T_{x}^{R}(X)}\left\vert F(\exp_{x}(iY))\right\vert ^{2}j_{x}%
^{\mathrm{c}}(2Y)^{1/2}\frac{e^{-\left\vert Y\right\vert ^{2}/t}}{(\pi
t)^{d/2}}~dY~dx.
\end{align*}

\end{theorem}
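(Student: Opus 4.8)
The plan is to deduce everything from the Partial Isometry Formula (Theorem~\ref{partial_isom.thm}) together with the same eigenfunction-expansion argument used to prove the $L^{2}$ inversion formula (Theorem~\ref{global_l2.thm}). Taking $f_{1}=f_{2}=f$ in Theorem~\ref{partial_isom.thm}, the left-hand side of (\ref{partial_isom.eq}) is exactly $G_{F}(R)$, so for all $R<R_{0}$
\[
G_{F}(R)=\big\langle f,\,e^{t\Delta/2}\beta_{t,R}(-\Delta)f\big\rangle_{L^{2}(X)}.
\]
Expanding $f=\sum_{n}a_{n}\psi_{n}$ in an orthonormal basis of $-\Delta$-eigenfunctions with eigenvalues $\lambda_{n}$, and setting $\gamma_{t,R}(\lambda):=e^{-t\lambda/2}\beta_{t,R}(\lambda)$, this becomes $G_{F}(R)=\sum_{n}|a_{n}|^{2}\gamma_{t,R}(\lambda_{n})$, where by (\ref{beta.def})
\[
\gamma_{t,R}(\lambda)=e^{-t\lambda}e^{t|\rho|^{2}}\int_{\substack{Y\in\mathbb{R}^{d}\\ |Y|\le 2R}}\exp\!\big(\sqrt{\lambda-|\rho|^{2}}\,y_{1}\big)\,\frac{e^{-|Y|^{2}/4t}}{(4\pi t)^{d/2}}\,dY.
\]
It then remains to show that $R\mapsto\sum_{n}|a_{n}|^{2}\gamma_{t,R}(\lambda_{n})$ is real-analytic on $(0,\infty)$ and tends to $\|f\|_{L^{2}(X)}^{2}$ as $R\to\infty$.

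For the real-analyticity I would repeat verbatim the argument around (\ref{fubini1})--(\ref{weak}): choose $N$ with $\lambda_{n}\ge|\rho|^{2}$ for $n\ge N$, split off the finitely many terms with $\lambda_{n}<|\rho|^{2}$ (for which $\sqrt{\lambda_{n}-|\rho|^{2}}$ is imaginary and the relevant exponential is bounded), and use Fubini to move the tail of the sum inside the $Y$-integral; this is legitimate because for $\lambda_{n}\ge|\rho|^{2}$ the integrand is positive and its integral over $\{|Y|\le 2R\}$ is dominated by the full Gaussian integral, giving $\gamma_{t,R}(\lambda_{n})\le 1$ and hence $\sum_{n\ge N}|a_{n}|^{2}\gamma_{t,R}(\lambda_{n})\le\|f\|^{2}<\infty$. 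The resulting bracketed series $\sum_{n}|a_{n}|^{2}e^{-t\lambda_{n}}e^{t|\rho|^{2}}\exp(\sqrt{\lambda_{n}-|\rho|^{2}}\,y_{1})$ converges locally uniformly in $y_{1}\in\mathbb{C}$ (the factor $e^{-t\lambda_{n}}$ overwhelms the growth of the exponential), so by Morera it is entire in $y_{1}$, whence the $Y$-integral over $\{|Y|\le 2R\}$ is real-analytic in $R$ on all of $(0,\infty)$. In particular this supplies the analyticity of the right-hand side of (\ref{partial_isom.eq}) that was deferred to this proof, so Theorem~\ref{partial_isom.thm} holds for all $R<R_{0}$; combined with Lemma~\ref{gf.analytic} it follows that $G_{F}$, which agrees with $\sum_{n}|a_{n}|^{2}\gamma_{t,R}(\lambda_{n})$ for $R<R_{0}$, extends real-analytically to $(0,\infty)$.

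Two elementary properties of $\gamma_{t,R}$ drive the limit. First, for $\lambda\ge|\rho|^{2}$, symmetrizing the integrand in $y_{1}\mapsto-y_{1}$ replaces $\exp(\sqrt{\lambda-|\rho|^{2}}\,y_{1})$ by $\cosh(\sqrt{\lambda-|\rho|^{2}}\,y_{1})\ge0$, so $\gamma_{t,R}(\lambda)\ge0$ and is nondecreasing in $R$; second, since the integral over all of $\mathbb{R}^{d}$ equals $e^{t(\lambda-|\rho|^{2})}$, we get $0\le\gamma_{t,R}(\lambda)\le1$ for $\lambda\ge|\rho|^{2}$, while for every $\lambda\ge0$ one has $\lim_{R\to\infty}\gamma_{t,R}(\lambda)=e^{-t\lambda/2}\cdot e^{t\lambda/2}=1$ by (\ref{beta.lim}). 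Now write $G_{F}(R)=\sum_{\lambda_{n}<|\rho|^{2}}|a_{n}|^{2}\gamma_{t,R}(\lambda_{n})+\sum_{\lambda_{n}\ge|\rho|^{2}}|a_{n}|^{2}\gamma_{t,R}(\lambda_{n})$: the first sum is finite and converges to $\sum_{\lambda_{n}<|\rho|^{2}}|a_{n}|^{2}$, and the second converges to $\sum_{\lambda_{n}\ge|\rho|^{2}}|a_{n}|^{2}$ by Dominated (indeed Monotone) Convergence for series, using $0\le\gamma_{t,R}(\lambda_{n})\le1$. Adding the two gives $\lim_{R\to\infty}G_{F}(R)=\sum_{n}|a_{n}|^{2}=\|f\|_{L^{2}(X)}^{2}$.

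The computation is essentially a transcription of the proof of Theorem~\ref{global_l2.thm}, so I do not anticipate any analytic difficulty; the one point requiring care is the logical bookkeeping around the analytic continuation — that $G_{F}(R)$ is a priori defined only for $R$ small enough that $F(\exp_{x}iY)$ is nonsingular for $|Y|\le R$, and that it is the Partial Isometry Formula, valid on the full interval $(0,R_{0})$ once the analyticity of its right-hand side is in hand, that glues this local integral to the entire function $\sum_{n}|a_{n}|^{2}\gamma_{t,R}(\lambda_{n})$ defined for all $R>0$.
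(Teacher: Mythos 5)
Your proof is correct and is essentially the same as the paper's: both start from the Partial Isometry Formula with $f_{1}=f_{2}=f$, expand in an eigenbasis of $-\Delta$, reuse the Fubini/Morera argument of Theorem~\ref{global_l2.thm} to get real-analyticity of $R\mapsto\langle f,e^{t\Delta/2}\beta_{t,R}(-\Delta)f\rangle$ (thereby discharging the deferred analyticity claim in the proof of Theorem~\ref{partial_isom.thm}), and pass to the limit by splitting off the finitely many modes with $\lambda_{n}<|\rho|^{2}$ and applying monotone (or dominated) convergence to the tail using $0\le e^{-t\lambda/2}\beta_{t,R}(\lambda)\le1$ and $e^{-t\lambda/2}\beta_{t,R}(\lambda)\to1$. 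The only cosmetic differences are your auxiliary notation $\gamma_{t,R}$ and your brief $\cosh$-symmetrization remark, neither of which changes the substance.
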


This result, as for the inversion formulas, is obtained from the corresponding
result in the dual compact group case (see (\ref{group.isom})) by
\textquotedblleft dualizing\textquotedblright\ and inserting an analytic
continuation with respect to $R.$

In the corresponding theorem for $G/K$ ($G$ complex) in \cite{HM4}, the proof
actually shows that $G_{F}(R)$ is positive and strictly increasing as a
function of $R.$ In the case of the compact quotient $X=\Gamma\backslash G/K$,
however, the presence of spectrum for $-\Delta$ in the interval $[0,\left\vert
\rho\right\vert ^{2})$ means that $G_{F}(R)$ is not necessarily monotone in
$R$, once $R>R_{0}.$ (See (\ref{partial_isom.eq3}) below.)

\begin{proof}
Putting $f_{1}=f_{2}=f$ in the partial isometry formula (Theorem
\ref{partial_isom.thm}), we have (with $F=e^{t\Delta/2}f$ as usual)%
\begin{equation}
G_{F}(R)=\left\langle f,e^{t\Delta/2}\beta_{t,R}(-\Delta)f\right\rangle
_{L^{2}(X)}. \label{partial_isom.eq2}%
\end{equation}
The point is now that the definition of $\beta_{t,R}$ makes sense for any
$R>0.$ Thus, the right-hand side of (\ref{partial_isom.eq2}) makes sense for
all $R>0,$ even though the left-hand side is defined only for small $R.$
Equation (\ref{beta.lim}) then suggests that the right-hand side of
(\ref{partial_isom.eq2}) should tend to $\left\langle f,f\right\rangle
_{L^{2}(X)} $ as $R$ tends to infinity.

To proceed rigorously, we choose an orthonormal basis $\{\psi_{n}\}$ for
$L^{2}(X)$ consisting of eigenvectors of $-\Delta$ with eigenvalues
$\lambda_{n}.$ Then if $f=\sum a_{n}\psi_{n},$ we have, for any $R>0,$%
\begin{align}
&  \left\langle f,e^{t\Delta/2}\beta_{t,R}(-\Delta)f\right\rangle _{L^{2}%
(X)}\nonumber\\
&  =\sum_{n=1}^{\infty}\left\vert a_{n}\right\vert ^{2}e^{-t\lambda_{n}%
t/2}\beta_{t,R}(\lambda_{n})\nonumber\\
&  =\sum_{n=1}^{\infty}\left\vert a_{n}\right\vert ^{2}e^{-t\lambda_{n}%
}e^{t\left\vert \rho\right\vert ^{2}}\int_{\left\vert Y\right\vert \leq R}%
\exp\left(  \sqrt{\lambda_{n}-\left\vert \rho\right\vert ^{2}}~y_{1}\right)
\frac{e^{-\left\vert Y\right\vert ^{2}/4t}}{(4\pi t)^{d/2}}~dY.
\label{partial_isom.eq3}%
\end{align}
The same argument as in the proof of Theorem \ref{global_l2.thm} shows that
Fubini's Theorem applies, so that we obtain%
\begin{align}
&  \left\langle f,e^{t\Delta/2}\beta_{t,R}(-\Delta)f\right\rangle _{L^{2}%
(X)}\nonumber\\
&  =e^{ct}\int_{\left\vert Y\right\vert \leq R}\left[  \sum_{n=1}^{\infty
}\left\vert a_{n}\right\vert ^{2}e^{-t\lambda_{n}}\exp\left(  \sqrt
{\lambda_{n}-\left\vert \rho\right\vert ^{2}}~y_{1}\right)  \right]
\frac{e^{-\left\vert Y\right\vert ^{2}/4t}}{(4\pi t)^{d/2}}~dY.
\label{isom.fubini}%
\end{align}
Arguing, again, as in the proof of Theorem \ref{global_l2.thm}, we can see
that the expression in square brackets has an entire holomorphic extension to
$y_{1}\in\mathbb{C}$ (given by the same expression) and that the whole
right-hand side of (\ref{isom.fubini}) is real-analytic as a function of $R,$
for all $R\in(0,\infty).$

We have established, then, that the right-hand side of (\ref{partial_isom.eq2}%
) is a real-analytic function of $R$ for $R\in(0,\infty). $ This, along with
Lemma \ref{gf.analytic}, shows that (\ref{partial_isom.eq2}) holds for all
$R<R_{0}.$ (This was initially established, using the holomorphic change of
variable, only for $R<R_{0}/2.$) Thus, the right-hand side of
(\ref{partial_isom.eq2}) is the desired real-analytic extension of $G_{F}.$ To
evaluate the limit as $R$ tends to infinity of this expression, we use our
orthonormal basis $\{\psi_{n}\}$ and we find $N$ so that $\lambda_{n}%
\geq\left\vert \rho\right\vert ^{2}$ for $n>N.$ Then%
\begin{align}
&  \left\langle f,e^{t\Delta/2}\beta_{t,R}(-\Delta)f\right\rangle _{L^{2}%
(X)}\nonumber\\
&  =\sum_{n=1}^{N}\left\vert a_{n}\right\vert ^{2}e^{-t\lambda_{n}/2}%
\beta_{t,R}(\lambda_{n})+\sum_{n=N+1}^{\infty}\left\vert a_{n}\right\vert
^{2}e^{-t\lambda_{n}/2}\beta_{t,R}(\lambda_{n}). \label{isom.lim}%
\end{align}
Note that (by \ref{beta.lim}) we have%
\[
\lim_{R\rightarrow\infty}e^{-t\lambda/2}\beta_{t,R}(\lambda)=1
\]
for all $\lambda\geq0.$ For $n>N,$ $\beta_{t,R}(\lambda_{n})$ is positive and
increasing with $R$; thus, by Monotone Convergence, we can put the limit as
$R$ tends to infinity inside the infinite sum in (\ref{isom.lim}). This shows
that%
\[
\lim_{R\rightarrow\infty}\left\langle f,e^{t\Delta/2}\beta_{t,R}%
(-\Delta)f\right\rangle _{L^{2}(X)}=\sum_{n=1}^{\infty}\left\vert
a_{n}\right\vert ^{2}=\left\Vert f\right\Vert _{L^{2}(X)}^{2}.
\]
This, in light of (\ref{partial_isom.eq2}), is what we want to prove.
\end{proof}

\subsection{The surjectivity theorem}

We now show, roughly, that if $F$ is any holomorphic function for which the
right-hand side of the global isometry formula makes sense and is finite, then
$F$ is the analytic continuation of a function of the form $e^{t\Delta/2}f,$
with $f\in L^{2}(X).$

\begin{theorem}
\label{surj.thm}Suppose $F$ is a holomorphic function on $T^{R_{1}}(X),$ for
some $R_{1}>0$ such that the adapted complex structure exists on $T^{R_{1}%
}(X).$ Let $G_{F}$ be the function defined by%
\[
G_{F}(R)=e^{t\left\vert \rho\right\vert ^{2}}\int_{X}\int_{T_{x}^{R}%
(X)}\left\vert F(\exp_{x}(iY))\right\vert ^{2}j_{x}^{\mathrm{c}}%
(2Y)^{1/2}\frac{e^{-\left\vert Y\right\vert ^{2}/t}}{(\pi t)^{d/2}}~dY~dx,
\]
for $R<R_{1}.$ Suppose that $G_{F}$ has a real-analytic extension (also
denoted $G_{F}$) from $(0,R_{1})$ to $(0,\infty)$ and that%
\[
\lim_{R\rightarrow\infty}G_{F}(R)
\]
exists and is finite. Then there exists a unique $f\in L^{2}(X)$ for which
$\left.  F\right\vert _{X}=e^{t\Delta/2}f.$
\end{theorem}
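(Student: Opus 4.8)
The plan is to recover $f$ spectrally from the hypothesis that $\lim_{R\to\infty} G_F(R)$ exists and is finite, using the partial isometry formula (Theorem~\ref{partial_isom.thm}) as the bridge between the geometric quantity $G_F(R)$ and the spectral data of $F$. Concretely, since $F$ is holomorphic on $T^{R_1}(X)$, for $R < R_1$ the integral defining $G_F(R)$ is an honest finite integral, and I want to argue that $F\vert_X$, being real-analytic on the compact manifold $X$, lies in $L^2(X)$ and hence expands as $F\vert_X = \sum_n c_n \psi_n$ in an orthonormal eigenbasis $\{\psi_n\}$ of $-\Delta$ with eigenvalues $\lambda_n$. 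The candidate function is then $f := \sum_n c_n e^{t\lambda_n/2}\psi_n$, at least formally; the whole point is to show this series converges in $L^2(X)$, i.e.\ that $\sum_n |c_n|^2 e^{t\lambda_n} < \infty$, which is exactly what finiteness of $\lim_R G_F(R)$ should buy us.

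First I would run the partial isometry formula backwards. Applying Theorem~\ref{partial_isom.thm} with $f_1 = f_2$ replaced by the (not yet known to exist) preimage is circular, so instead I would apply the holomorphic change of variable (Theorem~\ref{holo_change.thm}) and Theorem~\ref{partial_inv2.thm} directly to the given $F$: for $R < R_1/2$ (shrinking $R_1$ if necessary so the adapted complex structure and the Nelson/Guillemin--Stenzel analytic-continuation radius all coexist), the same computation that proved Theorem~\ref{partial_isom.thm} gives
\[
G_F(R) = \sum_{n=1}^\infty |c_n|^2 e^{-t\lambda_n}e^{t|\rho|^2}\int_{|Y|\le R}\exp\!\left(\sqrt{\lambda_n - |\rho|^2}\,y_1\right)\frac{e^{-|Y|^2/4t}}{(4\pi t)^{d/2}}\,dY,
\]
where $c_n$ are the Fourier coefficients of $F\vert_X$ (here one uses that radializing $F$ about a lift $\tilde x$ and applying the intertwining formula is valid for the analytic continuation of any real-analytic eigenfunction component, exactly as in the proof of Theorem~\ref{partial.thm}, followed by the double-orbital-integral argument of Lemma~\ref{orbit.lemma} and Lemma~\ref{holo.lemma}). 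Since the right-hand side is manifestly real-analytic in $R$ on all of $(0,\infty)$ (the bracketed $y_1$-integrand extends to an entire function, as in the proof of Theorem~\ref{global_isom.thm}), it coincides with the assumed real-analytic extension of $G_F$.

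Next, letting $R \to \infty$ in this series: each term is nonnegative once $\lambda_n \ge |\rho|^2$, and the inner Gaussian integral increases monotonically in $R$ to the value $e^{t\lambda_n}$ (the elementary Gaussian evaluation behind (\ref{beta.lim})). By Monotone Convergence applied to the tail $n > N$ (with $N$ chosen so $\lambda_n \ge |\rho|^2$ for $n>N$), together with the finiteness of the $N$-term head,
\[
\lim_{R\to\infty} G_F(R) = \sum_{n=1}^\infty |c_n|^2\, e^{-t\lambda_n} e^{t\lambda_n} = \sum_{n=1}^\infty |c_n|^2 e^{t\lambda_n}.
\]
(One must be slightly careful that the finitely many terms with $\lambda_n < |\rho|^2$ have $\sqrt{\lambda_n - |\rho|^2}$ imaginary, but the domain $|Y|\le R$ is symmetric under $Y\mapsto -Y$, so those terms are real and bounded, and their limit is still $|c_n|^2 e^{t\lambda_n}$ by dominated convergence on a finite sum.) The hypothesis that this limit is finite therefore forces $\sum_n |c_n|^2 e^{t\lambda_n} < \infty$, so $f := \sum_n c_n e^{t\lambda_n/2}\psi_n$ converges in $L^2(X)$. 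Then $e^{t\Delta/2}f = \sum_n c_n \psi_n = F\vert_X$ by construction, and uniqueness of $f$ is immediate since $e^{t\Delta/2}$ is injective (it has trivial kernel: $e^{-t\lambda_n/2} > 0$ for every $n$).

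The main obstacle I anticipate is purely bookkeeping rather than conceptual: justifying that the spectral-series identity for $G_F(R)$ holds when $F$ is an \emph{arbitrary} holomorphic function on $T^{R_1}(X)$ — not a priori of the form $e^{t\Delta/2}f$ — so that the radialization-plus-intertwining computation of Theorem~\ref{partial.thm} still applies term by term. The resolution is that $F\vert_X$ real-analytic on compact $X$ already has an $L^2$ eigenfunction expansion $\sum c_n\psi_n$, each $\psi_n$ is a Laplace eigenfunction whose analytic continuation to $T^{R_1}(X)$ is governed by the intertwining formula regardless of whether $F$ came from the heat semigroup, and the double-orbital-integral machinery of Lemmas~\ref{orbit.lemma}--\ref{holo.lemma} only uses holomorphicity of $F$ on the tube, not its membership in the range of $e^{t\Delta/2}$. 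Once that is granted, everything else is the monotone-convergence argument above.
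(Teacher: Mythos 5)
Your overall strategy — express $G_F(R)$ as a spectral series in the eigenvalues $\lambda_n$, analytically continue, take $R\to\infty$ by monotone and dominated convergence, then reconstruct $f$ — matches the paper's. But there is a genuine gap at exactly the step you flag as ``bookkeeping.''

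You want the identity
\[
G_F(R)=\sum_{n}|c_n|^2\,e^{t\lambda_n/2}\beta_{t,R}(\lambda_n),
\qquad c_n=\langle\psi_n,F|_X\rangle_{L^2(X)},
\]
to follow from applying the holomorphic change of variable and then the ``radialization-plus-intertwining computation of Theorem~\ref{partial.thm}\,\dots\,term by term.'' This does not go through as stated. After the holomorphic change of variable you have $G_F(R)=\int_X\overline{F(x)}\,B(F)(x)\,dx$, where $B(F)(x)$ is the tube integral $e^{t|\rho|^2}\int_{T_x^{2R}(X)}F(\exp_x iY)\,j_x^{\mathrm{c}}(Y)^{1/2}e^{-|Y|^2/4t}(4\pi t)^{-d/2}\,dY$. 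The intertwining computation evaluates $B$ on each individual $\tilde\psi_n$; but to pass from that to $B(F)$ you must interchange $B$ with the $L^2(X)$-expansion $F|_X=\sum c_n\psi_n$. The expansion converges only in $L^2(X)$, which gives no control whatsoever on the analytic continuations on the tube $T^{2R}(X)$ that enter $B$ — analytic continuation is wildly discontinuous in the $L^2(X)$ topology. So ``term by term'' is precisely what is not yet justified. The paper closes this gap by first proving (via a separate lemma) that the analytically continued $\psi_n$'s form an orthogonal basis of the Hilbert space $\mathcal{H}L^2(T^R(X))_t$; then $F\in\mathcal{H}L^2(T^R(X))_t$ automatically (because $G_F(R)<\infty$), the expansion $F=\sum a_n\psi_n$ converges in $\mathcal{H}L^2$, and the spectral formula for $G_F(R)$ is just Parseval. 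The proof that the $\psi_n$'s span — which uses the holomorphic change of variable, the strict positivity of $\beta_{t,R}(\lambda_n)$ for $R<R_2$, and the fact that $X$ is a maximal totally real submanifold of $T^R(X)$ — is the content you would need to supply.

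There is a variant of your argument that can be made rigorous without explicitly invoking the $\mathcal{H}L^2$ basis lemma: use the holomorphic change of variable \emph{twice}, once with $(F_1,F_2)=(\psi_n,F)$ and once with $(F_1,F_2)=(F,\psi_n)$, to establish the adjointness $\langle\psi_n,B(F)\rangle_{L^2(X)}=\overline{\langle F,B(\psi_n)\rangle}_{L^2(X)}=e^{t\lambda_n/2}\beta_{t,R}(\lambda_n)c_n$, then apply Parseval in $L^2(X)$ to $\langle F|_X,B(F)\rangle$. That gives the spectral series without ever passing limits through the tube integral. This is essentially the paper's argument repackaged, but it is a legitimate alternative ordering; your write-up does not spell it out, and without either that or the basis lemma, the crucial identity is unjustified. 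Everything after the spectral series (the Morera-type analyticity, the monotone convergence, and the reconstruction of $f$) is fine and parallels the paper.

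Two smaller points. First, your parenthetical ``shrinking $R_1$ if necessary'' glosses over the need for the radius $R_2$ where $j^{\mathrm{c}}(2Y)>0$ (hence $\beta_{t,R}>0$), which the positivity step requires; the paper tracks $\min(R_1,R_2)$ carefully. Second, the claim that the series for $G_F(R)$ is ``manifestly real-analytic in $R$ on all of $(0,\infty)$'' also deserves more care: the series has nonnegative terms for $n$ large but need not a priori converge for all $R$; the paper invokes the argument of Section~7 of~\cite{HM4} to conclude that the assumed analytic extension must be given by the same series wherever it converges.
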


Choose $R_{0}$ so as in Proposition \ref{continue.prop} and then choose
$R_{2}\leq R_{0}$ so that the function $j^{\mathrm{c}}(2Y)$ is positive on
$T^{R_{2}}(X).$ For any $R<R_{2},$ let $\mathcal{H}L^{2}(T^{R}(X))_{t}$ denote
the space of holomorphic functions $F$ on $T^{R}(X)$ for which%
\[
e^{t\left\vert \rho\right\vert ^{2}}\int_{X}\int_{T_{x}^{R}(X)}\left\vert
F(\exp_{x}(iY))\right\vert ^{2}j_{x}^{\mathrm{c}}(2Y)^{1/2}\frac
{e^{-\left\vert Y\right\vert ^{2}/t}}{(\pi t)^{d/2}}~dY~dx<\infty,
\]
with the obvious associated inner product. A standard argument shows that
$\mathcal{H}L^{2}(T^{R}(X))_{t}$ is a closed subspace of the associated $L^{2}
$ space, and hence a Hilbert space. Note that if $\psi\in L^{2}(X)$ is an
eigenvector for $-\Delta$ with eigenvalue $\lambda,$ then $\psi=e^{t\Delta
/2}(e^{\lambda t/2}\psi),$ so that by Proposition \ref{continue.prop}, $\psi$
has an analytic continuation to $T^{R_{0}}(X).$ This analytic continuation is
bounded on each $T^{R}(X)$ for $R<R_{0}$ and hence belongs to $\mathcal{H}%
L^{2}(T^{R}(X))_{t}.$

\begin{lemma}
Let $\{\psi_{n}\}$ be an orthonormal basis for $L^{2}(X)$ consisting of
eigenvectors for $-\Delta.$ Let $\psi_{n}$ also denote the analytic
continuation of $\psi_{n}$ to $T^{R}(X).$ Then the $\psi_{n}$'s form an
orthogonal basis for $\mathcal{H}L^{2}(T^{R}(X))_{t}$, for all $R<R_{2}$ and
$t>0.$
\end{lemma}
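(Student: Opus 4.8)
The claim has two parts: that each $\psi_n$ (extended holomorphically) lies in $\mathcal{H}L^2(T^R(X))_t$, that distinct $\psi_n$ are orthogonal there, and that together they are \emph{complete}. The first part is already noted in the discussion preceding the lemma, so I would begin with orthogonality. The plan is to show that the inner product of $\psi_m$ and $\psi_n$ in $\mathcal{H}L^2(T^R(X))_t$ is, up to a nonzero scalar depending only on $\lambda_m,\lambda_n$ and $R$, equal to $\langle e^{t\Delta/2}(e^{\lambda_m t/2}\psi_m),\, e^{t\Delta/2}(e^{\lambda_n t/2}\psi_n)\rangle$ processed through the Partial Isometry Formula (Theorem \ref{partial_isom.thm}). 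Concretely: since $\psi_m = e^{t\Delta/2}f_m$ with $f_m = e^{\lambda_m t/2}\psi_m$ and likewise for $n$, apply Theorem \ref{partial_isom.thm} with $f_1 = f_m$, $f_2 = f_n$; for $R < R_2 \le R_0$ the left-hand side of (\ref{partial_isom.eq}) is exactly $\langle \psi_m,\psi_n\rangle_{\mathcal{H}L^2(T^R(X))_t}$ (using $\overline{F_1}F_2$ in place of $|F|^2$), while the right-hand side equals $e^{\lambda_m t/2}e^{\lambda_n t/2}\langle f_m\text{-part}, e^{t\Delta/2}\beta_{t,R}(-\Delta)f_n\rangle$. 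Because $\beta_{t,R}(-\Delta)$ and $e^{t\Delta/2}$ are functions of the Laplacian, this reduces to a multiple of $\langle \psi_m,\psi_n\rangle_{L^2(X)}$, which is $0$ for $m\ne n$. So orthogonality in $\mathcal{H}L^2$ follows immediately from orthogonality in $L^2(X)$ via Theorem \ref{partial_isom.thm}.

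For completeness, the strategy is to show that any $F \in \mathcal{H}L^2(T^R(X))_t$ orthogonal to every $\psi_n$ is zero. I would argue as follows. Each $F$ in this space is holomorphic on $T^R(X)$, hence in particular its restriction to the base $X$ is a real-analytic function on $X$; moreover a standard Bergman-type estimate (the pointwise evaluation being a bounded functional, as in Proposition \ref{continue.prop}) shows $F|_X$ is square-integrable on $X$, so write $F|_X = \sum_n c_n \psi_n$ in $L^2(X)$. Now relate $c_n$ to the $\mathcal{H}L^2$-inner product $\langle \psi_n, F\rangle_{\mathcal{H}L^2(T^R(X))_t}$: using Theorem \ref{partial_isom.thm} (or its bilinear form, applied with $f_1 = e^{\lambda_n t/2}\psi_n$ and with $f_2$ chosen so that $e^{t\Delta/2}f_2 = F$ — which requires knowing $F$ is in the range of $e^{t\Delta/2}$, a point I return to) one gets $\langle \psi_n, F\rangle_{\mathcal{H}L^2} = d_n(R)\, c_n$ with $d_n(R) \ne 0$. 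Hence $\langle \psi_n, F\rangle_{\mathcal{H}L^2} = 0$ for all $n$ forces $c_n = 0$ for all $n$, so $F|_X \equiv 0$, and then $F \equiv 0$ on $T^R(X)$ by analytic continuation (holomorphic function vanishing on the totally real submanifold $X$). This shows $\{\psi_n\}^\perp = \{0\}$, i.e. the span is dense, which is completeness for an orthogonal family in a Hilbert space.

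The main obstacle is the circularity lurking in the completeness argument: to invoke Theorem \ref{partial_isom.thm} with $f_2$ such that $e^{t\Delta/2}f_2 = F$, one needs to know that an \emph{arbitrary} $F \in \mathcal{H}L^2(T^R(X))_t$ is already of the form $e^{t\Delta/2}f$ — but that is essentially the surjectivity theorem (Theorem \ref{surj.thm}), which in this excerpt is being proved \emph{using} this lemma. The clean way around this is to avoid assuming $F$ is in the range: instead expand $F|_X = \sum c_n \psi_n$ in $L^2(X)$, form the candidate $F' := \sum_n c_n \psi_n$ with the $\psi_n$ understood as their holomorphic extensions, check (using the orthogonality already proved together with $\sum |c_n|^2 \|\psi_n\|_{\mathcal{H}L^2}^2$ — which we must show converges, e.g. by the monotonicity/boundedness of $\beta_{t,R}$ on the tail $\lambda_n \ge |\rho|^2$ exactly as in the proof of Theorem \ref{global_l2.thm}) that $F' \in \mathcal{H}L^2(T^R(X))_t$, and then observe $F - F'$ is holomorphic on $T^R(X)$ with $(F-F')|_X = 0$, hence $F = F'$. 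This shows every element of $\mathcal{H}L^2(T^R(X))_t$ is the $\mathcal{H}L^2$-limit of its $\psi_n$-expansion, which is completeness, and sidesteps the surjectivity theorem entirely; the one genuinely technical point is the convergence of $\sum |c_n|^2 \|\psi_n\|_{\mathcal{H}L^2}^2$, which I expect to handle by noting $\|\psi_n\|_{\mathcal{H}L^2}^2 = e^{\lambda_n t}\beta_{t,R}(\lambda_n)$ grows at most like the Gaussian integral bound $\exp(\sqrt{\lambda_n}\,R)$ times a polynomial (from Theorem \ref{partial_inv2.thm}), which is dominated by the decay of $|c_n|^2$ coming from $F|_X \in L^2(X)$ — here one uses that $F|_X$ is not merely in $L^2$ but, being a holomorphic extension to $T^R(X)$, has coefficients $c_n$ decaying like $e^{-\lambda_n r}$ for suitable $r$, which beats the $\exp(\sqrt{\lambda_n}\,R)$ growth.
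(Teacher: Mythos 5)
Your orthogonality argument is correct and matches the paper's use of Theorem \ref{partial_isom.thm}. Your diagnosis of the potential circularity is also astute: one cannot apply Theorem \ref{partial_isom.thm} with ``$f_2$ such that $e^{t\Delta/2}f_2 = F$'' because an arbitrary $F\in\mathcal{H}L^2(T^R(X))_t$ is not yet known to be in the range of $e^{t\Delta/2}$.

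However, your proposed workaround for completeness has a genuine gap, and the paper's actual route is both different and cleaner. To show $\sum_n|c_n|^2\,\|\psi_n\|^2_{\mathcal{H}L^2}<\infty$ you claim the Fourier coefficients of $F|_X$ decay like $e^{-\lambda_n r}$ because $F$ extends holomorphically to $T^R(X)$. That decay rate is far too strong and is not established by anything in the paper; the natural rate one might hope for is of order $e^{-c\sqrt{\lambda_n}}$, which is exactly comparable to the growth $\exp(2R\sqrt{\lambda_n})$ in $\|\psi_n\|^2_{\mathcal{H}L^2}=e^{t\lambda_n/2}\beta_{t,R}(\lambda_n)$, so the comparison is delicate and the convergence of $\sum|c_n|^2\|\psi_n\|^2_{\mathcal{H}L^2}$ does not simply fall out. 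In fact, that summability statement is morally equivalent to the Parseval identity you are trying to prove, so the circle you were trying to avoid reappears in this form.

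The paper avoids both circles at once by not using Theorem \ref{partial_isom.thm} in the completeness step at all. It applies the holomorphic change of variable, Theorem \ref{holo_change.thm}, \emph{directly} with $F_1 = F$ and $F_2 = \psi_n$. This is permissible because Theorem \ref{holo_change.thm} is stated for arbitrary holomorphic $F_1,F_2$ on $T^{R_0}(X)$, with no assumption that they lie in the range of the heat operator. After the change of variable, the inner integral involves $\psi_n$ rather than $F$, and $\psi_n$ \emph{is} known to be $e^{t\Delta/2}$ of something (namely $e^{t\lambda_n/2}\psi_n$), so Theorem \ref{partial_inv2.thm} evaluates it as $e^{t\lambda_n/2}\beta_{t,R}(\lambda_n)\psi_n(x)$. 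This yields the clean identity
\[
\left\langle F,\psi_n\right\rangle_{\mathcal{H}L^2(T^R(X))_t}
= e^{t\lambda_n/2}\beta_{t,R}(\lambda_n)\left\langle F|_X,\psi_n\right\rangle_{L^2(X)},
\]
with $\beta_{t,R}(\lambda_n)>0$ for $R<R_2$. Hence $F\perp\psi_n$ in $\mathcal{H}L^2$ for all $n$ forces $F|_X\perp\psi_n$ in $L^2(X)$ for all $n$, so $F|_X\equiv 0$, and $F\equiv 0$ on $T^R(X)$ since $X$ is a maximal totally real submanifold. No decay estimate on the $c_n$ and no surjectivity statement is needed. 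I would encourage you to replace your step (2) with this direct application of the holomorphic change of variable to the pair $(F,\psi_n)$.
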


\begin{proof}
We fix one particular $R<R_{2}$ and $t>0$, and we abbreviate $\mathcal{H}%
L^{2}(T^{R}(X))_{t}$ by $\mathcal{H}L^{2}.$ The partial isometry theorem
(Theorem \ref{partial_isom.thm}) tells us that for $n\neq m,$ $\psi_{n}$ and
$\psi_{m}$ are orthogonal (but not orthonormal) as elements of $\mathcal{H}%
L^{2}.$

Suppose now that $F\in\mathcal{H}L^{2}$ and $\left\langle F,\psi
_{n}\right\rangle _{\mathcal{H}L^{2}}=0$ for all $n.$ By the holomorphic
change of variable (Theorem \ref{holo_change.thm}), with $\alpha$ as in
(\ref{alpha_form}), along with Theorem \ref{partial_inv2.thm}, we have%
\[
\left\langle F,\psi_{n}\right\rangle _{\mathcal{H}L^{2}}=\left\langle
F,\beta_{t,R}(-\Delta)\psi_{n}\right\rangle _{L^{2}(X)}=\beta_{t,R}%
(\lambda_{n})\left\langle F,\psi_{n}\right\rangle _{L^{2}(X)}.
\]
Here, in the second and third expressions, $F$ denotes the restriction of the
holomorphic function $F$ to $X.$ Now, as we have already remarked, it follows
from the partial isometry theorem (Theorem \ref{partial_isom.thm}) that
$\beta_{t,R}(\lambda_{n})$ is strictly positive for all $n,$ for all
$R<R_{2}.$ (See the discussion immediately after the statement of the
theorem.) Thus, if $F$ is orthogonal to each $\psi_{n}$ in $\mathcal{H}%
L^{2}(T^{R}(X))_{t},$ then the restriction of $F$ to $L^{2}(X)$ is orthogonal
to each $\psi_{n}.$ Since the $\psi_{n}$'s form an orthonormal basis for
$L^{2}(X),$ this tells us that the restriction of $F$ to $X$ is zero, from
which it follows that $F$ is zero on $T^{R}(X)$, because $X$ is a totally real
submanifold of maximal dimension in $T^{R}(X).$
\end{proof}

We now turn to the proof of the surjectivity theorem.

\begin{proof}
(Proof of Theorem \ref{surj.thm}) Suppose $F$ is as in Theorem \ref{surj.thm}.
The lemma tells us that for $R<\min(R_{1},R_{2})$ we can express $F$ as
\begin{equation}
F=\sum_{n=1}^{\infty}a_{n}\psi_{n}, \label{fsum}%
\end{equation}
with convergence in $\mathcal{H}L^{2}(T^{R}(X))_{t}.$ By a standard argument,
pointwise evaluation is continuous in $\mathcal{H}L^{2}(T^{R}(X))_{t}$ with
norm a locally bounded function of the point. It follows that the restriction
map from that space to $L^{2}(X)$ is a bounded operator. Thus, the same
expansion (\ref{fsum}) holds also in $L^{2}(X).$ This shows that the
coefficients in (\ref{fsum}) are independent of $R$ for a fixed holomorphic
function $F.$

We apply the partial isometry formula (Theorem \ref{partial_isom.thm}) with
$f_{1}=f_{2}=e^{t\lambda_{n}/2}\psi_{n},$ so that $F_{1}=F_{2}=\psi_{n}.$ This
tells us that the norm-squared of $\psi_{n}$ in $\mathcal{H}L^{2}%
(T^{R}(X))_{t}$ is $e^{t\lambda_{n}/2}\beta_{t,R}(\lambda_{n})$ (times the
norm-squared of $\psi_{n}$ in $L^{2}(X),$ which is 1). We conclude, then, that%
\begin{align}
G_{F}(R)  &  =\left\Vert F\right\Vert _{\mathcal{H}L^{2}(T^{R}(X))_{t}}%
^{2}=\sum_{n=1}^{\infty}\left\vert a_{n}\right\vert ^{2}e^{t\lambda_{n}%
/2}\beta_{t,R}(\lambda_{n})\nonumber\\
&  =\sum_{n=1}^{\infty}\left\vert a_{n}\right\vert ^{2}e^{t\left\vert
\rho\right\vert ^{2}}\int_{\substack{Y\in\mathbb{R}^{d} \\\left\vert
Y\right\vert \leq2R}}\exp\left(  \sqrt{\lambda_{n}-\left\vert \rho\right\vert
^{2}}~y_{1}\right)  \frac{e^{-\left\vert Y\right\vert ^{2}/4t}}{(4\pi
t)^{d/2}}~dY, \label{gf1}%
\end{align}
for all $R<\min(R_{1},R_{2}).$ We now split off the finite number of terms
where $\lambda_{n}<\left\vert \rho\right\vert ^{2}.$ Those terms have an
analytic continuation in $R$ given by the same expression. For the remaining
terms, the argument given in Section 7 of \cite{HM4} shows that if $G_{F}(R)$
is to have an analytic continuation in $R$ to $(0,\infty),$ it must be given
by (\ref{gf1}) for all $R.$

We now know that the analytic continuation of $G_{F}$ (which is assumed to
exist) is given by (\ref{gf1}) for all $R\in(0,\infty).$ To evaluate the limit
as $R\rightarrow\infty$ of $G_{F}(R),$ we use Dominated Convergence on the
finite number of terms with $\lambda_{n}<\left\vert \rho\right\vert ^{2}$ and
we use Monotone Convergence twice on the remaining terms to obtain%
\begin{align}
\lim_{R\rightarrow\infty}G_{F}(R)  &  =\sum_{n=1}^{\infty}\left\vert
a_{n}\right\vert ^{2}e^{t\left\vert \rho\right\vert ^{2}}\int_{Y\in
\mathbb{R}^{d}}\exp\left(  \sqrt{\lambda_{n}-\left\vert \rho\right\vert ^{2}%
}~y_{1}\right)  \frac{e^{-\left\vert Y\right\vert ^{2}/4t}}{(4\pi t)^{d/2}%
}~dY\nonumber\\
&  =\sum_{n=1}^{\infty}\left\vert a_{n}\right\vert ^{2}e^{t\lambda_{n}}.
\label{gf.lim}%
\end{align}

Since the limit of $G_{F}$ is assumed finite, we conclude that the right-hand
side of (\ref{gf.lim}) is finite. We may then define $f=\sum_{n=1}^{\infty
}a_{n}e^{t\lambda_{n}/2}\psi_{n}.$ The finiteness of (\ref{gf.lim}) gives
convergence of this series in $L^{2}(X)$ and we observe that $\left.
F\right\vert _{X}=\sum_{n=1}^{\infty}a_{n}\psi_{n}=e^{t\Delta/2}f.$ This
establishes Theorem \ref{surj.thm}.
\end{proof}


\begin{thebibliography}{99999}                                                                                            %


\bibitem[AG]{AG}D. N. Akhiezer and S. G. Gindikin, On Stein extensions of real
symmetric spaces, \textit{Math. Ann.} \textbf{286} (1990), 1--12.

\bibitem[Ba]{Ba1}V. Bargmann, On a Hilbert space of analytic functions and an
associated integral transform, \textit{Comm. Pure Appl. Math.} \textbf{14}
(1961), 187--214.

\bibitem[BHH]{BHH}D. Burns, S. Halverscheid, and R. Hind, The geometry of
Grauert tubes and complexification of symmetric spaces, \textit{Duke Math. J.}
\textbf{118} (2003), 465--491.

\bibitem[DOZ1]{DOZ1}M. Davidson, G. \'{O}lafsson, and G. Zhang, Laguerre
polynomials, restriction principle, and holomorphic representations of
$\mathrm{SL}(2,\mathbb{R})$, \textit{Acta Appl. Math.} \textbf{71} (2002), 261--277.

\bibitem[DOZ2]{DOZ2}M. Davidson, G. \'{O}lafsson, and G. Zhang, Laplace and
Segal-Bargmann transforms on Hermitian symmetric spaces and orthogonal
polynomials, \textit{J. Funct. Anal.} \textbf{204} (2003), 157--195.

\bibitem[DH]{DH1}B. K. Driver and B. C. Hall, Yang-Mills theory and the
Segal-Bargmann transform, \textit{Comm. Math. Phys.} \textbf{201} (1999), 249--290.

\bibitem[E]{E}L. D. \`{E}skin, Heat equation on Lie groups. (Russian) In: In
Memoriam: N. G. Chebotarev, 113--132, Izdat. Kazan. Univ., Kazan, Russia, 1964.

\bibitem[Far1]{Far1}J. Faraut, Formule de Gutzmer pour la complexification
d'un espace riemannien sym\'{e}trique. Harmonic analysis on complex
homogeneous domains and Lie groups (Rome, 2001). \textit{Atti Accad. Naz.
Lincei Cl. Sci. Fis. Mat. Natur. Rend. Lincei} \textit{(9) Mat. Appl.}
\textbf{13} (2002), no. 3-4, 233--241.

\bibitem[Far2]{Far2}J. Faraut, Analysis on the crown of a Riemannian symmetric
space. Lie groups and symmetric spaces, 99--110, Amer. Math. Soc. Transl. Ser.
2, 210, Amer. Math. Soc., Providence, RI, 2003.

\bibitem[Far3]{Far3}J. Faraut, Espaces hilbertiens invariants de fonctions
holomorphes. \textit{In} Analyse sur les groupes de Lie et th\'{e}orie des
repr\'{e}sentations (K\'{e}nitra, 1999), 101--167, S\'{e}min. Congr., 7, Soc.
Math. France, Paris, 2003

\bibitem[Fo]{Fo}G. B. Folland, \textquotedblleft Harmonic analysis in phase
space.\textquotedblright\ Annals of Mathematics Studies, 122. Princeton
University Press, Princeton, NJ, 1989.

\bibitem[FMMN1]{FMMN1}C. A. Florentino, P. Matias, J. M. Mour\~{a}o, and J. P.
Nunes, Geometric quantization, complex structures, and the coherent state
transform, \textit{J. Funct. Anal.} \textbf{221} (2005), 303--322.

\bibitem[FMMN2]{FMMN2}C. A. Florentino, P. Matias, J. M. Mour\~{a}o, and J. P.
Nunes, On the BKS pairing for K\"{a}hler quantizations of the cotangent bundle
of a Lie group, \textit{J. Funct. Anal.} \textbf{234} (2006), 180--198.

\bibitem[Ga]{Ga}R. Gangolli, Asymptotic behavior of spectra of compact
quotients of certain symmetric spaces, \textit{Acta Math.} \textbf{121}
(1968), 151--192.

\bibitem[GM]{GM}L. Gross and P. Malliavin, Hall's transform and the
Segal-Bargmann map, In: It\^{o}'s stochastic calculus and probability theory,
(M. Fukushima, N. Ikeda, H. Kunita, and S. Watanabe, Eds.), Springer-Verlag,
Berlin/New York, 1996, pp. 73-116.

\bibitem[GS1]{GStenz1}V. Guillemin and M. B. Stenzel, Grauert tubes and the
homogeneous Monge-Amp\`{e}re equation, \textit{J. Differential Geom.}
\textbf{34} (1991), 561--570.

\bibitem[GS2]{GStenz2}V. Guillemin and M. B. Stenzel, Grauert tubes and the
homogeneous Monge-Amp\`{e}re equation. II, \textit{J. Differential Geom.}
\textbf{35} (1992), 627--641.

\bibitem[H1]{H1}B. C. Hall, The Segal-Bargmann \textquotedblleft coherent
state\textquotedblright\ transform for compact Lie groups, \textit{J. Funct.
Anal.} \textbf{122} (1994), 103--151.

\bibitem[H2]{H2}B. C. Hall, The inverse Segal-Bargmann transform for compact
Lie groups, \textit{J. Funct. Anal.} \textbf{143} (1997), 98--116.

\bibitem[H3]{newform}B. C. Hall, A new form of the Segal-Bargmann transform
for Lie groups of compact type, \textit{Canad. J. Math.} \textbf{51} (1999), 816--834.

\bibitem[H4]{mexnotes}B. C. Hall, Holomorphic methods in analysis and
mathematical physics. In: First Summer School in Analysis and Mathematical
Physics (S. P\'{e}rez-Esteva and C. Villegas-Blas, Eds.), 1--59, Contemp.
Math., 260, Amer. Math. Soc., Providence, RI, 2000.

\bibitem[H5]{ymcoherent}B. C. Hall, Coherent states and the quantization of
(1+1)-dimensional Yang-Mills theory, \textit{Rev. Math. Phys.} \textbf{13}
(2001), 1281--1305.

\bibitem[H6]{bull}B. C. Hall, Harmonic analysis with respect to heat kernel
measure, \textit{Bull. Amer. Math. Soc. (N.S.)} \textbf{38} (2001), 43--78.

\bibitem[H7]{lpbounds}B. C. Hall, Bounds on the Segal-Bargmann transform of
$L^{p}$ functions. \textit{J. Fourier Anal. Appl.} \textbf{7} (2001), no. 6, 553--569.

\bibitem[H8]{geoquant}B. C. Hall, Geometric quantization and the generalized
Segal--Bargmann transform for Lie groups of compact type, \textit{Comm. Math.
Phys.} \textbf{226} (2002), 233-268.

\bibitem[H9]{ergodic}B. C. Hall, The Segal-Bargmann transform and the Gross
ergodicity theorem. In: Finite and infinite dimensional analysis in honor of
Leonard Gross (H.-H. Kuo and A. N. Sengupta, Eds.), 99--116, Contemp. Math.,
317, Amer. Math. Soc., Providence, RI, 2003.

\bibitem[H10]{range}B. C. Hall, The range of the heat operator. In: The
ubiquitous heat kernel (J. Jorgensen and L. Walling, Eds.) 203--231, Contemp.
Math., \textbf{398}, Amer. Math. Soc., Providence, RI, 2006.

\bibitem[HM1]{HM1}B. C. Hall and J. J. Mitchell, Coherent states on spheres,
\textit{J. Math. Phys.} \textbf{43} (2002), 1211--1236.

\bibitem[HM2]{HM3}B. C. Hall and J. J. Mitchell, The Segal-Bargmann transform
for noncompact symmetric spaces of the complex type, \textit{J. Funct. Anal.}
\textbf{227} (2005), 338--371.

\bibitem[HM3]{HM4}B. C. Hall and J. J. Mitchell, Isometry formula for the
Segal--Bargmann transform on a noncompact symmetric space of the complex type,
\textit{J. Funct. Anal.}, \textbf{254} (2008), 1575-1600.

\bibitem[HS]{HS}B. C. Hall and A. N. Sengupta, The Segal--Bargmann transform
for path-groups, \textit{J. Funct. Anal.} \textbf{152} (1998), 220-254.

\bibitem[He1]{He1}S. Helgason, \textquotedblleft Differential geometry, Lie
groups, and symmetric spaces.\textquotedblright\ Corrected reprint of the 1978
original. Graduate Studies in Mathematics, 34. American Mathematical Society,
Providence, RI, 2001.

\bibitem[He2]{He2}S. Helgason, \textquotedblleft Groups and geometric
analysis. Integral geometry, invariant differential operators, and spherical
functions.\textquotedblright\ Corrected reprint of the 1984 original.
Mathematical Surveys and Monographs, 83. American Mathematical Society,
Providence, RI, 2000.

\bibitem[He3]{He3}S. Helgason, \textquotedblleft Geometric analysis on
symmetric spaces.\textquotedblright\ Mathematical Surveys and Monographs, 39.
American Mathematical Society, Providence, RI, 1994.

\bibitem[Hu]{Hu}J. Huebschmann, Kirillov's character formula, the holomorphic
Peter-Weyl theorem, and the Blattner-Kostant-Sternberg pairing, \textit{J.
Geom. Phys.} \textbf{58} (2008), 833--848.

\bibitem[KOS]{KOS}B. Kr\"{o}tz, G. \'{O}lafsson, and R. Stanton, The image of
the heat kernel transform on Riemannian symmetric spaces of the non-compact
type, \textit{Int. Math. Res. Not.} 2005, 1307--1329.

\bibitem[KS1]{KS1}B. Kr\"{o}tz and R. J. Stanton, Holomorpic extensions of
representations: (I) automorphic functions, \textit{Ann. Math.} \textbf{159}
(2004), 641-724.

\bibitem[KS2]{KS2}B. Kr\"{o}tz and R. J. Stanton, Holomorphic extension of
representations: (II) geometry and harmonic analysis, \textit{Geom. Funct.
Anal.} \textbf{15} (2005), 190--245.

\bibitem[KTX1]{KTX}B. Kr\"{o}tz, S. Thangavelu, and Y. Xu, The heat kernel
transform for the Heisenberg group, \textit{J. Funct. Anal.} \textbf{225}
(2005), 301--336.

\bibitem[KTX2]{KTX2}B. Kr\"{o}tz, S. Thangavelu, and Y. Xu, Heat kernel
transform for nilmanifolds associated to the Heisenberg group, \textit{Rev.
Mat. Iberoam.} \textbf{24} (2008), 243--266.

\bibitem[Las]{Las}M. Lassalle, Series de Laurent des fonctions holomorphes
dans la complexication d'un espace symetrique compact, \textit{Ann. Scient.
\'{E}cole Norm. Sup.} 11 (1978), 167-210.

\bibitem[LGS]{LGS}\'{E}. Leichtnam, F. Golse, and M. B. Stenzel, Intrinsic
microlocal analysis and inversion formulae for the heat equation on compact
real-analytic Riemannian manifolds, \textit{Ann. Sci. \'{E}cole Norm. Sup.}
(4) \textbf{29} (1996), 669--736.

\bibitem[LS]{LS}L. Lempert and R. Sz\H{o}ke, Global solutions of the
homogeneous complex Monge-Amp\`{e}re equation and complex structures on the
tangent bundle of Riemannian manifolds, \textit{Math. Ann.} \textbf{290}
(1991), 689--712.

\bibitem[N]{N}E. Nelson, Analytic vectors, \textit{Ann. of Math.} (2)
\textbf{70} (1959) 572--615.

\bibitem[OO]{OO}G. \'{O}lafsson and B. \O rsted, Generalizations of the
Bargmann transform. In: Lie theory and its applications in physics (Clausthal,
1995), 3--14, World Sci. Publishing, River Edge, NJ, 1996.

\bibitem[OS1]{OS1}G. \'{O}lafsson and H. Schlichtkrull, The Segal--Bargmann
transform for the heat equation associated with root systems, \textit{Adv.
Math.} \textbf{208} (2007), 422-437.

\bibitem[OS2]{OS2}G. \'{O}lafsson and H. Schlichtkrull, Representation theory,
Radon transform and the heat equation on a Riemannian symmetric space. In:
Group representations, ergodic theory, and mathematical physics: a tribute to
George W. Mackey (R. S. Doran, C. C. Moore and R. J. Zimmer, Eds.), 315-344,
Amer. Math. Soc., RI, 2008.

\bibitem[Se1]{Se1}I. E. Segal, Mathematical problems of relativistic physics,
Chap.VI, In: Proceedings of the Summer Seminar, Boulder, Colorado, 1960, Vol.
II. (M. Kac, Ed.), Lectures in Applied Mathematics, American Math. Soc.,
Providence, Rhode Island, 1963.

\bibitem[Se2]{Se2}I. E. Segal, Mathematical characterization of the physical
vacuum for a linear Bose-Einstein field, \textit{Illinois J. Math.} \textbf{6}
(1962), 500-523

\bibitem[Se3]{Se3}I. E. Segal, The complex-wave representation of the free
Boson field. In: \textquotedblleft Topics in Functional
Analysis\textquotedblright\ (I. Gohberg and M. Kac, Eds.), Advances in
Mathematics Supplementary Studies, Vol. 3, Academic Press, New York, 1978.

\bibitem[SZ]{SZ}C. Sogge and S. Zelditch, Riemannian manifolds with maximal
eigenfunction growth, \textit{Duke Math. J.} \textbf{114} (2002), 387-437.

\bibitem[St1]{St1}M. B. Stenzel, The Segal-Bargmann transform on a symmetric
space of compact type, \textit{J. Funct. Anal.} \textbf{165} (1999), 44--58.

\bibitem[St2]{St2}M. B. Stenzel, An inversion formula for the Segal-Bargmann
transform on a symmetric space of non-compact type, \textit{J. Funct. Anal.}
\textbf{240} (2006), 592--608.

\bibitem[Sz1]{Sz1}R. Sz\H{o}ke, Complex structures on tangent bundles of
Riemannian manifolds, \textit{Math. Ann.} \textbf{291} (1991), 409--428.

\bibitem[Sz2]{Sz2}R. Sz\H{o}ke, Adapted complex structures and geometric
quantization, \textit{Nagoya Math. J.} \textbf{154} (1999), 171--183.

\bibitem[Sz3]{Sz3}R. Sz\H{o}ke, Involutive structures on the tangent bundle of
symmetric spaces. \textit{Math. Ann.} \textbf{319} (2001), 319--348.

\bibitem[Ty]{Ty}A. Tyurin, \textquotedblleft Quantization, classical and
quantum field theory and theta functions.\textquotedblright\ With a foreword
by Alexei Kokotov. CRM Monograph Series, 21. American Mathematical Society,
Providence, RI, 2003.

\bibitem[U]{U}H. Urakawa, The heat equation on compact Lie group,
\textit{Osaka J. Math.} \textbf{12} (1975), 285--297.

\bibitem[Wr]{Wr}K. K. Wren, Constrained quantisation and $\theta$-angles. II,
\textit{Nuclear Phys. B.} \textbf{521} (1998), 471--502.
\end{thebibliography}
\end{document}